\documentclass[journal]{IEEEtran}

\usepackage{graphicx}
\usepackage{cite}
\usepackage{amsmath}
\usepackage{amsthm}
\usepackage{amssymb}
\usepackage{physics}
\usepackage{mathtools}
\usepackage{mleftright}

\newtheorem{theorem}{Theorem}
\newtheorem{corollary}{Corollary}[theorem]
\newtheorem{proposition}[theorem]{Proposition}

\newtheorem*{theorem*}{Theorem}
\newtheorem*{corollary*}{Corollary}
\newtheorem*{proposition*}{Proposition}

\theoremstyle{definition}

\newtheorem{definition}{Definition}
\newtheorem*{remark*}{Remark}

\DeclareMathOperator{\modulo}{mod}
\DeclareMathOperator*{\argmin}{argmin}

\DeclareMathOperator{\Expect}{E}
\DeclarePairedDelimiter{\ceil}{\lceil}{\rceil}
\DeclarePairedDelimiter{\floor}{\lfloor}{\rfloor}
\def\mmiddle#1{\mathrel{}\middle#1\mathrel{}}

\allowdisplaybreaks

\begin{document}
\title{Space--Time Codes from Sum-Rank Codes}
\author{Mohannad~Shehadeh,~\IEEEmembership{Graduate Student Member,~IEEE}
        and~Frank~R.~Kschischang,~\IEEEmembership{Fellow,~IEEE}
\thanks{The authors are with the Edward S. Rogers Sr. Department of Electrical and Computer Engineering, University of Toronto, Toronto, ON M5S 3G4, Canada (emails: \{mshehadeh, frank\}@ece.utoronto.ca).
}
\thanks{Parts of this work were presented at the 2020 IEEE International Symposium on Information Theory (ISIT) \cite{ME}.}
}
\maketitle

\begin{abstract}
Just as rank-metric or Gabidulin codes may be used to construct
rate--diversity tradeoff optimal space--time codes, a recently introduced
generalization for the sum-rank metric---linearized Reed--Solomon
codes---accomplishes the same in the case of multiple fading blocks. In
this paper, we provide the first explicit construction of minimal delay
rate--diversity optimal multiblock space--time codes as an application of
linearized Reed--Solomon codes.  We also provide sequential decoders for
these codes and, more generally, space--time codes constructed from finite
field codes.  Simulation results show that the proposed codes can
outperform full diversity codes based on cyclic division algebras at low
SNRs as well as utilize significantly smaller constellations.
\end{abstract}

\begin{IEEEkeywords}
Rank-metric codes, space--time codes, sum-rank codes, wireless communication.
\end{IEEEkeywords}

\IEEEpeerreviewmaketitle

\section{Introduction}

\IEEEPARstart{T}{his} paper builds upon a line of work which considers the
design of space--time codes that optimally trade off diversity for rate at
a fixed constellation size.  Our primary contributions are as follows:

1) By replacing Gabidulin codes \cite{Gabidulin-Codes} in known
rate--diversity optimal space--time code constructions
\cite{ST-Gaussian-Integers, Gabidulin-Space-Time,Sven,Arbitrary,Lu-Kumar}
with \textit{linearized Reed--Solomon codes} \cite{LRS-Codes}, we obtain
the first explicit construction of minimal delay rate--diversity optimal
\textit{multiblock} space--time codes. This provides the first general
solution to a problem first posed in \cite{Rate-Diversity-Tradeoff-General}
and \cite{Lu-Kumar}.

2) We provide sequential maximum likelihood (ML) decoders for these codes.
More generally, we show that many sequential decoding strategies for
space--time codes
\cite{Sphere-Decoder,Sphere-Decoder-2,ML-CLPS,Tree-Search-Decoding,Spherical-Bound-Stack-Decoder,SDP-Inspired-Lower-Bounds,
Eigenbound-Equivalent} that are typically thought to be only applicable to
codes with a linear dispersion form \cite{Linear-Dispersion} can in some
cases be effectively adapted for use with the proposed codes and similarly
constructed codes \cite{ST-Gaussian-Integers,
Gabidulin-Space-Time,Sven,Arbitrary}.

3) Facilitated by these ML decoders, we provide an empirical study of the
performance of the proposed codes in simulation as well as related codes
\cite{ST-Gaussian-Integers,Gabidulin-Space-Time,Sven,Arbitrary} which had
not been previously decoded for large codebook sizes.  This demonstrates
that these codes can outperform full diversity codes
\cite{Hsiao-Code,Sheng,Perfect-Codes,Improved4} based on cyclic division
algebras (CDAs) \cite{Division-Algebras} at low SNRs and using smaller
constellations.

We emphasize that the latter two contributions cover new ground in the
single-block setting as well.  Apart from the primary contributions, we
consolidate some results and observations occurring in some of the previous
literature on the rate--diversity optimal space--time coding problem
\cite{ST-Gaussian-Integers,Gabidulin-Space-Time,Sven,
Arbitrary,Lu-Kumar,Rate-Diversity-Tradeoff-General} and attempt to situate
these lines work within the broader literature on space--time coding.  We
particularly note that this paper contains the first error performance
comparison of codes designed for rate--diversity tradeoff optimality with
codes designed for diversity--multiplexing tradeoff optimality. 

The remainder of this paper is organized as follows: Section \ref{Ch2-Sec}
establishes the setting, introduces the rate--diversity optimal multiblock
space--time coding problem, and briefly surveys prior work on the problem.
Section \ref{Other-Perspectives-Sec} discusses the relevance of the
rate--diversity perspective and provides an alternative interpretation of
the rate--diversity tradeoff to aid in comparing with codes designed from
other perspectives.  Section \ref{Code-Construction-Sec} provides the
proposed code construction after introducing the required technical
ingredients which are \textit{rank-metric-preserving maps}
\cite{ST-Gaussian-Integers,Gabidulin-Space-Time,
Gauss-Int-Map-Is-RMP,Arbitrary,Sven,Lu-Kumar} and \textit{linearized
Reed--Solomon codes} \cite{LRS-Codes}.  The error performance of the
proposed codes is studied in simulation in Section \ref{Simulations-Sec}
with the subject of their decoding deferred to Section \ref{Decoding-Sec}.
Section \ref{Conclusion-Sec} concludes the paper with some suggestions for
future work.

\section{Setting, Problem Statement, and Basic Results}\label{Ch2-Sec}

\subsection{Channel Model}

Adopting the setting and conventions of \cite{Lu-Kumar},
we consider a multiple-input multiple-output 
(MIMO) Rayleigh block-fading channel with 
$n_t$ transmit antennas, $n_r$ receive antennas, 
and $L$ fading blocks per codeword each static
for duration $T$. 
An \textit{$L$-block $n_t \times T$ space--time code}
$\mathcal{X}$ is a finite subset of
$\mathbb{C}^{n_t \times LT}$
of cardinality greater than or equal to two.
A \textit{codeword} $X \in \mathcal{X}$
is a complex 
$n_t \times LT$ matrix 
${X = \begin{bmatrix} X_1 & X_2 & \cdots & X_L \end{bmatrix}}$
which partitions into $L$ sub-matrices $X_1,X_2,\dots,X_L$
of dimensions ${n_t\times T}$ referred
to as \textit{sub-codewords.}
For ${\ell=1,2,\dots,L}$,
$Y_\ell$ is the $n_r \times T$
received matrix given by
\begin{equation}\label{Channel}
Y_\ell = \rho H_\ell X_\ell + W_\ell
\end{equation}
where $H_\ell$ is the ${n_r \times n_t}$ channel matrix and
$W_\ell$ is the ${n_r \times T}$ noise matrix
with both having iid 
circularly-symmetric complex Gaussian entries
with unit variance. The codeword $X$
is sampled uniformly at random
from a code $\mathcal{X}$ and 
the real scalar parameter 
$\rho$ is chosen
to satisfy 
\begin{equation}
\Expect\mleft[\lVert \rho X \rVert_\mathsf{F}^2\mright] 
= 
\rho^2 \sum_{\ell=1}^L
\Expect\mleft[ 
\lVert X_\ell \rVert_\mathsf{F}^2\mright]
= 
L \cdot T \cdot \mathsf{SNR}\text{.}
\end{equation}

We further have ML decoding at the receiver
with perfect channel knowledge in the sense
that all channel matrix realizations are known
and all channel model parameters are known.
In particular, define the ML decision $\hat{X}$
by
\begin{equation}
\hat{X} = \argmin_{X' \in \mathcal{X}}
\sum_{\ell=1}^{L}\lVert Y_\ell - \rho H_\ell X'_\ell
\rVert_\mathsf{F}^2\label{ML-Estimate}
\end{equation}
and define the probability of error $P_e$
by 
\begin{equation}
	P_e = \Pr(\hat{X}\neq X)\text{.}
\end{equation}
The decoding problem to 
be considered in Section \ref{Decoding-Sec} 
is that of 
solving $\eqref{ML-Estimate}$
and each simulation curve
in Section
\ref{Simulations-Sec}
provides the codeword
error rate (CER)
which is a Monte Carlo estimate of $P_e$
as a function of SNR.

Throughout this paper,
we will use $\intercal$ for the
matrix transpose, $\dagger$ for the
Hermitian transpose, and $(\cdot)_{ij}$
to denote the $ij$th entry of a matrix.

\begin{definition}[Transmit Diversity Gain]
	\label{Transmit-Diversity-Gain-Definition}
	An $L$-block $n_t \times T$ space--time code is
	said to achieve a
	\textit{transmit diversity gain} of $d$ if,
	for a
	channel with $n_r$ receive antennas, we have 
	\begin{equation}
	\lim_{\mathsf{SNR}\to\infty}
	\frac{
		\log P_e(\mathsf{SNR})
	}
	{\log\mathsf{SNR}}
	= -n_r d\text{.}
	\end{equation}
\end{definition}

\begin{theorem}[Sum-Rank Criterion
	\cite{Transmit-Diversity,Special-Case-Conf,Lu-Kumar,
		Rate-Diversity-Tradeoff-General,ST,Hsiao-Lower-Bound,
		Hsiao-Lower-Bound-Conf}]
	\label{Sum-Rank-Criterion-Theorem}
	An $L$-block $n_t \times T$ space--time code $\mathcal{X}$ 
	achieves 
	a transmit diversity gain of $d$ if and only if 
	\begin{equation}\label{Complex-Sum-Rank-Distance}
	d = \min_{\substack{X,X' \in \mathcal{X}\\X\neq X'}}
	\sum_{\ell=1}^{L}\rank(X_\ell-X'_\ell)\text{.}
	\end{equation}
\end{theorem}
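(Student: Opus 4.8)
The plan is to relate the SNR exponent of $P_e$ to pairwise error probabilities (PEPs) and then to identify the dominant PEP as the one arising from the pair of codewords at minimum sum-rank distance. Writing $\Pr(X \to X')$ for the probability that the ML rule \eqref{ML-Estimate} prefers $X'$ over the transmitted $X$, the union bound gives $P_e \le \sum_{X \neq X'} \Pr(X \to X')$, while retaining only a single competing codeword yields $P_e \ge \frac{1}{|\mathcal{X}|}\max_{X\neq X'}\Pr(X\to X')$. Since $|\mathcal{X}|$ and the number of summands are finite constants independent of SNR, both bounds will share the SNR exponent of the slowest-decaying PEP, so it suffices to determine the exponent of each PEP and then extremize over codeword pairs.

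The core computation is the high-SNR behavior of a single PEP. Conditioned on the channel realizations $H_1,\dots,H_L$, the difference of the two ML metrics in \eqref{ML-Estimate} is Gaussian, so the conditional PEP equals $Q\bigl(\sqrt{\tfrac{\rho^2}{2}\sum_{\ell}\lVert H_\ell \Delta_\ell\rVert_\mathsf{F}^2}\bigr)$, where $\Delta_\ell = X_\ell - X'_\ell$. Applying the Chernoff bound $Q(x)\le \tfrac12 e^{-x^2/2}$ and averaging over the independent matrices $H_\ell$ reduces the problem to a product of Gaussian integrals. Diagonalizing each $\Delta_\ell\Delta_\ell^\dagger$ and exploiting the unitary invariance of the distribution of $H_\ell$, this average factors as $\prod_{\ell}\prod_{i}\bigl(1+\tfrac{\rho^2}{4}\lambda_{\ell,i}\bigr)^{-n_r}$, where the $\lambda_{\ell,i}$ are the eigenvalues of $\Delta_\ell\Delta_\ell^\dagger$ and the exponent $n_r$ reflects the $n_r$ independent rows of each $H_\ell$. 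Only the $\rank(\Delta_\ell)$ nonzero eigenvalues contribute in the large-$\rho$ limit, and since the power constraint forces $\rho^2$ to scale linearly with $\mathsf{SNR}$, this bound behaves as a constant times $\mathsf{SNR}^{-n_r\sum_\ell \rank(\Delta_\ell)}$.

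It then remains to assemble the exponents. The union-bound sum is dominated at high SNR by the term with the smallest exponent, namely the pair achieving $d = \min_{X\neq X'}\sum_\ell \rank(X_\ell - X'_\ell)$, giving $\limsup \log P_e / \log \mathsf{SNR} \le -n_r d$. The main obstacle is the matching lower bound: the Chernoff bound supplies only an upper estimate of each PEP, so I must separately argue that the worst-case PEP decays no faster than $\mathsf{SNR}^{-n_r d}$. This follows because the Gaussian average of the exact $Q$-function admits a closed-form eigenvalue (determinant) expression whose SNR exponent coincides with that of the Chernoff bound; equivalently, lower-bounding $Q$ and re-evaluating the same eigenvalue integral shows the exponent is tight. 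Combining the two bounds yields $\lim \log P_e/\log\mathsf{SNR} = -n_r d$, which is exactly the transmit diversity gain of Definition \ref{Transmit-Diversity-Gain-Definition}.
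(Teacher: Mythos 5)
Your proof is correct and follows essentially the same route as the paper's: the paper likewise sandwiches $P_e$ between a union bound over Chernoff-bounded pairwise error probabilities (the upper bounds it cites from Guey et al.\ and Tarokh et al.) and a single-pair lower bound whose SNR exponent matches (the cited lower bounds of Lu et al.). Your explicit derivation of the $\prod_{\ell}\prod_{i}\bigl(1+\tfrac{\rho^2}{4}\lambda_{\ell,i}\bigr)^{-n_r}$ factorization and the tightness argument via a lower bound on $Q$ simply fill in the details that the paper delegates to its references.
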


\begin{proof}
	This follows from combining the pairwise error
	probability upper bounds of \cite{Transmit-Diversity,ST}
	with the lower bounds of  \cite{Hsiao-Lower-Bound,Hsiao-Lower-Bound-Conf}
	and then sandwiching the probability of error
	via the union bound. For a detailed proof, see \cite{My-Thesis}. 
\end{proof}

Theorem \ref{Sum-Rank-Criterion-Theorem} thus
provides an equivalent definition
of transmit diversity gain
as a property of the code used. 
The recognition of the quantity
\eqref{Complex-Sum-Rank-Distance} 
as a criterion for space--time code design
first occurs in \cite{Transmit-Diversity,ST}
and the multiblock
generalization which we adopt 
was first recognized in 
\cite{Special-Case-Conf, Lu-Kumar, Rate-Diversity-Tradeoff-General}.
The equivalence to Definition \ref{Transmit-Diversity-Gain-Definition}
follows from \cite{Hsiao-Lower-Bound,Hsiao-Lower-Bound-Conf}
which provide the required lower bound
and state this equivalence.

From \eqref{Complex-Sum-Rank-Distance}, we
have that the transmit diversity gain 
$d$ is an integer satisfying 
${1 \leq d\leq L\cdot\min\{n_t,T\}}$. 
Codes for which $d = L\cdot\min\{n_t,T\}$
are referred to as being
\textit{full diversity.} 
Moreover, codes for which $T = n_t$
 are referred to as being 
\textit{minimal delay.}

A \textit{constellation}
$\mathcal{A}$ is defined as a finite
subset of $\mathbb{C}$ of cardinality 
greater than or equal to two. 
Let $\mathcal{A}$ be a constellation. 
An $L$-block $n_t\times T$
space--time code $\mathcal{X}$ will be said to be
\textit{completely over}
$\mathcal{A}$
if $\mathcal{X}$ is
a subset of $\mathcal{A}^{n_t \times LT}$.
The term \textit{completely over} 
is adopted from \cite{Division-Algebras} 
and emphasizes that codeword entries
are constrained
to belong to $\mathcal{A}$.

\begin{definition}[Rate]
	For some 
	fixed constellation
	$\mathcal{A}$,
	the \textit{rate} $R$ of an $L$-block $n_t \times T$ space--time code 
	$\mathcal{X}$ completely over $\mathcal{A}$ is defined 
	by
	\begin{equation}\label{Rate-Definition}
		R = 
		\frac{1}{LT}\log_\abs*{\mathcal A}\abs*{\mathcal{X}}
		\text{.}
	\end{equation}
\end{definition}
The term 
\textit{channel use} will refer to a use of the 
underlying MIMO 
channel so that a codeword is
 transmitted across $LT$ channel uses and 
this rate is interpreted as the
average information rate in symbols 
per channel use.

\subsection{The Rate--Diversity Tradeoff and Some Consequences}

The following theorem first appears in
\cite{Rate-Diversity-Tradeoff-General,Lu-Kumar} 
and is a generalization of the 
well-known tradeoff for the case of $L = 1$ 
appearing in \cite{ST,Rate-Diversity-Tradeoff-Lu}.
It follows from a Singleton bound \cite{Singleton}
argument. 

For the remainder of this section,
we will let $\mathcal{A}$ 
be some fixed but arbitrary 
constellation.

\begin{theorem}[Rate--Diversity Tradeoff 
	\cite{Rate-Diversity-Tradeoff-General,Lu-Kumar}]
	\label{Rate-Diversity-Tradeoff-Theorem}
	Let $\mathcal{X}$ be an $L$-block $n_t \times T$  
	space--time code completely over $\mathcal{A}$ 
	with rate $R$ and achieving transmit diversity gain $d$. 
	Then, 
	\begin{equation}\label{Rate-Diversity-Tradeoff-Inequality}
	R \leq n_t - \frac{d-1}{L}\cdot
	\max\mleft\{\frac{n_t}{T},1\mright\}\text{.}
	\end{equation}
\end{theorem}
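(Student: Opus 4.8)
The plan is to derive the bound \eqref{Rate-Diversity-Tradeoff-Inequality} from a Singleton-type upper bound on the codebook size obtained by a puncturing (row/column deletion) argument, and then to rewrite that cardinality bound as the stated rate bound. By Theorem~\ref{Sum-Rank-Criterion-Theorem}, the diversity gain equals the minimum sum-rank distance $\min_{X\neq X'}\sum_{\ell=1}^{L}\rank(X_\ell-X'_\ell)$, so it suffices to show that any $L$-block $n_t\times T$ code completely over $\mathcal A$ with minimum sum-rank distance $d$ satisfies $\abs*{\mathcal X}\le\abs*{\mathcal A}^{LTn_t-(d-1)\max\{n_t,T\}}$.

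First I would treat the case $n_t\le T$. The key elementary fact is that an $n_t\times T$ matrix whose nonzero entries are confined to a fixed set of $r$ rows has rank at most $r$. I would select nonnegative integers $r_1,\dots,r_L$ with $r_\ell\le n_t$ and $\sum_{\ell}r_\ell=d-1$; such a choice exists because the range $1\le d\le L\min\{n_t,T\}$ noted after Theorem~\ref{Sum-Rank-Criterion-Theorem} gives $d-1<Ln_t=L\min\{n_t,T\}$, so the $d-1$ deletions can be packed into the $L$ blocks without exhausting any single block. Deleting a fixed set of $r_\ell$ rows from each sub-codeword $X_\ell$ defines a projection $\phi$ of $\mathcal X$ onto $\sum_{\ell}(n_t-r_\ell)T=T(Ln_t-(d-1))$ coordinates, each drawn from $\mathcal A$. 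If $\phi(X)=\phi(X')$, then each difference $X_\ell-X'_\ell$ is supported on the $r_\ell$ deleted rows, so $\rank(X_\ell-X'_\ell)\le r_\ell$ and $\sum_{\ell}\rank(X_\ell-X'_\ell)\le d-1$; since the minimum sum-rank distance is $d$, this forces $X=X'$. Hence $\phi$ is injective on $\mathcal X$, giving $\abs*{\mathcal X}\le\abs*{\mathcal A}^{T(Ln_t-(d-1))}$.

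For $n_t>T$ I would run the identical argument deleting columns rather than rows, using that a matrix supported on $c$ columns has rank at most $c$ and distributing $\sum_{\ell}c_\ell=d-1$ column deletions (feasible since $d-1<LT=L\min\{n_t,T\}$); this yields $\abs*{\mathcal X}\le\abs*{\mathcal A}^{n_t(LT-(d-1))}$. The two cases merge into $\abs*{\mathcal X}\le\abs*{\mathcal A}^{LTn_t-(d-1)\max\{n_t,T\}}$, the unifying point being that one always keeps the larger of the two dimensions intact and prunes $d-1$ lines along the smaller one. Taking $\log_{\abs*{\mathcal A}}$, dividing by $LT$ as in \eqref{Rate-Definition}, and using $\max\{n_t,T\}/T=\max\{n_t/T,1\}$ reproduces \eqref{Rate-Diversity-Tradeoff-Inequality} exactly.

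The argument is largely routine once the deletion scheme is fixed; the only point demanding care is the bookkeeping that selects rows versus columns according to $\min\{n_t,T\}$ and that verifies the $d-1$ deletions can always be distributed across the $L$ blocks. I expect this feasibility check to be the sole place the hypothesis on the range of $d$ is genuinely needed, and the injectivity of the projection to be the conceptual crux explaining why the \emph{sum} of ranks across blocks—rather than $L$ times a per-block rank—controls the code size, which is precisely what produces the factor $1/L$ in the bound.
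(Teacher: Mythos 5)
Your proposal is correct, and it takes exactly the approach the paper points to: the paper gives no detailed proof of Theorem~\ref{Rate-Diversity-Tradeoff-Theorem}, deferring to \cite{Rate-Diversity-Tradeoff-General,Lu-Kumar} and noting only that the result ``follows from a Singleton bound \cite{Singleton} argument,'' which is precisely the puncturing/projection-injectivity argument you carry out (and your case split via $\max\{n_t/T,1\}$, the use of Theorem~\ref{Sum-Rank-Criterion-Theorem} to identify $d$ with the minimum sum-rank distance, and the final conversion to the rate bound are all sound). One cosmetic remark: your parenthetical claim that the $d-1$ deletions can be packed ``without exhausting any single block'' is not needed and can in fact fail (e.g., when $d-1 = Ln_t-1$ all but one block must be fully deleted); feasibility only requires $r_\ell \le n_t$ with $\sum_{\ell} r_\ell = d-1$, which is what your injectivity argument actually uses, so the proof stands as written.
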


A space--time code
is said to be \textit{rate--diversity optimal}
if \eqref{Rate-Diversity-Tradeoff-Inequality}
holds with equality. 
A \textit{rate--diversity pair} $(R,d)$
for which \eqref{Rate-Diversity-Tradeoff-Inequality}
holds with equality
(in which case specifying one of $R$ or $d$ specifies the other) 
will be said to be an \textit{optimal rate--diversity pair.} 
The rate--diversity optimal multiblock
space--time coding problem is that 
of constructing families
of space--time codes capable of achieving 
\textit{any} optimal rate--diversity pair $(R,d)$
with $d$ a positive integer satisfying 
$1\leq d \leq L\cdot\min\{n_t,T\}$. 
We will be particularly interested
in codes that are \textit{not} 
full diversity. As will be seen
shortly, the extremes of this tradeoff
are relatively uninteresting. 

The remainder of this section will
consider what happens for some special
cases of the parameters $L$, $n_t$, $T$,
and $d$.  
The results which follow 
will not play a role
in the main code construction
to be provided in Section
\ref{Code-Construction-Sec}
which will admit arbitrary 
values for these parameters.
However, they are worth noting
both as basic consequences of Theorem
\ref{Rate-Diversity-Tradeoff-Theorem}
and for the purposes
of contextualizing the problem
at hand.

\begin{remark*}
	Let $\mathcal{X} = \mathcal{A}^{n_t \times LT}$ 
	be an $L$-block $n_t \times T$
	space--time code completely over $\mathcal{A}$. It is easy to see that this
	code, which corresponds to uncoded signalling, achieves the optimal rate--diversity pair $(R,d)$ corresponding to $d = 1$.
	Thus, restrictions to $d > 1$ are without elimination of interesting cases.
\end{remark*}

Solutions to the single-block, i.e., $L = 1$, rate--diversity 
optimal space--time coding problem  
are provided in \cite{ST-Gaussian-Integers,Gabidulin-Space-Time,
	Lu-Kumar,Sven,Lu-AM-PSK,Arbitrary,
	Hammons-Rate-Diversity-Optimal,Rate-Diversity-Tradeoff-Lu},
hence we shift our attention to the multiblock, i.e., $L > 1$, problem.
There are three special cases in which the multiblock problem is solved 
by a straightforward adaptation of a solution to the single-block problem. 
These cases are
\begin{itemize}
	\setlength\itemsep{1ex}
	\item $d = L\cdot\min\{n_t,T\}$, full diversity;
	\item $T \geq Ln_t$, wide sub-codewords or very slow fading; and
	\item $n_t \geq LT$, tall sub-codewords or very fast fading.
\end{itemize}
These three special
cases are hence implicitly solved. 
We show this in the next three propositions which deal with the construction
of optimal multiblock codes \textit{assuming that
	optimal single-block codes are at hand.}

In the case of $d= L\cdot \min\{n_t,T\}$, 
the only possibility admitted by \eqref{Complex-Sum-Rank-Distance}
is that all sub-codeword differences corresponding to distinct 
codewords are full rank. Thus,
as noted in \cite{Rate-Diversity-Tradeoff-General}
and as will be seen shortly,
repetition of a full diversity
single-block optimal code is optimal. On the other hand, in
the case of $d < L\cdot \min\{n_t,T\}$, \eqref{Complex-Sum-Rank-Distance}
admits a combinatorially vast space of possibilities for the 
ranks of the differences of the sub-codewords and repetition
of any single-block code only allows for the one
where all the ranks in the sum in \eqref{Complex-Sum-Rank-Distance}
are equal. Repetition in this case hence yields a lower
rate than is possible and is suboptimal. 
More precisely,
we have the the following proposition:

\begin{proposition}[Repetition Constructions]\label{Repetition-Construction}
	Let $\mathcal{\tilde{X}}$ be a
	rate--diversity optimal
	$1$-block $n_t \times T$ space--time code
	completely over $\mathcal{A}$ achieving 
	the
	(optimal) rate--diversity pair 
	$(\tilde R, \tilde d)$. 
	Let $\mathcal{X}$ be the $L$-block $n_t \times T$ space--time code 
	completely over $\mathcal{A}$
	with $L > 1$
	obtained by horizontally 
	concatenating 
	$L$ copies of each codeword of 
	$\mathcal{\tilde{X}}$  
	and achieving rate--diversity pair $(R,d)$.
	Then, $R = \tilde R / L$, $d = L\tilde d$,
	and $\mathcal{X}$ is rate--diversity
	optimal if and only if $\tilde d = \min\{n_t,T\}$.
\end{proposition}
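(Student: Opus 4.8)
The plan is to verify the three claims in turn, treating the rate and diversity identities as direct consequences of the definitions and reserving the bulk of the effort for the equivalence characterizing optimality. First I would observe that the repetition map $\tilde{X} \mapsto \begin{bmatrix} \tilde{X} & \tilde{X} & \cdots & \tilde{X} \end{bmatrix}$ sending each single-block codeword to its $L$-fold horizontal concatenation is injective, so that $\abs*{\mathcal{X}} = \abs*{\tilde{\mathcal{X}}}$. The rate claim $R = \tilde{R}/L$ then follows immediately from the rate definition \eqref{Rate-Definition}, since the only change relative to the single-block case is that the block length $T$ is replaced by $LT$ in the normalizing factor while the codebook cardinality is unchanged. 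For the diversity claim, I would invoke the sum-rank criterion (Theorem \ref{Sum-Rank-Criterion-Theorem}): for codewords $X, X' \in \mathcal{X}$ arising from distinct $\tilde{X}, \tilde{X}' \in \tilde{\mathcal{X}}$, every one of the $L$ sub-codeword differences equals $\tilde{X} - \tilde{X}'$, so that $\sum_{\ell=1}^{L}\rank(X_\ell - X'_\ell) = L\cdot\rank(\tilde{X} - \tilde{X}')$. Minimizing over distinct pairs and pulling the factor $L$ outside the minimum yields $d = L\tilde{d}$.

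The substantive part is the optimality characterization. By Theorem \ref{Rate-Diversity-Tradeoff-Theorem}, the code $\mathcal{X}$ is optimal precisely when $R = n_t - \frac{d-1}{L}\max\mleft\{n_t/T, 1\mright\}$, while optimality of $\tilde{\mathcal{X}}$ (the $L = 1$ instance of the same bound) gives $\tilde{R} = n_t - (\tilde{d}-1)\max\mleft\{n_t/T, 1\mright\}$. Substituting $R = \tilde{R}/L$ and $d = L\tilde{d}$ into the first equation, clearing the factor $1/L$, and then using the second relation to eliminate $\tilde{R}$, I expect all terms to collapse to the single identity $n_t(L-1) = \tilde{d}(L-1)\max\mleft\{n_t/T, 1\mright\}$. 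Since $L > 1$, the factor $L - 1$ is nonzero and may be cancelled, reducing optimality of $\mathcal{X}$ to the clean condition $n_t = \tilde{d}\max\mleft\{n_t/T, 1\mright\}$. Because both the $\mathcal{X}$-optimality equation and the known $\tilde{\mathcal{X}}$-optimality equation are linear in $\tilde{R}$, this reduction is an equivalence in both directions, so no information is lost.

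The only place requiring any care---and thus the main, if mild, obstacle---is resolving this last condition into the stated form $\tilde{d} = \min\{n_t, T\}$. I would split on the sign of $n_t - T$: when $n_t \leq T$ the maximum equals $1$ and the condition reads $\tilde{d} = n_t = \min\{n_t, T\}$, whereas when $n_t > T$ the maximum equals $n_t/T$ and the condition simplifies to $\tilde{d} = T = \min\{n_t, T\}$. In either case the equivalence holds, completing the argument. No deeper difficulty arises: the proof is entirely a matter of bookkeeping with the definitions together with a two-case unfolding of the $\max$ term.
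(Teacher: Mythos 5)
Your proposal is correct and follows essentially the same route as the paper: both establish $R=\tilde R/L$ from $\abs{\mathcal X}=\abs*{\mathcal{\tilde X}}$, get $d=L\tilde d$ from the identical sub-codeword differences, and then reduce the optimality question to algebra on the tradeoff equality, using $L>1$ at the decisive step. The only cosmetic difference is that the paper tracks equality conditions through an inequality chain (via the identity $\max\{n_t/T,1\}=n_t/\min\{n_t,T\}$), whereas you solve the equality as an equation and cancel the factor $L-1$; the content is the same.
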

\begin{proof}
	Since 
	$\abs{\mathcal{X}} = \abs*{\mathcal{\tilde X}}$,
	it is immediate from Definition \ref{Rate-Definition}
	that $R = \tilde R / L$.
	Moreover, noting that the
	sub-codewords of $\mathcal{X}$ are identical
	and are the codewords of $\mathcal{\tilde X}$, it
	is straightforwardly verified that $d = L\tilde d$.
	Next, note that by the rate--diversity
	optimality of $\mathcal{\tilde{X}}$, we have
	\begin{align*}
		\tilde R
		&= 
		n_t - (\tilde d - 1)
		\cdot
		\max\mleft\{\frac{n_t}{T},1\mright\}\\
		&= 
		n_t\cdot\left(1-\frac{\tilde d}{\min\{n_t,T\}}\right)	
		+
		\max\mleft\{\frac{n_t}{T},1\mright\}\text{.}
	\end{align*}
	We then have by
	\eqref{Rate-Diversity-Tradeoff-Inequality}
	\begin{align*}
		R = \frac{\tilde R}{L} &= 
		\frac{n_t}{L}\cdot\left(1-\frac{\tilde d}{\min\{n_t,T\}}\right)
		+
		\frac{1}{L}\cdot\max\mleft\{\frac{n_t}{T},1\mright\}\\
		&\leq 
		\frac{n_t}{1}\cdot\left(1-\frac{\tilde d}{\min\{n_t,T\}}\right)	
		+
		\frac{1}{L}\cdot\max\mleft\{\frac{n_t}{T},1\mright\}\\
		&=
		n_t - \frac{d-1}{L}\cdot \max\mleft\{\frac{n_t}{T},1\mright\}
	\end{align*}
	with equality if $\tilde d = \min\{n_t,T\}$.
	Conversely, equality implies that $\tilde d = \min\{n_t,T\}$
	by the $L > 1$ assumption.
\end{proof}

We now proceed to
the cases of $T \geq Ln_t$ and $n_t \geq LT$
which can be addressed with
what we term 
\textit{slicing constructions} to be described in the next two 
propositions. 
The first slicing construction to be described in 
Proposition \ref{Horizontal-Slicing-Construction} first occurs in
\cite{Lu-Kumar,Special-Case-Conf}. 
Proposition \ref{Vertical-Slicing-Construction} 
is a space--time coding analogue of a result in \cite{Umberto-MR-LRC}. 
Both are based on the simple fact that
\begin{equation*}
\rank\mleft(\begin{bmatrix}
A \\ B
\end{bmatrix}\mright)
\leq \rank(A) + \rank(B)
\end{equation*} 
for arbitrary matrices $A$ and $B$
over any field
having the same number of columns.
Similarly, 
we have 
\begin{equation}\label{Finer-Partitions-Preserve-Total-Rank}
\rank\mleft(\begin{bmatrix}
A & B
\end{bmatrix}\mright)
\leq \rank(A) + \rank(B)
\end{equation}
for matrices $A$ and $B$
with the same number of 
rows.

\begin{proposition}[Horizontal Slicing
	Construction \cite{Lu-Kumar,Special-Case-Conf}]
	\label{Horizontal-Slicing-Construction}
		Let $\mathcal{\tilde{X}}$ 
		be a rate--diversity 
		optimal $1$-block $Ln_t \times T$
		space--time code completely over $\mathcal{A}$
		with $L > 1$
		and let $\mathcal{X}$
		be the \mbox{$L$-block}
		$n_t \times T$ space--time code 
		completely over $\mathcal{A}$ 
		obtained by horizontally 
		slicing the codewords of 
		$\mathcal{\tilde{X}}$ into $L$ sub-codewords
		of dimensions $n_t \times T$. 
		If $T \geq Ln_t$, then $\mathcal{X}$
		is rate--diversity optimal.
\end{proposition}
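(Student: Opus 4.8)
The plan is to pin down the rate and diversity of $\mathcal{X}$ in terms of those of $\tilde{\mathcal{X}}$ and then verify equality in the tradeoff of Theorem \ref{Rate-Diversity-Tradeoff-Theorem}. Write $(\tilde R,\tilde d)$ for the optimal rate--diversity pair of $\tilde{\mathcal{X}}$ and $(R,d)$ for the rate--diversity pair of $\mathcal{X}$. First I would dispense with the rate: horizontal slicing is a bijection between the codewords of $\tilde{\mathcal{X}}$ and those of $\mathcal{X}$, so $\abs{\mathcal{X}} = \abs{\tilde{\mathcal{X}}}$ and Definition \ref{Rate-Definition} gives $R = \tilde R / L$, exactly as in the proof of Proposition \ref{Repetition-Construction}. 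Since $T \geq Ln_t$ forces $\max\{Ln_t/T,1\} = 1$, the optimality of $\tilde{\mathcal{X}}$ reads $\tilde R = Ln_t - (\tilde d - 1)$, and therefore $R = n_t - (\tilde d - 1)/L$.

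Next I would bound $d$ from below. Under the slicing, each codeword difference of $\tilde{\mathcal{X}}$ is the vertical stacking of the $L$ sub-codeword differences of the corresponding codewords of $\mathcal{X}$, so repeated application of the subadditivity of rank under vertical stacking gives $\rank(\tilde X - \tilde X') \leq \sum_{\ell=1}^{L}\rank(X_\ell - X_\ell')$ for every pair of distinct codewords. Taking the minimum of both sides over all such pairs and invoking \eqref{Complex-Sum-Rank-Distance} yields $\tilde d \leq d$.

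For the matching upper bound I would apply the rate--diversity tradeoff of Theorem \ref{Rate-Diversity-Tradeoff-Theorem} directly to $\mathcal{X}$. Because $T \geq Ln_t > n_t$ (using $L > 1$), we again have $\max\{n_t/T,1\} = 1$, so $R \leq n_t - (d-1)/L$; substituting $R = n_t - (\tilde d - 1)/L$ gives $d \leq \tilde d$. The two inequalities force $d = \tilde d$, at which point $R = n_t - (d-1)/L = n_t - \frac{d-1}{L}\max\{n_t/T,1\}$, i.e., \eqref{Rate-Diversity-Tradeoff-Inequality} holds with equality and $\mathcal{X}$ is rate--diversity optimal.

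The entire argument is a two-sided sandwich: the lower bound $d \geq \tilde d$ is pure rank combinatorics coming from the slicing, while the upper bound $d \leq \tilde d$ is the Singleton-type bound of Theorem \ref{Rate-Diversity-Tradeoff-Theorem}. The step I expect to require the most care is not technical but conceptual, namely recognizing that the hypothesis $T \geq Ln_t$ is precisely what collapses both maxima to $1$ so that these two independently derived bounds coincide; without it, the rate of $\tilde{\mathcal{X}}$ would carry an unmatched factor $Ln_t/T$ and the sandwich would not close. A minor point to check is that the per-pair rank inequality is inherited by the minima, which holds because a pointwise inequality between two objective functions is preserved under minimization.
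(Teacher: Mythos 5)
Your proposal is correct and follows essentially the same route as the paper's proof: compute $R = \tilde R/L = n_t - (\tilde d - 1)/L$ from the bijection, get $d \leq \tilde d$ by rearranging the tradeoff of Theorem \ref{Rate-Diversity-Tradeoff-Theorem} applied to $\mathcal{X}$, and get $d \geq \tilde d$ from rank subadditivity under vertical stacking. The only (immaterial) difference is that the paper applies the rank inequality to a minimizing codeword pair rather than invoking preservation of a pointwise inequality under minimization.
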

\begin{proof}
	Let $(\tilde R, \tilde d)$ be the
	rate--diversity pair achieved by $\mathcal{\tilde X}$
	and $(R,d)$ be the rate--diversity pair achieved by 
	$\mathcal{X}$.
	By the rate--diversity optimality of $\mathcal{\tilde X}$,
	we have $\tilde R = Ln_t - \tilde d + 1$
	and since $\abs{\mathcal{X}}=\abs*{\mathcal{\tilde X}}$,
	we have 
	$R = \tilde R/L = n_t - (\tilde d - 1)/L$.
	Noting that $T \geq Ln_t \geq n_t$,
	it suffices to show that $d = \tilde d$.
	Rearranging the rate--diversity tradeoff
	\eqref{Rate-Diversity-Tradeoff-Inequality} 
	for $\mathcal{X}$, we have
	$d \leq Ln_t - LR + 1 = \tilde d$. Let $X,X'\in \mathcal{X}$ be a 
	codeword pair such that $d = \sum_{\ell=1}^L \rank(X_\ell-X'_\ell)$. Then, 
	\begin{align*}
	d &= \sum_{\ell=1}^L \rank(X_\ell-X'_\ell) \\
	& \geq \rank\mleft(\begin{bmatrix}
	X_1-X'_1\\
	X_2-X'_2\\
	\vdots\\
	X_L-X'_L
	\end{bmatrix}\mright)\\
	& \geq \tilde d
	\end{align*}
	since the vertically concatenated sub-codewords 
	of $\mathcal{X}$ are a codeword 
	of $\mathcal{\tilde{X}}$ by definition.
\end{proof}

We can similarly show the following: 

\begin{proposition}[Vertical Slicing
	Construction]
	\label{Vertical-Slicing-Construction}
	Let $\mathcal{\tilde{X}}$ 
	be a rate--diversity 
	optimal $1$-block $n_t \times LT$
	space--time code completely over $\mathcal{A}$
	with $L > 1$
	and let $\mathcal{X}$
	be the \mbox{$L$-block}
	$n_t \times T$ space--time code 
	completely over $\mathcal{A}$ 
	obtained by vertically
	slicing the codewords of 
	$\mathcal{\tilde{X}}$ into $L$ sub-codewords
	of dimensions $n_t \times T$. 
	If $n_t \geq LT$, then $\mathcal{X}$
	is rate--diversity optimal.
\end{proposition}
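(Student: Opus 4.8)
The plan is to mirror the proof of Proposition \ref{Horizontal-Slicing-Construction}, replacing the row-concatenation rank inequality used there with its column-concatenation counterpart \eqref{Finer-Partitions-Preserve-Total-Rank}. Let $(\tilde R, \tilde d)$ and $(R,d)$ denote the rate--diversity pairs achieved by $\mathcal{\tilde X}$ and $\mathcal{X}$, respectively. Since vertical slicing leaves the total number of columns equal to $LT$ and $\abs{\mathcal{X}} = \abs*{\mathcal{\tilde X}}$, Definition \ref{Rate-Definition} gives $R = \tilde R$ immediately, in contrast to the $R = \tilde R / L$ of the horizontal case.

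Next I would pin down which branch of each maximum is active. Because $n_t \geq LT$, the tradeoff \eqref{Rate-Diversity-Tradeoff-Inequality} applied to the $1$-block $n_t \times LT$ code $\mathcal{\tilde X}$ reads, at equality, $\tilde R = n_t - (\tilde d - 1)\,n_t/(LT)$. Because $L > 1$ forces $n_t \geq LT > T$, the corresponding maximum for $\mathcal{X}$ also selects $n_t/T$, so the tradeoff for $\mathcal{X}$ becomes $R \leq n_t - (d-1)\,n_t/(LT)$. As in the horizontal case it then suffices to show $d = \tilde d$: substituting $d = \tilde d$ and $R = \tilde R$ into this inequality recovers equality and hence the claimed optimality.

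For the upper bound $d \leq \tilde d$, I would rearrange the tradeoff inequality for $\mathcal{X}$ into $d - 1 \leq LT(n_t - R)/n_t$ and substitute $n_t - R = n_t - \tilde R = (\tilde d - 1)\,n_t/(LT)$ from the optimality of $\mathcal{\tilde X}$; the factors of $n_t/(LT)$ cancel, leaving $d \leq \tilde d$. For the matching lower bound $d \geq \tilde d$, let $X, X' \in \mathcal{X}$ be a pair attaining the minimum in \eqref{Complex-Sum-Rank-Distance}. Since the horizontal concatenation of the sub-codeword differences is itself a difference of two codewords of $\mathcal{\tilde X}$ by construction, repeated application of \eqref{Finer-Partitions-Preserve-Total-Rank} yields
\[
d = \sum_{\ell=1}^L \rank(X_\ell - X'_\ell) \geq \rank\mleft(\begin{bmatrix} X_1 - X'_1 & \cdots & X_L - X'_L \end{bmatrix}\mright) \geq \tilde d.
\]
Combining the two bounds gives $d = \tilde d$, which together with $R = \tilde R$ completes the argument.

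I do not expect a genuine obstacle here: the only substantive ingredient is the column-subadditivity of rank, which is already in hand as \eqref{Finer-Partitions-Preserve-Total-Rank}, and everything else is bookkeeping. The one place demanding care is correctly identifying the active branches of the two maxima from the hypothesis $n_t \geq LT$ together with $L > 1$; choosing a branch incorrectly would spoil the cancellation of the $n_t/(LT)$ factors in the upper bound on $d$.
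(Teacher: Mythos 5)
Your proposal is correct and is exactly the argument the paper intends: the paper omits the proof of Proposition \ref{Vertical-Slicing-Construction} with ``We can similarly show the following,'' i.e., the mirror of the proof of Proposition \ref{Horizontal-Slicing-Construction} with the column-concatenation rank inequality \eqref{Finer-Partitions-Preserve-Total-Rank} in place of the row-concatenation one, which is precisely what you carried out. Your bookkeeping (that $R = \tilde R$ rather than $\tilde R / L$, and that $n_t \geq LT$ with $L > 1$ forces both maxima onto the $n_t/T$-type branch so the factors cancel in the bound $d \leq \tilde d$) is also correct.
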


We conjecture that the converses of 
Propositions \ref{Horizontal-Slicing-Construction}
and \ref{Vertical-Slicing-Construction} are also
true. Weaker versions of the converse statements
can be found in \cite{My-Thesis}. 
Essentially, we cannot expect slicing to work
beyond the special cases of 
$T \geq Ln_t$ and $n_t \geq LT$
because
this technique inherently relies on the unnecessarily 
strong requirement of 
linear independence \textit{across} different sub-codeword matrices.
As a result, sub-codewords must be sufficiently wide or sufficiently
tall so that linear independence across them can be 
imposed without a rate penalty.

We seek a 
rate--diversity optimal family of 
multiblock space--time codes 
capable of achieving \textit{any} 
optimal rate--diversity pair $(R,d)$ 
with $d$ an integer satisfying ${1 \leq d \leq L\cdot\min\{n_t,T\}}$
for \textit{any} $L$, $T$, and $n_t$. 
To the best of the authors' knowledge, 
no such codes exist in the prior literature; 
existing constructions are either non-explicit 
or by slicing, thus requiring $T \geq Ln_t$
or $n_t \geq LT$.

\subsection{Existing Rate--Diversity Optimal Constructions}

In \cite{Lu-Kumar},
Lu and Kumar provide for $L = 1$ and $T \geq n_t$
a rate--diversity optimal family for any optimal 
rate--diversity pair. 
The construction is based on a mapping which takes a collection of
maximum rank distance codes over finite fields, 
namely Gabidulin codes \cite{Gabidulin-Codes}, 
to a space--time code which inherits the rank of differences properties 
of the underlying finite field codes. In the case of 
$L > 1$ and $T \geq Ln_t$,
Lu and Kumar
further provide a rate--diversity optimal 
family for any optimal 
rate--diversity pair by horizontal slicing 
as described in Proposition \ref{Horizontal-Slicing-Construction}. 
We digress briefly to outline the connection 
to this paper. Looking at the slicing as 
being done for the underlying finite field code,
the need for $T \geq Ln_t$ occurs precisely due to a 
limitation of rank-metric or Gabidulin codes which is overcome
by sum-rank or linearized Reed--Solomon codes \cite{LRS-Codes}.
Once linearized Reed--Solomon codes are at hand, 
tools in the literature for obtaining space--time codes 
from codes over finite fields can be adopted and a 
multiblock rate--diversity optimal family allowing for
$T < Ln_t$ will follow. Analogous results hold for the case of 
$T < n_t$ and $n_t < LT$ by applying 
the appropriate matrix transpositions. 

Single-block rate--diversity optimal families are also 
described in \cite{ST-Gaussian-Integers,Gabidulin-Space-Time,
	Sven,Lu-AM-PSK,Arbitrary,
	Hammons-Rate-Diversity-Optimal,Rate-Diversity-Tradeoff-Lu}.
These are again based on starting with rank-metric or 
Gabidulin codes and using 
different 
mappings from finite fields to constellations
that are rank-metric-preserving in some sense. 
By Propositions \ref{Repetition-Construction}, \ref{Horizontal-Slicing-Construction}, 
and \ref{Vertical-Slicing-Construction}, these can be used readily to obtain 
multiblock rate--diversity optimal codes in the three special cases 
to which they pertain.
Moreover, a large number of single-block constructions are available in 
the literature some of which may be rate--diversity optimal for some specific 
points on the tradeoff curve, usually the point of full diversity. 
They can therefore potentially be used to construct rate--diversity optimal 
multiblock codes in the aforementioned special cases. 
The reader is referred to the summary of prior constructions in
\cite{Lu-Kumar} 
for details. 
Nonetheless, our focus shall be the unexplored case of 
$n_t \leq T < Ln_t$ and $1 < d < Ln_t$ where the 
methods for adapting single-block constructions fail.

Additionally, other lines of work provide non-explicit constructions
of space--time codes via design criteria that are more
amenable to algebraic constructions or computer search constructions. 
In \cite{PSK-Criterion} and \cite{QAM-Criterion}, translations
of the rank distance over complex field criterion of
\eqref{Complex-Sum-Rank-Distance}
to rank criteria over finite fields or finite rings 
are considered in the single-block case. 
In \cite{Rate-Diversity-Tradeoff-General}, the work of 
\cite{PSK-Criterion} is extended to provide algebraic criteria
for the design of rate--diversity optimal multiblock
codes with BPSK and QPSK constellations
and a few examples found by exhaustive or empirical
searches are provided.

\section{Other Perspectives}\label{Other-Perspectives-Sec}

In this section, we examine the relevance
of the rate--diversity perspective and provide
a framework for comparing the codes
to be constructed with other codes in the space--time
coding literature.
For the purposes of comparing different
space--time codes constructed in different 
manners and designed for different criteria,
we will define some more meaningful notions of rate.
The \textit{bits per channel use (bpcu) rate $R_\mathsf{b}$} 
of an $L$-block $n_t \times T$ space--time code 
$\mathcal{X}$ is defined by
\begin{equation}\label{bpcu-Rate-Definition}
R_\mathsf{b} = \frac{1}{LT}\log_2 \lvert\mathcal{X}\rvert \text{.}
\end{equation}
The \textit{bits per channel use per transmit antenna (bpcu/tx) rate 
	$R_\mathsf{b/tx}$} 
of an $L$-block $n_t \times T$ space--time code 
$\mathcal{X}$ is defined by
\begin{equation}\label{bpcu-tx-Rate-Definition}
R_\mathsf{b/tx} = \frac{1}{n_tLT}\log_2 \lvert\mathcal{X}\rvert 
= \frac{R_\mathsf{b}}{n_t}\text{.}
\end{equation}

\subsection{Unconstrained Transmission Alphabets}

An $L$-block $n_t\times T$ space--time code $\mathcal{X}$ 
is said to be a \textit{linear dispersion code} \cite{Linear-Dispersion}\footnote{We have slightly modified the definition from that of \cite{Linear-Dispersion}.}
if it can be expressed as 
\begin{equation}\label{LD-Code-Definition}
\mathcal{X}
= \left\{ \sum_{i = 1}^{n_tLT} a_i A_i  \mmiddle|
a_1,a_2,\dots,a_{n_tLT} \in \mathcal{A}_\mathsf{in} \right\}
\end{equation}
where $A_1,A_2,\dots,A_{n_tLT}\in \mathbb{C}^{n_t\times LT}$ 
are referred to as \textit{dispersion matrices} and $\mathcal{A}_\mathsf{in}$
is a constellation which we refer to as the 
\textit{input constellation}.
The significance of such codes is that the detection problem 
\eqref{ML-Estimate} can be converted into
an equivalent standard MIMO detection problem.
In particular, one can easily show that the channel \eqref{Channel} can be converted into one
with an $n_tLT \times 1$ transmitted vector with entries from the input
constellation, an $n_rLT \times n_tLT$ effective channel matrix
obtained as a function of the channel matrices and the dispersion matrices,
and an $n_rLT \times 1$ received vector.
This allows them to be decoded using the same methods
used for ML MIMO detection (see, e.g., \cite{Fifty-Year-MIMO}), 
most notably sphere decoding \cite{Sphere-Decoder,Sphere-Decoder-2,ML-CLPS}. More generally,
a wide class of sequential decoding algorithms become readily applicable
\cite{Tree-Search-Decoding}. The codes to be introduced
in this paper are \textit{not} linear dispersion code
and their decoding will be the subject of Section \ref{Decoding-Sec}.

Note that a linear dispersion code is not 
completely over the input constellation 
$\mathcal{A}_\mathsf{in}$.
The codeword entries are linear combinations 
of symbols from $\mathcal{A}_\mathsf{in}$
and thus belong to a larger constellation. This constellation
is usually not of any concern in the literature dealing with such
codes and is sometimes 
referred to as being \textit{unconstrained.} 
However, such language is merely an artifact
of the code construction method 
and nothing prevents us from analyzing these codes from 
a rate--diversity tradeoff perspective. 
Every space--time code is completely over some constellation.
The smallest such constellation is the union of the entries
of all the codewords
\begin{equation*}
\mathcal{A} = 
\bigcup_{X \in \mathcal{X}}
\left\{(X)_{ij}\mmiddle| i\in\{1,2,\dots, n_t\},
j\in\{1,2,\dots, LT\}\right\}\text{.}
\end{equation*}
and can accordingly be used to define the rate.
Linear dispersion codes are usually constructed to be
full diversity in which case the rate would satisfy
\begin{equation}
\label{Rate-Diversity-Tradeoff-Full-Diversity}
R = \frac{R_\mathsf{b}}{
	\log_2\abs{\mathcal{A}}}
\leq \frac{1}{L}\cdot\max\mleft\{\frac{n_t}{T},1\mright\}
\end{equation}
and the code would be rate--diversity optimal if 
and only if this held with equality.

Moreover, linear dispersion codes can be constructed  
to both be full diversity and to have $\abs{\mathcal{A_\mathsf{in}}}^{n_tLT}$ 
distinct codewords. Examples include 
certain codes constructed from CDAs \cite{Division-Algebras} as well
as all of the codes in \cite{Hsiao-Code,Sheng,Perfect-Codes,Improved4}
that will be used as 
empirical error performance baselines in 
this paper.
Regardless of whether or not we have rate--diversity optimality,
we will have a bpcu rate of 
\begin{equation*}
R_\mathsf{b} = n_t \cdot \log_2\abs{\mathcal{A}_\mathsf{in}}\text{.}
\end{equation*}

Furthermore, such codes can be constructed with the input
constellation $\mathcal{A}_\mathsf{in}$ 
being arbitrarily large. Thus, they can provide full diversity
with an arbitrarily large bpcu rate. Another notable
example of codes allowing for this is the codes of
\cite{Universal-ST}.

Given that such codes exist, there is no
tradeoff between bpcu rate and diversity
if the constellation size is not constrained. 
Indeed, if our \textit{only} interest
is in maximizing bpcu rate and diversity, 
the rate--diversity tradeoff is irrelevant 
since we can always impose full diversity.
In such a setting, it is typical to design
for  \textit{diversity--multiplexing tradeoff} 
\cite{DMG} optimality. This typically 
amounts to
asking slightly more of a full diversity linear dispersion code.
In particular, in the case of minimal delay 
linear dispersion codes constructed
from CDAs, it suffices to impose that the code
has a \textit{non-vanishing determinant} property
\cite{NVD,Golden-Code}. This amounts to being 
able to bound the magnitude of the
determinant of the codeword differences (or product of the 
determinants of sub-codeword differences 
in the multiblock case) 
away from zero independently of $R_\mathsf{b}$.

The codes to be introduced in this paper 
are neither designed 
for diversity--multiplexing optimality nor 
naturally amenable to an analysis of this tradeoff.
However, this
is not to suggest that it is not possible or that
we are dealing with a fundamentally different kind of
a code. 
For example, in \cite{DM-RANK}, bounds on the diversity--multiplexing
tradeoff for similarly constructed codes are obtained. 
Nonetheless, the only relevance of the 
diversity--multiplexing tradeoff to this paper 
is that the codes to be used as performance baselines
in this paper happen to be optimal with respect
to this tradeoff. The baseline codes have been
chosen based on the fact (to the best
of the authors' knowledge) that they are
the only explicitly described
 \textit{multiblock} codes
admitting a feasible decoder. 
In any case, such codes
are standard benchmarks in the single-block 
setting as well.

We will now consider two situations in which
the rate--diversity tradeoff might be relevant. 
In particular, we will consider
situations where we might
be interested in codes that are not full diversity. 
These situations are:
\begin{itemize}
	\item The size or nature of the constellation is of concern.
	\item The low-SNR error performance is of concern.
\end{itemize}
Importantly, we argue that when a space--time coded system is
scaled in a natural way, both of these issues necessarily 
become of practical concern.

We begin with the first situation.
It is well-known that large constellations
are associated with 
implementation challenges. 
For example, \cite{Robustness,Integer-Codes} consider
transmitter-side quantization as well as peak-to-average power 
ratio issues arising from the constellations
produced by certain
linear dispersion codes.
Receiver-side quantization issues which are exacerbated
by large constellations are studied
in \cite{Finite-Precision,Quantized-MIMO}. Moreover, in \cite{Hsiao-Signalling-Complexity},
techniques for reducing the signalling complexity, i.e.,
constellation size, for CDA-based linear dispersion codes are provided.
However, as will be seen in the next section, such an issue
is inherent to full diversity codes.

\subsection{A Signalling-Complexity Perspective}\label{sig-comp-subsection}

Consider an $L$-block $n_t\times T$ space--time code completely over
some constellation $\mathcal{A}$ with rate $R$ and transmit
diversity gain of $d$ satisfying $1 \leq d \leq L \cdot \min\{n_t,T\}$.
Next, take the diversity gain to be a $(1-\varepsilon)$
fraction of the total available diversity gain. In particular,
fix some $\varepsilon$ satisfying $0 \leq \varepsilon < 1$
and let $d = \ceil*{(1-\varepsilon)\cdot L\cdot\min\{n_t,T\}}$.
Consider further fixing $R_\mathsf{b/tx} = R_\mathsf{b}/n_t$.

Multiplying both sides of
the rate--diversity tradeoff 
\eqref{Rate-Diversity-Tradeoff-Inequality} 
by $1/n_t$, we get
\begin{align*}
\frac{R_\mathsf{b/tx}}{\log_2\abs{\mathcal{A}}}
&\leq  1 - 
\frac{\ceil*{(1-\varepsilon)\cdot
		L\cdot\min\{n_t,T\}}}{L\cdot\min\{n_t,T\}}
+ \frac{1}{L\cdot\min\{n_t,T\}} \\
&\leq \varepsilon
+ \frac{1}{L\cdot\min\{n_t,T\}}
\end{align*}
hence
\begin{equation*}
\abs{\mathcal{A}} 
\geq \exp(\frac{R_\mathsf{b/tx}\ln 2}{\varepsilon + \frac{1}{ L\cdot\min\{n_t,T\}}})\text{.}
\end{equation*}
Thus, we have
a lower bound on the constellation size which we will
view as a function $F_\varepsilon$ of $L\cdot\min\{n_t,T\}$, i.e.,
define
\begin{equation}\label{Constel-Size-LB-Function}
F_\varepsilon(L\cdot\min\{n_t,T\}) = \exp(\frac{R_\mathsf{b/tx}\ln 2}{\varepsilon + \frac{1}{ L\cdot\min\{n_t,T\}}})
\text{.}
\end{equation}
This gives the constellation size obtained by
a rate--diversity optimal code 
(provided that 
$(1-\varepsilon)\cdot L\cdot\min\{n_t,T\}$ is an integer)
and is also the smallest constellation size possible.
\begin{figure}[t]
	\centering
	\includegraphics[width=\columnwidth]
	{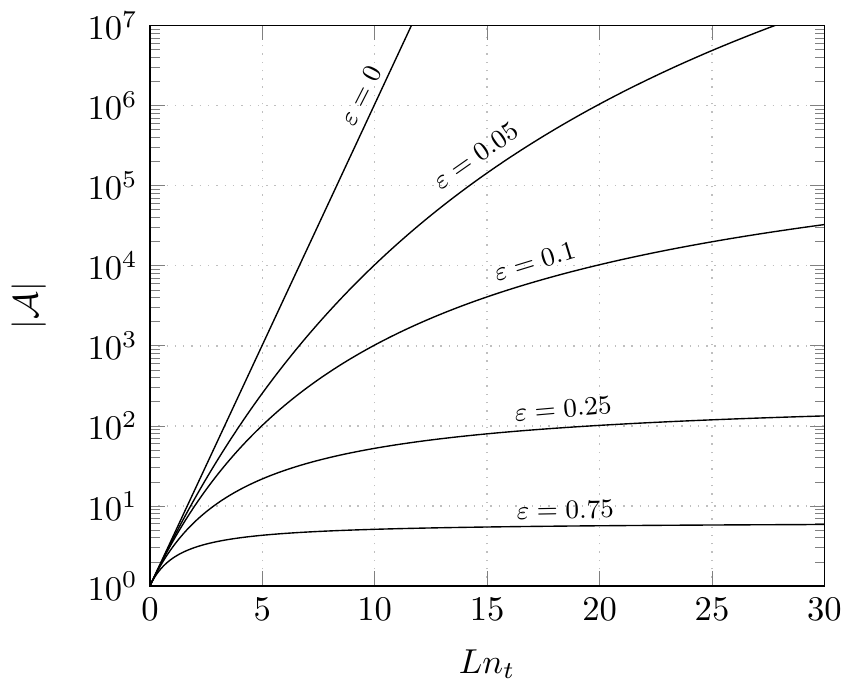}
	\caption{Constellation size lower bounds for
		$T\geq n_t$, $R_\mathsf{b}/n_t = 2$, 
		and $d =\ceil*{(1-\varepsilon)Ln_t}$}\label{Constellation-Size-Bounds}
\end{figure}

Observe that for $\varepsilon = 0$, we have 
\begin{equation}
F_0(L\cdot\min\{n_t,T\}) = \exp(L\cdot\min\{n_t,T\} \cdot R_\mathsf{b/tx}\ln 2)\label{Exponential-Constellation}
\end{equation}
and for $\varepsilon > 0$, we have 
\begin{equation}
F_\varepsilon(L\cdot\min\{n_t,T\}) < \exp(\frac{R_\mathsf{b/tx}\ln 2}{\varepsilon})\label{Bounded-Constellation}\text{.}
\end{equation}
We now consider scaling the system in $L\cdot\min\{n_t,T\}$.
This has two
very natural interpretations:

The first is to take $n_t$ and $T$ to be fixed
and $L$ to be growing. In this case, a fixed $R_\mathsf{b/tx}$ corresponds
to a fixed $R_\mathsf{b}$. Thus, we are scaling the system in the 
number of fading blocks $L$ being coded across with a fixed bpcu rate $R_\mathsf{b}$
and improving reliability in $(1-\varepsilon)\cdot L$. 
We then see that we have exponential growth of the constellation
size in $L$ for $\varepsilon = 0$ and a constellation size that is 
bounded independently of $L$ at best for $\varepsilon > 0$.

The second is to take $L$ to be fixed and $n_t$ to be growing
with $T \geq n_t$. The $T \geq n_t$ requirement is needed to
obtain the transmit diversity gain afforded by $n_t$ transmit antennas
and is standard in diversity--multiplexing literature. Moreover,
the bpcu rate $R_\mathsf{b} = R_\mathsf{b/tx}\cdot n_t$ grows linearly 
with $n_t$ so as to match the rate of uncoded independent signalling
on each antenna. This is also as would be the
case for families of diversity--multiplexing optimal
linear dispersion codes.
Again, we have exponential growth
of the constellation size in $n_t$ for $\varepsilon = 0$ 
and potentially bounded constellation size otherwise.

Fig.~\ref{Constellation-Size-Bounds} plots
constellation size lower bounds for $R_\mathsf{b/tx} = 2$
and $T \geq n_t$. They can be interpreted from either
of the points of view just discussed. From this, we
see that full diversity codes necessitate exponential
growth in the constellation size. On the other hand, 
by giving up only a fixed fraction of the total
available diversity gain, a bounded
constellation size is possible\footnote{In actuality,
	the code construction will require $\abs{\mathcal{A}} > L$
	(see Definition \ref{LRS-Code-Definition} in Section
	\ref{LRS-Def-Sec}). Whether
	this bound is fundamental is connected to some open
	problems. The reader is referred to the relevant
	discussions at the end of \cite{Umberto-Multishot}.}.

\subsection{Low-SNR Error Performance}

Recall that diversity gain is only a 
characterization of the error performance
in the limit of large SNR. 
In \cite{Vucetic-Ch2}, it is argued that a
full diversity requirement 
becomes increasingly less important as
the total available diversity gain increases.
Analytical arguments and simulation evidence are
provided with trellis codes and in a single-block setting.
In particular, it may be the case that a code 
which is not full diversity
code has relatively few codeword pairs for which
the sum of ranks \eqref{Complex-Sum-Rank-Distance}
is minimal. 
With that said, we will not attempt to analyze the distance distributions
of the proposed codes nor connect them to rate--diversity
optimality. 
Rather,
we will simply demonstrate in simulation
that the proposed codes can have good low-SNR
performance. 

Finally, we remark that what
constitutes a low SNR is relative and one need
not be nominally interested in low-SNR performance
to back away from full diversity codes. We can
replace the term \textit{low} with \textit{finite}
in the previous discussion and it will hold. This
will also be reflected by the simulation results.

\section{Code Construction}\label{Code-Construction-Sec}

This section will begin with the required
background on sum-rank codes after which
we define linearized Reed--Solomon codes
in Section \ref{LRS-Def-Sec}. 
Following this, we provide some background
and results regarding rank-metric-preserving
maps in Section \ref{RMPM-Sec} which
are applied in Section \ref{Construction-Sec}
to yield the main code construction.

\subsection{Background on Sum-Rank Codes}
\label{Sum-Rank-Codes-Section}

The sum-rank metric was introduced 
in the context of multishot network coding in \cite{Sum-Rank-Metric}.
We shall define it adopting the more recent 
framework of \cite{Umberto-Multishot,Umberto-MR-LRC}.
The space--time coding analogue of the sum-rank metric 
occurring in \eqref{Complex-Sum-Rank-Distance} appears earlier 
in \cite{Rate-Diversity-Tradeoff-General,Special-Case-Conf,Lu-Kumar}. 
We will use the terms \textit{metric} and \textit{distance} interchangeably.

The finite field with $q$ elements
(with $q$ a prime power)
will be denoted by $\mathbb{F}_q$. 
We adopt the convention that 
$\mathbb{F}_q^s = \mathbb{F}_q^{1 \times s}$.

The extension field $\mathbb{F}_{q^m}$ 
forms an $m$-dimensional 
vector space over the base field $\mathbb{F}_q$. 
Therefore, there exists a vector space isomorphism 
between $\mathbb{F}_{q^m}$ and $\mathbb{F}_{q}^m$.
We will accordingly define a matrix representation 
of the elements of $\mathbb{F}_{q^m}^s$ with respect
to some choice of basis. 
Let $\mathcal{B} = (\beta_1, \beta_2, \dots, \beta_m)$ 
be an ordered basis of $\mathbb{F}_{q^m}$ over $\mathbb{F}_q$. 
Then, any $\mathbf{c} \in \mathbb{F}_{q^m}^s$ can be written as 
\begin{align*}
\mathbf c 
& =  
\begin{bmatrix}
\sum_{i=1}^m \beta_i c_{i1} 
& \sum_{i=1}^m \beta_i c_{i2} 
& \cdots 
& \sum_{i=1}^m \beta_i c_{is}
\end{bmatrix}\\
&= 
\sum_{i=1}^m \beta_i 
\begin{bmatrix} c_{i1} & c_{i2} & \cdots & c_{is} \end{bmatrix}\\
&= \sum_{i=1}^m \beta_i \mathbf{c}_i
\end{align*}
where $\mathbf{c}_i = \begin{bmatrix} c_{i1} & c_{i2} & \cdots & c_{is}
\end{bmatrix} \in \mathbb{F}_{q}^s$ contains the $i$th coordinates of
the representations of the $s$ elements 
of $\mathbb{F}_{q^m}$ in $\mathbf{c}$ in terms of 
the basis $\mathcal B$ for $i=1,2,\dots,m$.

\begin{definition}[Matrix Representation Map \cite{Umberto-MR-LRC}]
	Let $\mathcal{B} = (\beta_1, \beta_2, \dots, \beta_m)$ 
	be an ordered basis of $\mathbb{F}_{q^m}$ over $\mathbb{F}_q$.
	The matrix representation of any 
	$\mathbf{c} \in \mathbb{F}_{q^m}^s$ 
	with respect to $\mathcal B$ is given by the 
	\textit{matrix representation map}
	${M_\mathcal{B}\colon\mathbb{F}_{q^m}^s 
		\longrightarrow \mathbb{F}_{q}^{m\times s}}$ defined by
	\begin{equation*}
	M_\mathcal{B}\mleft(\sum_{i=1}^m \beta_i\mathbf{c}_i\mright) = 
	\begin{bmatrix}
	c_{11} & c_{12} & \cdots & c_{1s} \\
	c_{21} & c_{22} & \cdots & c_{2s} \\		
	\vdots & \vdots & \ddots & \vdots \\		
	c_{m1} & c_{m2} & \cdots & c_{ms} \\				
	\end{bmatrix}
	\end{equation*}
	where $\mathbf{c} = \sum_{i=1}^m \beta_i \mathbf{c}_i$
	and
	$\mathbf{c}_i = \begin{bmatrix} c_{i1} & c_{i2} & \cdots & c_{is}
	\end{bmatrix} \in \mathbb{F}_{q}^s$ for $i=1,2,\dots,m$.
\end{definition}

	For some fixed positive integer $N$, we define a 
	\mbox{\textit{sum-rank length partition} \cite{Umberto-MR-LRC}}
	as a choice of a positive integer $L$
	and ordered positive integers $r_1,r_2,\dots,r_L$ such that 
	${N = r_1 + r_2 + \cdots + r_L}$. For a fixed sum-rank length partition, 
	any $\mathbf{c} \in \mathbb{F}_{q^m}^N$ can be partitioned as 
	$$\mathbf c = \begin{bmatrix} \mathbf{c}^{(1)} & 
	\mathbf{c}^{(2)} & \cdots & \mathbf{c}^{(L)} \end{bmatrix}$$ 
	where $\mathbf{c}^{(\ell)} \in \mathbb{F}_{q^m}^{r_\ell}$ for 
	$\ell = 1,2,\dots,L$. Moreover,
	it shall be understood that any 
	$\mathbf{c} \in \mathbb{F}_{q^m}^N$ 
	partitions this way once a sum-rank length partition has been
	specified.

\begin{remark*}
	\begin{sloppypar}
		The matrix representation map is $\mathbb{F}_q$-linear.
		In particular, for any ${A,B \in \mathbb{F}_q^{s\times t}}$ and 
		$\mathbf{c},\mathbf{d} \in \mathbb{F}_{q^m}^s$, we have
		$$
		M_\mathcal{B}(\mathbf c A + \mathbf d B)
		= M_\mathcal{B}(\mathbf c) A + M_\mathcal{B}(\mathbf d) B\text{.}
		$$
		Furthermore, for any $a,b \in \mathbb{F}_q$ and 
		$\mathbf{c},\mathbf{d} \in \mathbb{F}_{q^m}^s$, we have
		$$
		M_\mathcal{B}(a\mathbf c  + b\mathbf d )
		= aM_\mathcal{B}(\mathbf c) + bM_\mathcal{B}(\mathbf d)\text{.}
		$$
	\end{sloppypar}
\end{remark*}

\begin{definition}[Sum-Rank Metric \cite{Sum-Rank-Metric,
		Umberto-Multishot,Umberto-MR-LRC}]
		Let ${\mathcal{B} = (\beta_1, \beta_2, \dots, \beta_m)}$ be an ordered
		basis of $\mathbb{F}_{q^m}$ 
		over $\mathbb{F}_q$
		and fix 
		a sum-rank length partition $N = r_1 + r_2 + \cdots + r_L$ so that 
		any $\mathbf{c} \in \mathbb{F}_{q^m}^N$ partitions as
		$\mathbf c = \begin{bmatrix} \mathbf{c}^{(1)} & 
		\mathbf{c}^{(2)} & \cdots & \mathbf{c}^{(L)} \end{bmatrix}$ 
		with $\mathbf{c}^{(\ell)} \in \mathbb{F}_{q^m}^{r_\ell}$ for 
		$\ell = 1,2,\dots,L$. 
		The \textit{sum-rank weight} is
		the function
		$w_{\mathsf{SR}}\colon \mathbb{F}_{q^m}^N \longrightarrow \mathbb{N}$  
		defined by
		\begin{equation*}
		w_\mathsf{SR}(\mathbf{c}) = \sum_{\ell=1}^L
		\rank(M_\mathcal{B}(\mathbf{c}^{(\ell)}))
		\end{equation*}
		for any $\mathbf{c} \in \mathbb{F}_{q^m}^N$. 
		The 
		\textit{sum-rank metric} 
		(or distance) is the function
		${d_{\mathsf{SR}}\colon \mathbb{F}_{q^m}^N \times\mathbb{F}_{q^m}^N
			\longrightarrow \mathbb{N}}$ defined by
		\begin{align*}
		d_\mathsf{SR}(\mathbf{c},\mathbf{d}) 
		&= w_\mathsf{SR}(\mathbf{c}-\mathbf{d})\\
		&= \sum_{\ell=1}^L
		\rank(M_\mathcal{B}((\mathbf{c}-\mathbf{d})^{(\ell)}))\\
		&= 	\sum_{\ell=1}^L
		\rank(M_\mathcal{B}(\mathbf{c}^{(\ell)})-
		M_\mathcal{B}(\mathbf{d}^{(\ell)}))
		\text{.}
		\end{align*}
		for any $\mathbf{c},\mathbf{d}\in\mathbb{F}_{q^m}^N$.
\end{definition}
\begin{sloppypar}
	In what follows, unless otherwise stated,
	we shall assume some fixed but arbitrary
	sum-rank length partition 
	${N = r_1 + r_2 + \cdots + r_L}$ for some fixed positive integer
	$N$ for the purposes of defining sum-rank distances and
	weights. The reader should keep in mind the dependency of
	these quantities
	on the sum-rank length partition which we
	suppress for brevity.
	On the other hand, we need not specify a basis since the rank
	of a matrix representation is independent of the choice of basis.
\end{sloppypar}

\begin{remark*}
	\label{Sum-Rank-Metric-Is-A-Metric}
	The subadditivity of rank, i.e., 
	that for a pair of
	equal-sized matrices $A$ and $B$ over any field, we have 
	$\rank(A+B) \leq \rank(A) + \rank(B)$ can
	be used to verify that the sum-rank metric
	is indeed a metric.
\end{remark*}

\begin{sloppypar}
	Observe that when the sum-rank length partition is 
	${r_1 = r_2 = \cdots = r_N = 1}$
	(with $L = N$), 
	the sum-rank distance recovers 
	\textit{Hamming distance} \cite{Hamming} 
	since the zero column vector is the only column vector of rank zero. 
	We denote Hamming distance by $d_\mathsf{H}$. 
	At the other extreme of sum-rank length partition
	$N = r_1$ (with $L = 1$), 
	the sum-rank distance becomes the \textit{rank distance} \cite{Gabidulin-Codes}
	which we denote by $d_\mathsf{R}$. In this sense,
	the sum-rank metric generalizes the Hamming and rank metrics.
	Our particular notion of rank distance, i.e.,
	as a metric on vector spaces over extension fields,
	was first studied by Gabidulin in \cite{Gabidulin-Codes}.
	A related notion of rank distance occurs earlier in \cite{Rank-Origin}.
\end{sloppypar}

We define the \textit{minimum sum-rank distance} 
of a code ${\mathcal C \subseteq \mathbb{F}_{q^m}^N}$, denoted
$d_\mathsf{SR}(\mathcal{C})$, as
\begin{align*}\label{Minimum-Sum-Rank-Distance}
d_\mathsf{SR}(\mathcal{C})&= 
\min_{\substack{\mathbf{c},\mathbf{d} \in \mathcal{C}
		\\\mathbf{c}\neq\mathbf{d}}}
d_\mathsf{SR}(\mathbf{c},\mathbf{d})\\
&= \min_{\substack{\mathbf{c},\mathbf{d} \in \mathcal{C}
		\\\mathbf{c}\neq\mathbf{d}}}
\sum_{\ell=1}^L
\rank(M_\mathcal{B}(\mathbf{c}^{(\ell)})-
M_\mathcal{B}(\mathbf{d}^{(\ell)}))
\text{.}
\end{align*}
A code $\mathcal{C}\subseteq\mathbb{F}_{q^m}^N$is said
to be \textit{linear} if it is a $k$-dimensional subspace
of the vector space $\mathbb{F}_{q^m}^N$ over the field
$\mathbb{F}_{q^m}$. Adopting the standard concise notation,
such a code will be said to be an $[N,k]_{q^m}$ code
where the square brackets emphasize the linearity. 

Theorem \ref{Singleton-General-Theorem}, which follows, is
analogous to Theorem \ref{Rate-Diversity-Tradeoff-Theorem},
the rate--diversity tradeoff,
and may be proven in a similar manner.
\begin{theorem}[A Generalized Singleton Bound 
	\cite{LRS-Codes,Umberto-MR-LRC}]
	\label{Singleton-General-Theorem}
	For any code $\mathcal C \subseteq \mathbb{F}_{q^m}^N$, we have
	\begin{equation}\label{Singleton-General}
	\vert \mathcal{C} \vert \leq q^{m(N-d_\mathsf{SR}(\mathcal{C})+1)}\text{.}
	\end{equation}
	Moreover, if $\mathcal C$ is an $[N,k]_{q^m}$ code, 
	this is equivalent to
	\begin{equation}
	k \leq N - d_\mathsf{SR}(\mathcal{C}) + 1\text{.}
	\end{equation}
\end{theorem}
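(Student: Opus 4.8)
The plan is to adapt the classical puncturing proof of the Singleton bound for the Hamming metric, replacing the observation ``deleting a coordinate drops the Hamming weight by at most one'' with its sum-rank analogue, which will follow from the subadditivity of rank recorded in \eqref{Finer-Partitions-Preserve-Total-Rank}. Write $d = d_\mathsf{SR}(\mathcal{C})$. First I would note the trivial bound $d \leq N$: each block matrix $M_\mathcal{B}(\mathbf{c}^{(\ell)})$ is $m \times r_\ell$ and hence has rank at most $r_\ell$, so every sum-rank weight is at most $\sum_{\ell} r_\ell = N$. In particular $d - 1 \leq N - 1$, so deleting $d-1$ of the $N$ coordinates is legitimate and leaves at least one coordinate.

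The key lemma to establish is that removing a single coordinate lowers the sum-rank weight of any $\mathbf{e} \in \mathbb{F}_{q^m}^N$ by at most one. Suppose the deleted coordinate lies in block $\ell$; all other blocks are untouched, so it suffices to control $\rank(M_\mathcal{B}(\mathbf{e}^{(\ell)}))$. After a column permutation, which preserves rank, I can assume the deleted column is the last, so that $M_\mathcal{B}(\mathbf{e}^{(\ell)}) = \begin{bmatrix} A & v \end{bmatrix}$ with $A$ the punctured block matrix and $v$ the removed column. Then \eqref{Finer-Partitions-Preserve-Total-Rank} gives $\rank\mleft(\begin{bmatrix} A & v \end{bmatrix}\mright) \leq \rank(A) + \rank(v) \leq \rank(A) + 1$, so $\rank(A) \geq \rank(M_\mathcal{B}(\mathbf{e}^{(\ell)})) - 1$. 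Summing over blocks, the sum-rank weight drops by at most one per deleted coordinate, and hence by at most $d-1$ after $d-1$ coordinates have been deleted by iterating the estimate.

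Next I would fix any $d-1$ coordinate positions and let $\pi\colon \mathbb{F}_{q^m}^N \to \mathbb{F}_{q^m}^{N-d+1}$ be the associated puncturing map, which is $\mathbb{F}_q$-linear and identical for every codeword. For distinct $\mathbf{c}, \mathbf{d} \in \mathcal{C}$ the difference $\mathbf{c}-\mathbf{d}$ has sum-rank weight at least $d$; by the lemma its image under $\pi$ has weight at least $d - (d-1) = 1 > 0$, so $\pi(\mathbf{c}) \neq \pi(\mathbf{d})$. Thus $\pi$ is injective on $\mathcal{C}$, giving $\abs{\mathcal{C}} = \abs{\pi(\mathcal{C})} \leq \abs*{\mathbb{F}_{q^m}^{N-d+1}} = q^{m(N-d+1)}$, which is \eqref{Singleton-General}. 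For the linear refinement, an $[N,k]_{q^m}$ code has $\abs{\mathcal{C}} = (q^m)^k$, and substituting this into \eqref{Singleton-General} and taking logarithms base $q^m$ yields $k \leq N - d + 1$.

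I do not expect a serious obstacle, as this is essentially a direct transcription of the Hamming argument; the only point requiring care will be the bookkeeping of the partition as coordinates are removed. Since the final bound depends only on the ambient space size $q^{m(N-d+1)}$ and not on how the surviving coordinates are distributed among the blocks, the precise redistribution is immaterial, and the single nontrivial ingredient is the at-most-one-per-deletion weight estimate above.
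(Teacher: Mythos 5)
Your proof is correct. A point of comparison: the paper never actually writes out a proof of this theorem---it cites \cite{LRS-Codes,Umberto-MR-LRC} and remarks only that the result ``may be proven in a similar manner'' to the rate--diversity tradeoff, i.e., by a Singleton bound argument. What you have supplied is a complete, self-contained instance of exactly that argument. Your key lemma (deleting one coordinate lowers sum-rank weight by at most one) is the right sum-rank analogue of the Hamming puncturing step; it follows from \eqref{Finer-Partitions-Preserve-Total-Rank} because the matrix representation map acts column-wise, so puncturing one coordinate deletes exactly one column of exactly one block, and your observation that the bookkeeping of how the surviving coordinates redistribute among blocks is immaterial to the final count is accurate. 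It is worth noting that the paper's own subsequent machinery yields an even shorter route: since the rank of an $m\times 1$ column is at most one, the sum-rank weight of any vector is at most its Hamming weight, which is the inequality $d_\mathsf{SR}(\mathcal{C}) \leq d_\mathsf{H}(\mathcal{C})$ recorded in \eqref{Distance-Strengths}; combining this with the classical Singleton bound $\abs{\mathcal{C}} \leq q^{m(N-d_\mathsf{H}(\mathcal{C})+1)}$ gives \eqref{Singleton-General} in one line. That route buys brevity but presupposes the classical bound (itself proved by puncturing), whereas your argument is fully elementary and makes visible that the only ingredient is subadditivity of rank---indeed your lemma is, in effect, a re-proof of the weight comparison underlying \eqref{Distance-Strengths}. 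The only cosmetic quibble is that the puncturing map is $\mathbb{F}_{q^m}$-linear, not merely $\mathbb{F}_q$-linear, though additivity is all your injectivity step requires.
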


Reiterating, it is implicit that all claims made about sum-rank
distance are true
for a fixed but arbitrary choice of sum-rank length partition.
Codes which achieve \eqref{Singleton-General} with equality are 
said to be \textit{maximum sum-rank distance (MSRD).} 
When the sum-rank length partition
is chosen so that the sum-rank metric becomes the 
Hamming metric, 
\eqref{Singleton-General} becomes the classical 
Singleton bound \cite{Singleton} 
achieved by classical Reed--Solomon codes \cite{RS-Codes}.
MSRD thus becomes equivalent to
\textit{maximum distance separable (MDS)}.
When the sum-rank length partition 
is chosen so that the sum-rank metric becomes the
rank metric, being MSRD becomes equivalent to being 
\textit{maximum rank distance (MRD)} 
which is achieved by Gabidulin codes \cite{Gabidulin-Codes}.

By 
\eqref{Finer-Partitions-Preserve-Total-Rank}, it is easy
to see that for any $\mathbf{c},\mathbf{d}\in \mathbb{F}_{q^m}^N$,
we have
\begin{equation}
d_\mathsf{R}(\mathbf{c},\mathbf{d}) \leq 
d_\mathsf{SR}(\mathbf{c},\mathbf{d}) \leq 
d_\mathsf{H}(\mathbf{c},\mathbf{d})\text{.}
\end{equation}
One can further see that for any
$\mathcal{C}\subseteq\mathbb{F}_{q^m}^N$, we have
\begin{equation}\label{Distance-Strengths}
d_\mathsf{R}(\mathcal{C}) \leq d_\mathsf{SR}(\mathcal{C}) 
\leq d_\mathsf{H}(\mathcal C)\text{.}
\end{equation}
Straightforwardly generalizing this, we get the following result
from \cite{Umberto-MR-LRC}:

\begin{proposition}[Refinement Preserves Sum-Rank \cite{Umberto-MR-LRC}]
	\label{Finer-Partitions-Preserve-Total-Rank-Finite-Field}
	Denote by $d_\mathsf{SR}^\mathsf{coarse}$
	the sum-rank metric for the sum-rank
	length partition $N = \sum_{\ell=1}^L r_\ell$.
	Let $r_\ell = \sum_{i=1}^{L_\ell} r_{\ell i}$
	for $\ell = 1,2,\dots,L$.
	Denote by $d_\mathsf{SR}^\mathsf{fine}$
	the sum-rank metric for the sum-rank length
	partition $N = \sum_{\ell=1}^L\sum_{i=1}^{L_\ell} r_{\ell i}$.
	For any code $\mathcal{C}\subseteq \mathbb{F}_{q^m}^N$, we have
	that
	\begin{equation}
	d_\mathsf{SR}^\mathsf{coarse}(\mathcal{C})
	\leq
	d_\mathsf{SR}^\mathsf{fine}(\mathcal{C})\text{.}
	\end{equation}
\end{proposition}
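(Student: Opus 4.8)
The plan is to prove the statement first at the level of individual codeword pairs and then pass to the minimum. That is, I would first establish the \emph{pointwise} bound $d_\mathsf{SR}^\mathsf{coarse}(\mathbf{c},\mathbf{d}) \le d_\mathsf{SR}^\mathsf{fine}(\mathbf{c},\mathbf{d})$ for every pair $\mathbf{c},\mathbf{d} \in \mathbb{F}_{q^m}^N$. Granting this, the claimed inequality on minimum distances is immediate: choosing a pair $\mathbf{c}^{\ast},\mathbf{d}^{\ast} \in \mathcal{C}$ that attains $d_\mathsf{SR}^\mathsf{fine}(\mathcal{C})$, we get $d_\mathsf{SR}^\mathsf{coarse}(\mathcal{C}) \le d_\mathsf{SR}^\mathsf{coarse}(\mathbf{c}^{\ast},\mathbf{d}^{\ast}) \le d_\mathsf{SR}^\mathsf{fine}(\mathbf{c}^{\ast},\mathbf{d}^{\ast}) = d_\mathsf{SR}^\mathsf{fine}(\mathcal{C})$, where the first step merely uses that a minimum over all distinct pairs is no larger than the value at one particular pair.

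The heart of the argument is the pointwise bound, which I would prove one coarse block at a time. Fix $\mathbf{c},\mathbf{d}$ and set $\mathbf{e} = \mathbf{c} - \mathbf{d}$. The crucial structural feature is that the refinement respects the coarse boundaries: since $r_\ell = \sum_{i=1}^{L_\ell} r_{\ell i}$, the fine partition subdivides each coarse block $\mathbf{e}^{(\ell)} \in \mathbb{F}_{q^m}^{r_\ell}$ internally into consecutive sub-blocks $\mathbf{e}^{(\ell,1)},\dots,\mathbf{e}^{(\ell,L_\ell)}$ without any sub-block straddling a coarse boundary. Because $M_\mathcal{B}$ records the basis coordinates of each entry column by column, the matrix $M_\mathcal{B}(\mathbf{e}^{(\ell)})$ is exactly the horizontal concatenation of the $M_\mathcal{B}(\mathbf{e}^{(\ell,i)})$, all sharing the same $m$ rows. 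Applying the subadditivity of rank under horizontal concatenation \eqref{Finer-Partitions-Preserve-Total-Rank} iteratively across the $L_\ell$ sub-blocks then gives $\rank(M_\mathcal{B}(\mathbf{e}^{(\ell)})) \le \sum_{i=1}^{L_\ell}\rank(M_\mathcal{B}(\mathbf{e}^{(\ell,i)}))$.

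Summing this bound over $\ell = 1,\dots,L$ reproduces the two sum-rank weights on the nose and yields the pointwise inequality $d_\mathsf{SR}^\mathsf{coarse}(\mathbf{c},\mathbf{d}) = \sum_\ell \rank(M_\mathcal{B}(\mathbf{e}^{(\ell)})) \le \sum_\ell \sum_i \rank(M_\mathcal{B}(\mathbf{e}^{(\ell,i)})) = d_\mathsf{SR}^\mathsf{fine}(\mathbf{c},\mathbf{d})$. I do not expect a genuine obstacle here; the proposition is really the common generalization of the chain \eqref{Distance-Strengths}, recovered by taking the coarse partition to be the single-block (rank) partition and the fine partition to be the all-ones (Hamming) partition. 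The only point demanding care is the bookkeeping in the previous paragraph---verifying that the fine sub-blocks of coarse block $\ell$ constitute an exact column partition of $M_\mathcal{B}(\mathbf{e}^{(\ell)})$---but this follows directly from the definitions of the refinement and of the matrix representation map.
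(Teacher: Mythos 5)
Your proposal is correct and matches the paper's (implicit) argument: the paper presents this proposition as the straightforward generalization, via the horizontal-concatenation rank subadditivity \eqref{Finer-Partitions-Preserve-Total-Rank} applied blockwise to the matrix representations, of the chain \eqref{Distance-Strengths}---exactly the pointwise-bound-then-minimize route you take. Your added care in checking that the fine sub-blocks give an exact column partition of each $M_\mathcal{B}(\mathbf{e}^{(\ell)})$ and in passing to the minimum via a pair attaining the fine minimum simply makes explicit what the paper leaves to the reader.
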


A crucial aspect of Theorem \ref{Singleton-General-Theorem}
is that we have the same bound regardless of the choice
of sum-rank length partition---coarsening the metric does 
not lead to a codebook size penalty. Theorem 
\ref{Finer-Partitions-Preserve-Total-Rank-Finite-Field} 
thus results in the following:

\begin{corollary}
	\label{Finer-Partitions-Preserve-Total-Rank-Finite-Field-Corollary}
	Let $d_\mathsf{SR}^\mathsf{coarse}$ 
	and 
	$d_\mathsf{SR}^\mathsf{fine}$ be as
	previously defined.
	If $\mathcal{C}\subseteq \mathbb{F}_{q^m}^N$ 
	is MSRD with respect to $d_\mathsf{SR}^\mathsf{coarse}$,
	then it is also MSRD with respect to 
	$d_\mathsf{SR}^\mathsf{fine}$.
\end{corollary}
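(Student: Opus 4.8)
The plan is to exploit the fact that the generalized Singleton bound of Theorem~\ref{Singleton-General-Theorem} takes the same form---$q^{m(N - d_\mathsf{SR}(\mathcal{C}) + 1)}$---for \emph{every} sum-rank length partition of the same ambient length $N$. Since the coarse and fine partitions both partition the common integer $N$ and both refer to the same ambient space $\mathbb{F}_{q^m}^N$, and hence to the same code $\mathcal{C}$ with the same cardinality $\vert\mathcal{C}\vert$, the only quantity that varies between the two instances of the bound is the minimum distance. This makes the two bounds directly comparable, and I would sandwich $\vert\mathcal{C}\vert$ between them.

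Concretely, I would start from the MSRD hypothesis for the coarse metric, namely $\vert\mathcal{C}\vert = q^{m(N - d_\mathsf{SR}^\mathsf{coarse}(\mathcal{C}) + 1)}$, and then apply Theorem~\ref{Singleton-General-Theorem} to the \emph{fine} metric to obtain $\vert\mathcal{C}\vert \leq q^{m(N - d_\mathsf{SR}^\mathsf{fine}(\mathcal{C}) + 1)}$. Chaining these yields
\begin{equation*}
q^{m(N - d_\mathsf{SR}^\mathsf{coarse}(\mathcal{C}) + 1)} \leq q^{m(N - d_\mathsf{SR}^\mathsf{fine}(\mathcal{C}) + 1)},
\end{equation*}
and since $q^m \geq 2 > 1$ the exponential is strictly increasing, so taking logarithms base $q^m$ and cancelling $N+1$ gives $d_\mathsf{SR}^\mathsf{fine}(\mathcal{C}) \leq d_\mathsf{SR}^\mathsf{coarse}(\mathcal{C})$. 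Proposition~\ref{Finer-Partitions-Preserve-Total-Rank-Finite-Field} supplies the reverse inequality $d_\mathsf{SR}^\mathsf{coarse}(\mathcal{C}) \leq d_\mathsf{SR}^\mathsf{fine}(\mathcal{C})$, so the two minimum distances coincide. Substituting this equality back into the MSRD hypothesis then shows $\vert\mathcal{C}\vert = q^{m(N - d_\mathsf{SR}^\mathsf{fine}(\mathcal{C}) + 1)}$, i.e., $\mathcal{C}$ meets the Singleton bound for the fine metric and is therefore MSRD with respect to $d_\mathsf{SR}^\mathsf{fine}$.

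I do not expect any genuine obstacle here: the argument collapses to a one-line sandwich once the right two ingredients---the partition-invariance of the bound's form and the refinement monotonicity of Proposition~\ref{Finer-Partitions-Preserve-Total-Rank-Finite-Field}---are in hand. The only point requiring a moment's care is confirming that both instances of the Singleton bound share the same $N$ and the same $\vert\mathcal{C}\vert$, which they do, since refining a partition alters neither the code nor its ambient length. This ensures that the inequality on cardinalities transfers cleanly into an inequality on distances, rather than confounding a change in codebook size with a change in minimum distance.
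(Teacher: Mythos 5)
Your proof is correct and follows essentially the same route as the paper, which derives the corollary by combining the partition-independence of the generalized Singleton bound (Theorem~\ref{Singleton-General-Theorem}) with the refinement monotonicity of Proposition~\ref{Finer-Partitions-Preserve-Total-Rank-Finite-Field}. The paper leaves the sandwich implicit in the surrounding text, whereas you spell it out by first deducing $d_\mathsf{SR}^\mathsf{fine}(\mathcal{C}) \leq d_\mathsf{SR}^\mathsf{coarse}(\mathcal{C})$ from the cardinality comparison and then closing with the reverse inequality, but the ingredients and logic are identical.
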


As special cases of this corollary, we have the following:
\begin{itemize}
	\item If $\mathcal C \subseteq \mathbb{F}_{q^m}^N$ is MRD,
	then $\mathcal{C}$ is MSRD for \textit{any} sum-rank length partition.
	\item 	If $\mathcal C \subseteq \mathbb{F}_{q^m}^N$ is MSRD, 
	then $\mathcal{C}$ is MDS.
\end{itemize}

So, if $\mathcal{C} \subseteq \mathbb{F}_{q^m}^N$ is MRD, then 
$d_\mathsf{R}(\mathcal{C}) = d_\mathsf{SR}(\mathcal{C})
= d_\mathsf{H}(\mathcal{C}) = N-\log_{q^m}\abs{\mathcal{C}}+1$
making it MSRD and MDS.
In light of this, it is natural to ask 
why one might use an MSRD code over an MRD code. 
The
answer lies in the following results from 
\cite{Umberto-MR-LRC}:

\begin{theorem}[Extension Degree Bound
	\cite{Umberto-MR-LRC}]\label{Extension-Degree-Bound}
	Let ${\mathcal C \subseteq \mathbb{F}_{q^m}^N}$ 
	be an MSRD code for the sum-rank length partition 
	${r_1 = r_2 = \cdots = r_L = N/L}$ and let 
	$d_\mathsf{SR}(\mathcal{C}) > 1$, then
	\begin{equation}\label{Extension-Degree-Bound-Inequality}
	m \geq \frac{N}{L}\text{.}
	\end{equation}
\end{theorem}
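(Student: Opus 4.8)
The plan is to assume the contrary, $m < N/L$, and to derive a contradiction by bounding $\abs{\mathcal{C}}$ from above in two complementary regimes. Writing $r = N/L$, each block has width $r$, so every matrix representation $M_\mathcal{B}(\mathbf{c}^{(\ell)})$ is an $m \times r$ matrix over $\mathbb{F}_q$ of rank at most $\min\{m,r\} = m$; in particular the contribution of any single block to the sum-rank weight is at most $m$. Since $\mathcal{C}$ is MSRD, Theorem \ref{Singleton-General-Theorem} holds with equality, $\abs{\mathcal{C}} = q^{m(N - d + 1)}$ where $d = d_\mathsf{SR}(\mathcal{C}) \geq 2$. I would then fix the projection $P$ deleting the first block and partition $\mathcal{C}$ into fibers $\mathcal{C}_\mathbf{v} = \{\mathbf{c} \in \mathcal{C} : P(\mathbf{c}) = \mathbf{v}\}$ indexed by $\mathbf{v} \in P(\mathcal{C}) \subseteq \mathbb{F}_{q^m}^{N-r}$, so that $\abs{\mathcal{C}} = \sum_\mathbf{v} \abs{\mathcal{C}_\mathbf{v}}$, and handle the two regimes $d > m$ and $d \leq m$ separately.

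In the first regime, $d > m$, I would argue that two codewords in a common fiber agree on all blocks but the first, so their sum-rank distance equals the rank of their first-block difference, which is at most $m < d$; minimality of $d$ then forces each fiber to be a singleton, so $P$ is injective and $\abs{\mathcal{C}} = \abs{P(\mathcal{C})}$. Now $P(\mathcal{C})$ is itself a sum-rank code with $L-1$ equal blocks of width $r$, and since deleting one block lowers the weight of any difference by at most $m$, its minimum distance is at least $d - m \geq 1$. Applying Theorem \ref{Singleton-General-Theorem} to $P(\mathcal{C})$ gives $\abs{\mathcal{C}} = \abs{P(\mathcal{C})} \leq q^{m((L-1)r - (d-m) + 1)}$, and comparing exponents with $\abs{\mathcal{C}} = q^{m(Lr - d + 1)}$ collapses to $r \leq m$, the desired contradiction.

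In the second regime, $d \leq m$, I would instead bound the two factors separately. The number of fibers is at most $\abs{\mathbb{F}_{q^m}^{(L-1)r}} = q^{m(L-1)r}$. Within a fixed fiber, the first-block matrices $\{M_\mathcal{B}(\mathbf{c}^{(1)}) : \mathbf{c} \in \mathcal{C}_\mathbf{v}\} \subseteq \mathbb{F}_q^{m \times r}$ are pairwise at rank distance at least $d$ (again because the other blocks coincide), i.e.\ they form a rank-metric code of minimum rank distance $\geq d$. The key step is to invoke the rank-metric Singleton bound in the orientation that deletes $d-1$ of the $m$ rows: the rank of a nonzero difference drops by at most $d-1$ and hence survives, so the restriction stays injective and $\abs{\mathcal{C}_\mathbf{v}} \leq q^{r(m - d + 1)}$. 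Multiplying yields $\abs{\mathcal{C}} \leq q^{m(L-1)r + r(m - d + 1)}$, and comparing with $q^{m(Lr - d + 1)}$ reduces to $m(1-d) \leq r(1-d)$; since $d \geq 2$ this again forces $m \geq r$, a contradiction. As both regimes are impossible, $m \geq N/L$.

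The main obstacle, and the reason for the split, is the fiber bound in the second regime: naively applying Theorem \ref{Singleton-General-Theorem} to a single width-$r$ block only gives $q^{m(r-d+1)}$, which is too weak to force $m \geq r$. One must use the transposed orientation $q^{r(m-d+1)}$, exploiting $r > m$, and this is precisely where the assumed asymmetry $m < r$ is consumed. The complementary regime $d > m$ cannot use this bound at all (the fibers would be forced to contain fewer than one element), so there the redundancy must instead be charged against the remaining $L-1$ blocks through the projected code. The two facts underpinning the cases, that $P(\mathcal{C})$ has minimum distance at least $d - m$ and that each fiber is a genuine rank-metric code of distance at least $d$, are both routine once the block structure is used, so I expect the only delicate point to be selecting the correct orientation of the rank-metric Singleton bound.
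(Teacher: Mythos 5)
Your proof is correct. One thing to note up front: the paper does not supply its own proof of Theorem \ref{Extension-Degree-Bound}---the statement is imported verbatim from \cite{Umberto-MR-LRC}---so the only comparison available is with the argument in that reference. Both of your regimes check out. For $d > m$: under the contradiction hypothesis $m < r$ (writing $r = N/L$) we have $d \leq L\min\{m,r\} = Lm$, so $L \geq 2$ automatically and your projected code $P(\mathcal{C})$ is a genuine code on $L-1$ blocks with at least two codewords (since $\abs{\mathcal{C}} = q^{m(N-d+1)} \geq q^m$ and $P$ is injective); then $d_\mathsf{SR}(P(\mathcal{C})) \geq d-m \geq 1$, and Theorem \ref{Singleton-General-Theorem} yields $m(Lr-d+1) \leq m((L-1)r-(d-m)+1)$, i.e.\ $r \leq m$. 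For $d \leq m$: each fiber maps injectively to a rank-metric code in $\mathbb{F}_q^{m\times r}$ of minimum rank distance at least $d$, your row-deletion argument is a correct derivation of the Delsarte--Singleton bound $q^{r(m-d+1)}$ in the transposed orientation, and the exponent comparison $m(1-d) \leq r(1-d)$ with $d \geq 2$ forces $m \geq r$. Your diagnosis of where the asymmetry is consumed is also exactly right: the $\mathbb{F}_{q^m}$-orientation $q^{m(r-d+1)}$ of the bound can never produce a contradiction, since pigeonhole shows some fiber meets it, so one must switch to the matrix-space orientation that exploits $r > m$. This mechanism---localizing to a single block and playing the two orientations of the rank-metric Singleton bound against each other---is the same one underlying the proof in \cite{Umberto-MR-LRC}; what your write-up adds beyond it is the explicit case split at $d = m$, which cleanly handles, without any linearity assumption, the degenerate situation where the minimum distance exceeds the largest rank a single block can carry (there the single-block reduction collapses to singleton fibers, and your regime-one accounting against the remaining $L-1$ blocks closes the case).
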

\begin{corollary}\label{MRD-Extension-Degree-Bound}
	If $\mathcal C \subseteq \mathbb{F}_{q^m}^N$ is MRD 
	and $d_\mathsf{R}(\mathcal{C}) > 1$, then $m \geq N$.
\end{corollary}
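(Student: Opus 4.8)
The plan is to obtain this as the $L = 1$ specialization of Theorem \ref{Extension-Degree-Bound}. First I would invoke the identification established earlier in this section: the rank metric is precisely the sum-rank metric associated to the trivial sum-rank length partition $N = r_1$ (with $L = 1$), and under this identification being MRD is equivalent to being MSRD. Consequently, an MRD code $\mathcal{C} \subseteq \mathbb{F}_{q^m}^N$ is an MSRD code for the length partition with $L = 1$ and $r_1 = N$, which is exactly the uniform partition $r_1 = r_2 = \cdots = r_L = N/L$ appearing in Theorem \ref{Extension-Degree-Bound} when $L = 1$ (and $N/L = N$ is trivially an integer).

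Next I would check the remaining hypothesis. Since the $L = 1$ sum-rank metric coincides with the rank metric, we have $d_\mathsf{SR}(\mathcal{C}) = d_\mathsf{R}(\mathcal{C})$, so the assumption $d_\mathsf{R}(\mathcal{C}) > 1$ supplies precisely the condition $d_\mathsf{SR}(\mathcal{C}) > 1$ required by the theorem. With both hypotheses in place, applying Theorem \ref{Extension-Degree-Bound} with $L = 1$ gives $m \geq N/L = N$, which is the claimed bound.

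The only point demanding any care is the bookkeeping that matches the MRD hypothesis to the MSRD hypothesis of the theorem through the $L = 1$ identification of the two metrics; once this is in place there is no genuine obstacle, as the corollary is a direct specialization rather than an independent argument. I would remark in passing that this recovers the expected fact that an MRD code of minimum distance exceeding one forces an extension degree at least as large as its length, which is exactly the contrast with the multiblock ($L > 1$) setting motivating the surrounding discussion: a genuinely multiblock MSRD code can live over a strictly smaller field than any MRD code of the same length.
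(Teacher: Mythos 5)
Your proof is correct and matches the paper's intended argument: the corollary is stated in the paper as an immediate consequence of Theorem \ref{Extension-Degree-Bound}, obtained exactly as you do, by specializing to $L = 1$, where the uniform partition becomes $r_1 = N$, MSRD becomes MRD, and $d_\mathsf{SR}(\mathcal{C}) = d_\mathsf{R}(\mathcal{C})$. Your bookkeeping of the hypotheses under this identification is exactly the step the paper leaves implicit.
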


Thus, MRD codes impose a larger field extension degree. 
In particular, 
for the purposes of our space--time code construction, in the case of
$T \geq n_t$, we will let $m = T$ and $N = Ln_t$. 
Corollary \ref{MRD-Extension-Degree-Bound} then
becomes the requirement that $T \geq Ln_t$ 
for constructions based on MRD codes
such as those in \cite{Lu-Kumar,Special-Case-Conf}.
In the case
of $T \leq n_t$, we will let $m = n_t$ and $N = LT$ in which case
Corollary \ref{MRD-Extension-Degree-Bound} becomes an
$n_t \geq LT$ requirement for constructions based on MRD codes.

Linearized Reed--Solomon codes, to be introduced
in the next subsection, are a recently introduced \cite{LRS-Codes} 
family of linear MSRD codes which can achieve the field extension degree bound
\eqref{Extension-Degree-Bound-Inequality} with equality. This will enable
multiblock space--time codes constructed from them to achieve $T = n_t$,
or more generally, any relationship between $T$ and $n_t$. 

We conclude this section
with some comments on generator matrices. 
A \textit{generator matrix} 
of an $[N,k]_{q^m}$ code
is a full rank matrix $G \in \mathbb{F}_{q^m}^{k\times N}$
whose row space is the code. A \textit{systematic} 
generator matrix is one of the form
$G = 
\begin{bmatrix}
I_k & P
\end{bmatrix}$
where  
$I_k \in \mathbb{F}_{q^m}^{k\times k}$
denotes the $k\times k$ identity matrix
and 
$P \in \mathbb{F}_{q^m}^{k\times (N-k)}$.
Clearly, 
any linear code has a generator
matrix which is some column permutations
away from a systematic one. While column
permutations change the row space and hence
the code, they are
isometries of the Hamming
and rank metrics making systematicity a non-issue
when these are the metrics of interest. 
However, it is easy to see that 
column permutations are not in general
isometries of the sum-rank metric.

With that said,
this will be a non-issue because for 
any MSRD code, there exists a systematic
generator matrix and, more generally,
a generator matrix with the $k$ columns
of $I_k$ occurring \textit{anywhere}
in it. This follows from the fact
that MSRD codes are MDS and these
are well-known properties of MDS
codes (see, e.g., \cite{Roth-Ch11}).

\subsection{Linearized Reed--Solomon Codes}\label{LRS-Def-Sec}

Linearized Reed--Solomon codes
were recently introduced by 
Mart\'{i}nez-Pe\~{n}as in \cite{LRS-Codes}.
These codes
are closely connected to skew polynomial evaluation codes
\cite{Siyu-Dissertation,Siyu-Skew-Evaluation-Codes,Skew-Other}.
Their significance to us should be clear at this point
so we will limit our task in this section to providing 
the definition.

In particular, we will provide a 
specialization 
of the definition of linearized 
Reed--Solomon codes in \cite{LRS-Codes}
which is sufficient for the purposes of 
our space--time code
construction.

We begin by specializing the sum-rank length partition to 
${r_1 = r_2 = \cdots = r_L = N/L}$
maintaining this henceforth.
Define the field automorphism 
$\sigma\colon\mathbb{F}_{q^m} 
\longrightarrow \mathbb{F}_{q^m}$
by $\sigma(a) = a^q$
for all $a \in \mathbb{F}_{q^m}$.

\begin{definition}[Truncated norm \cite{Umberto-MR-LRC}]
	Define $N_i\colon\mathbb{F}_{q^m}
	\longrightarrow\mathbb{F}_{q^m}$ by
	\begin{equation*}
	N_i(a) = \sigma^{i-1}(a)\sigma^{i-2}(a)\cdots\sigma(a)a\text{.}
	\end{equation*}
	for all	$a \in \mathbb{F}_{q^m}$ and all $i \in \mathbb{N}$.
\end{definition}

\begin{definition}[Linear Operator (Composition)
	\cite{Umberto-MR-LRC,LRS-Codes}]
	For some fixed $a \in \mathbb{F}_{q^m}$, 
	define the $\mathbb{F}_q$-linear operator
	$\mathcal{D}_a^i\colon\mathbb{F}_{q^m}\longrightarrow\mathbb{F}_{q^m}$ 
	by
	\begin{equation*}
	\mathcal{D}_a^i (b) = \sigma^i(b)N_i(a)
	\end{equation*} 
	for all $b \in \mathbb{F}_{q^m}$ and all $i \in \mathbb{N}$.
\end{definition}

\begin{remark*}
	We have $\mathcal{D}_a^1(b) = \sigma(b)a$ and 
	$\mathcal{D}_a^1 \circ \mathcal{D}_a^i = \mathcal{D}_a^{i+1}$
	for all $i \in \mathbb{N}$. 
\end{remark*}

For an $[N,k]_{q^m}$ code
with generator matrix
$G \in\mathbb{F}_{q^m}^{k \times N}$, it shall be understood
that the generator matrix partitions according to the sum-rank
length partition as
\begin{equation*}
G = \begin{bmatrix}G_1 & G_2 
& \cdots & G_L\end{bmatrix}
\end{equation*}
with $G_\ell \in \mathbb{F}_{q^m}^{k \times N/L}$ for $\ell=1,2,\dots,L$ 
referred to as \textit{sub-codeword generators}.

\begin{definition}[Linearized Reed--Solomon Code (Special Case)
	\cite{LRS-Codes}]\label{LRS-Code-Definition}
	\begin{sloppypar}
		
		Let ${q > L}$, ${m \geq N/L}$, $\alpha$ be a primitive element of 
		$\mathbb{F}_{q^m}$, and 
		$\mathcal{B} = (\beta_1, \beta_2, \dots, \beta_m)$ 
		be an ordered basis of $\mathbb{F}_{q^m}$ over $\mathbb{F}_q$.
		A linearized Reed--Solomon code is an $[N,k]_{q^m}$
		code with sub-codeword generators
		\begin{equation*}
		G_\ell = 
		\begin{bmatrix}
		\beta_1 & \beta_2 & \cdots & \beta_{N/L} \\
		
		\mathcal{D}^1_{\alpha^{\ell-1}}(\beta_1) & 
		\mathcal{D}^1_{\alpha^{\ell-1}}(\beta_2) & \cdots &
		\mathcal{D}^1_{\alpha^{\ell-1}}(\beta_{N/L})\\
		\mathcal{D}^2_{\alpha^{\ell-1}}(\beta_1) & 
		\mathcal{D}^2_{\alpha^{\ell-1}}(\beta_2) & \cdots &
		\mathcal{D}^2_{\alpha^{\ell-1}}(\beta_{N/L})\\
		\vdots & \vdots & \ddots & \vdots \\
		\mathcal{D}^{k-1}_{\alpha^{\ell-1}}(\beta_1) &
		\mathcal{D}^{k-1}_{\alpha^{\ell-1}}(\beta_2) & \cdots &
		\mathcal{D}^{k-1}_{\alpha^{\ell-1}}(\beta_{N/L})\\
		\end{bmatrix}
		\end{equation*}
		for $\ell = 1,2,\dots,L$.
	\end{sloppypar}
\end{definition}
\begin{theorem}[MSRD Property \cite{LRS-Codes}]\label{MSRD-Property}
	An $[N,k]_{q^m}$ 
	linearized Reed--Solomon code $\mathcal{C}$ achieves 
	$k = N - d_\mathsf{SR}(\mathcal{C}) + 1$.
\end{theorem}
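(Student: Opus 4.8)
The plan is to establish the MSRD property by proving the lower bound $d_\mathsf{SR}(\mathcal{C}) \geq N - k + 1$, since the matching upper bound is already furnished by the generalized Singleton bound of Theorem \ref{Singleton-General-Theorem}. Equivalently, it suffices to show that every nonzero codeword $\mathbf{c}$ has sum-rank weight $w_\mathsf{SR}(\mathbf{c}) \geq N - k + 1$, or contrapositively, that no nonzero codeword can have sum-rank weight less than $N - k + 1$. A codeword arises as $\mathbf{c} = \mathbf{f} G$ where $\mathbf{f} = (f_0, f_1, \dots, f_{k-1}) \in \mathbb{F}_{q^m}^k$ is the message vector, and the structure of the sub-codeword generators $G_\ell$ identifies each sub-codeword entry as the evaluation of the skew polynomial $f(x) = \sum_{t=0}^{k-1} f_t x^t$ (under the operators $\mathcal{D}_{\alpha^{\ell-1}}^t$) at the basis element $\beta_j$. **First I would** make this evaluation-map interpretation precise: the $\ell$th sub-codeword of $\mathbf{c}$ is the image under $M_\mathcal{B}$ of the vector whose $j$th coordinate is $\sum_{t=0}^{k-1} f_t \mathcal{D}_{\alpha^{\ell-1}}^t(\beta_j) = \sum_t f_t \sigma^t(\beta_j) N_t(\alpha^{\ell-1})$, which is an $\mathbb{F}_q$-linearized operator applied to $\beta_j$.

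**The next step** is to translate the sum-rank weight into a statement about these linearized/skew-polynomial evaluation operators. For each block $\ell$, the rank of $M_\mathcal{B}(\mathbf{c}^{(\ell)})$ over $\mathbb{F}_q$ equals the $\mathbb{F}_q$-dimension of the $\mathbb{F}_q$-span of the evaluations $\{D_\ell(\beta_1), \dots, D_\ell(\beta_{N/L})\}$, where $D_\ell$ is the $\mathbb{F}_q$-linear operator $b \mapsto \sum_{t=0}^{k-1} f_t \sigma^t(b) N_t(\alpha^{\ell-1})$. Since $\mathcal{B}$ is a basis, the rank in block $\ell$ is therefore $\dim_{\mathbb{F}_q}(\operatorname{im} D_\ell) = m - \dim_{\mathbb{F}_q}(\ker D_\ell)$, so that $w_\mathsf{SR}(\mathbf{c}) = \sum_{\ell=1}^{L}\bigl(m - \dim_{\mathbb{F}_q}\ker D_\ell\bigr)$. **The crux** is then to bound the total kernel dimension $\sum_\ell \dim_{\mathbb{F}_q}\ker D_\ell$ from above by $k-1$ whenever $\mathbf{f} \neq 0$; combined with $N/L = r_\ell$ and $N = Lm$ in the balanced case this would give exactly $w_\mathsf{SR}(\mathbf{c}) \geq N - (k-1)$.

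**The hard part will be** this kernel bound, which is precisely the theory of roots of skew polynomials over distinct conjugacy classes developed by Lam--Leroy and exploited by Mart\'{i}nez-Pe\~{n}as. The key structural facts I would invoke are: (i) the operator $D_\ell$ is the evaluation of the skew polynomial $f$ of degree at most $k-1$ associated with the conjugacy class determined by $\alpha^{\ell-1}$; (ii) within a single conjugacy class, the $\mathbb{F}_q$-dimension of the space of roots of a skew polynomial is bounded by its degree; and (iii) crucially, the elements $1, \alpha, \dots, \alpha^{L-1}$ lie in \emph{distinct} nontrivial conjugacy classes of $\mathbb{F}_{q^m}$ under $\sigma$, which is exactly why the hypothesis $q > L$ is needed (it guarantees enough distinct classes, via the norms $N_{(q^m-1)/(q-1)}$ separating the representatives). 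The product/summation theorem for skew polynomial roots across distinct conjugacy classes then yields $\sum_{\ell=1}^{L} \dim_{\mathbb{F}_q}\ker D_\ell \leq \deg f \leq k-1$ for $f \neq 0$. **Finally I would** assemble the inequality $w_\mathsf{SR}(\mathbf{c}) \geq Lm - (k-1) = N - k + 1$, conclude $d_\mathsf{SR}(\mathcal{C}) \geq N - k + 1$, and invoke Theorem \ref{Singleton-General-Theorem} for the reverse inequality, giving $k = N - d_\mathsf{SR}(\mathcal{C}) + 1$. In practice, since Definition \ref{LRS-Code-Definition} is stated as a special case of the construction in \cite{LRS-Codes}, the cleanest route is to observe that this specialization is literally an instance of the codes proved MSRD there, so the proof reduces to verifying that our parameter choices satisfy the hypotheses ($q > L$ ensuring distinct conjugacy-class representatives, $m \geq N/L$) and citing the root-counting machinery rather than reproving it.
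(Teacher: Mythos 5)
Your proposal is correct in outline, but it cannot be said to follow the paper's route, because the paper has no proof of this statement to follow: Theorem~\ref{MSRD-Property} is stated purely as a cited result of \cite{LRS-Codes}, and Definition~\ref{LRS-Code-Definition} is deliberately set up as a special case of the codes proved MSRD there. What you have done is reconstruct the argument \emph{inside} the cited reference: identify the codeword entries as skew-polynomial evaluations, reduce the weight bound to counting root-space dimensions, invoke the Lam--Leroy theory that the root spaces of a nonzero skew polynomial of degree at most $k-1$, summed over distinct $\sigma$-conjugacy classes, have total dimension at most $k-1$ (with $q>L$ guaranteeing that $1,\alpha,\dots,\alpha^{L-1}$ represent distinct classes, there being exactly $q-1$ nonzero classes), and close with the generalized Singleton bound of Theorem~\ref{Singleton-General-Theorem}. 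That is the right argument and the right division of labor, and your closing remark---that one may simply verify the hypotheses and cite \cite{LRS-Codes}---is exactly what the paper does.

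One step of your sketch needs repair. You compute the rank of block $\ell$ as $\dim_{\mathbb{F}_q} D_\ell(\mathbb{F}_{q^m}) = m-\dim_{\mathbb{F}_q}\ker D_\ell$ and then total the weight using $N=Lm$. This silently assumes $m=N/L$, whereas Definition~\ref{LRS-Code-Definition} only requires $m\geq N/L$, and the strict case is the typical one (in the SRA construction $m=T$ and $N/L=n_t$ with $T\geq n_t$). When $m>N/L$, the evaluations occur only at $\beta_1,\dots,\beta_{N/L}$, whose $\mathbb{F}_q$-span $V$ is a proper subspace of $\mathbb{F}_{q^m}$, so the block-$\ell$ rank is $\dim_{\mathbb{F}_q} D_\ell(V)=N/L-\dim_{\mathbb{F}_q}(\ker D_\ell\cap V)$; your expression $m-\dim_{\mathbb{F}_q}\ker D_\ell$ can exceed this, and the weight identity you wrote fails. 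The fix is immediate and worth stating: since $\dim_{\mathbb{F}_q}(\ker D_\ell\cap V)\leq\dim_{\mathbb{F}_q}\ker D_\ell$, the same Lam--Leroy bound $\sum_{\ell=1}^{L}\dim_{\mathbb{F}_q}\ker D_\ell\leq k-1$ gives $w_\mathsf{SR}(\mathbf{c})\geq\sum_{\ell=1}^{L}\bigl(N/L-\dim_{\mathbb{F}_q}\ker D_\ell\bigr)\geq N-k+1$, after which your appeal to Theorem~\ref{Singleton-General-Theorem} finishes the proof for all admissible $m$.
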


\subsection{Background on Rank-Metric-Preserving Maps}\label{RMPM-Sec}

A variety of methods
\cite{Lu-Kumar,ST-Gaussian-Integers,Arbitrary,
	Sven,PSK-Criterion,QAM-Criterion,Gauss-Int-Map-Is-RMP,Lu-AM-PSK} 
exist for obtaining space--time codes from codes over finite fields. 
We set aside those focused on binary fields because 
linearized Reed--Solomon codes are only interesting when nonbinary 
due to the $q > L$ requirement. We further set aside those which do
not lead to explicit constructions. This leaves over 
the methods in
\cite{Arbitrary,ST-Gaussian-Integers,Gabidulin-Space-Time,Sven, Lu-Kumar}.
In \cite{Arbitrary}, the authors propose a 
framework which encompasses as special cases mappings to the 
Gaussian \cite{ST-Gaussian-Integers,Gabidulin-Space-Time} 
and Eisenstein \cite{Sven} integers and find that, with the exception of when a 
small PSK constellation is required, the method in \cite{Lu-Kumar} is outperformed.
We accordingly define a notion of rank-metric-preserving map only general enough to
subsume these important special cases. 

\begin{definition}[Rank-Metric-Preserving Map]
	Let $q$ be a prime power, $\mathcal{A}$
	be a constellation of size $q$, and  
	$\phi\colon \mathbb{F}_{q} \longrightarrow \mathcal{A}$
	be a bijection.
	Define 
	$\tilde\phi\colon\mathbb{F}_{q}^{n_t\times T} 
	\longrightarrow \mathcal{A}^{n_t\times T}$ 
	to be the corresponding entrywise map
	which is the bijection defined,
	for all $C \in \mathbb{F}_q^{n_t \times T}$, by 
	$(\tilde\phi(C))_{ij} = \phi((C)_{ij})$ for 
	$i = 1,2,\dots,n_t$ and
	$j = 1,2,\dots, T$.
	The map $\phi$ is said to be \textit{rank-metric-preserving} 
	if
	\begin{equation*}
	\rank(\tilde\phi(C)-\tilde\phi(D)) \geq \rank(C-D)
	\end{equation*}
	for all 
	$C,D \in \mathbb{F}_{q}^{n_t\times T},\,C\neq D$.
\end{definition}

We will next introduce 
the Gaussian and Eisenstein 
integers and recall some
useful properties. Some will be
essential to the encoding
and decoding procedures for
the proposed codes and others
will be relevant to answering
existential questions.

In what follows, 
$\imath$ will denote the imaginary unit and $\omega$
will denote the primitive cube root of unity
\begin{equation*}
\omega = \exp(\frac{2\pi\imath}{3}) = -\frac{1}{2} 
+ \imath\frac{\sqrt{3}}{2}\text{.}
\end{equation*}

The \textit{Gaussian integers} which we denote by $\mathbb{G}$ 
are defined by \cite{Conway-Sloane-Ch2}
\begin{equation*}
\mathbb{G} = \mathbb{Z}[\imath] = \{a+b\imath \mid a,b\in \mathbb{Z}\}\text{.}
\end{equation*}
Noting that as a real vector space, $\mathbb{C}$ is isomorphic to $\mathbb{R}^2$, we see that, additively, $\mathbb{G}$ is 
isomorphic to the square lattice ${\mathbb{Z}^2 \subseteq \mathbb{R}^2}$, 
i.e., the set of integral linear combinations of
$\begin{bmatrix} 1 & 0 \end{bmatrix}^\intercal$ and 
$\begin{bmatrix} 0 & 1 \end{bmatrix}^\intercal$.
The \textit{Eisenstein integers} which we denote by $\mathbb{E}$ are defined by
\cite{Conway-Sloane-Ch2}
\begin{equation*}
\mathbb{E} = \mathbb{Z}[\omega] = \{a+b\omega \mid a,b\in \mathbb{Z}\}\text{.}
\end{equation*}
Additively, $\mathbb{E}$ is 
isomorphic to the hexagonal lattice in $\mathbb{R}^2$ spanned by the integral 
linear combinations of
$\begin{bmatrix} 1 & 0 \end{bmatrix}^\intercal$ and 
$\begin{bmatrix} -1/2 & \sqrt{3}/2 \end{bmatrix}^\intercal$.

\begin{figure}[t]
	\centering
	\includegraphics[width=\columnwidth]{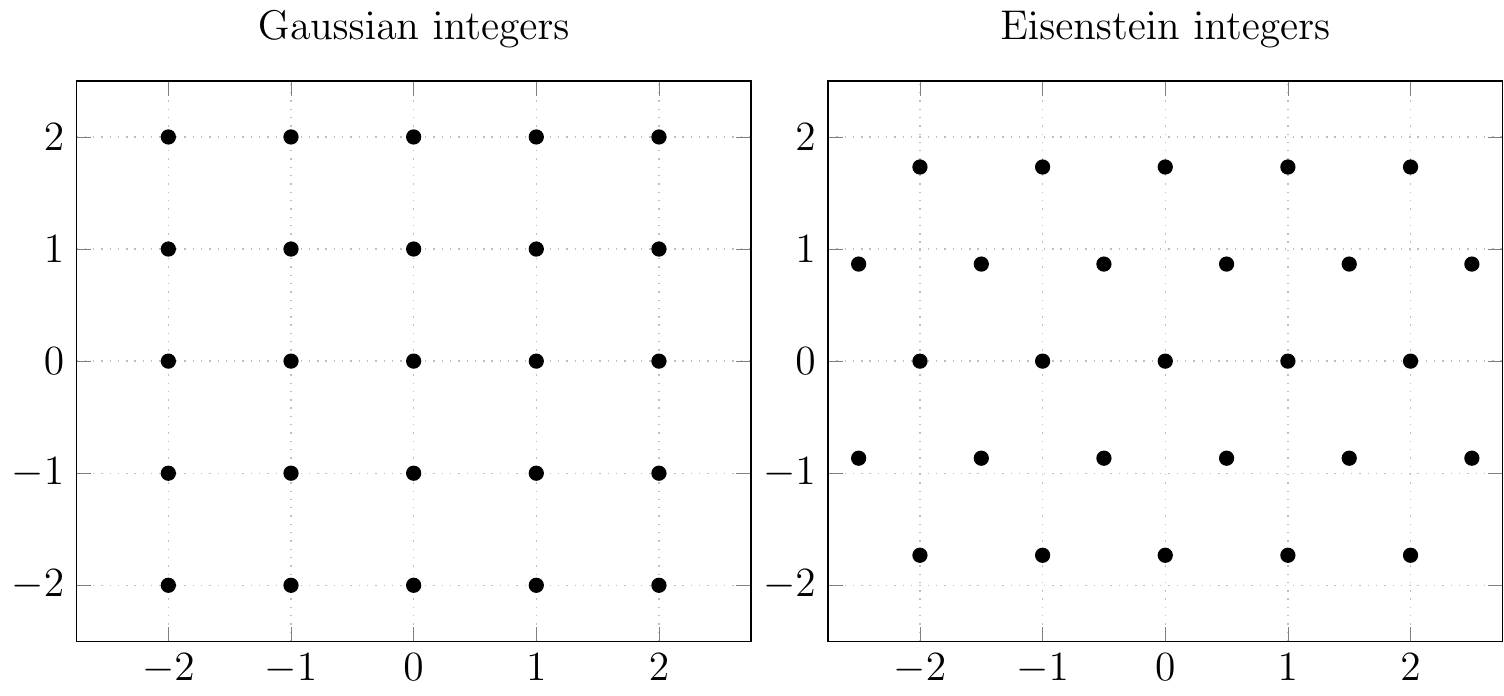}
	\caption{Some Gaussian and Eisenstein integers in 
		the complex plane}
\end{figure}

Additively and multiplicatively, the Gaussian and Eisenstein integers
share many of the properties of the usual integers $\mathbb{Z}$. In particular,
they are archetypal examples of rings other than $\mathbb{Z}$
that are not fields and admit a Euclidean division
\cite{Fraleigh-Ch47,NTA-Ch3}. 
This allows finite fields to be easily constructed as their
residue classes just as the prime field $\mathbb{F}_p$
can be constructed as a set of
residue classes of the usual integers $\mathbb{Z}/p\mathbb{Z}$.

In what follows, we will make claims and provide definitions involving
both the Gaussian and Eisenstein integers. Some will apply to
both and some will be specific to one or the other.
We will use $\Lambda$ to refer to any one of
$\mathbb{G}$ or $\mathbb{E}$ when the definition or result applies to both.

We define a \textit{quantization function} 
$Q_\Lambda \colon \mathbb{C} \longrightarrow
\Lambda$ as in \cite{LNC-Eis} by
\begin{equation}\label{Quantization-Function}
Q_\Lambda(z) = \argmin_{\lambda \in \Lambda} \abs{z-\lambda}
\end{equation}
for all $z\in \mathbb{C}$. We will comment later on
the well-definedness of this function.
Importantly, the quantization can be performed efficiently
for the sets under consideration. For $\Lambda = \mathbb{G}$, we 
have, for all $z\in\mathbb{C}$, \cite{LRA}
\begin{equation*}
Q_\mathbb{G}(z) = 
\left[\Re{z}\right] + \imath\left[\Im{z}\right]
\end{equation*}
where $\left[\cdot\right]$ denotes the usual operation of 
rounding to the nearest integer. For $\Lambda = \mathbb{E}$,
we have the following algorithm from \cite{LNC-Eis}: For any $z \in \mathbb{C}$,
compute 
\begin{align*}
\lambda_1 &= \left[\Re{z}\right] + \sqrt{-3}\left[\frac{\Im{z}}{\sqrt{3}}\right]\\
\lambda_2 &= 
\left[\Re{z-\omega}\right] + 
\sqrt{-3}\left[\frac{\Im{z-\omega}}{\sqrt{3}}\right] + \omega
\end{align*}
which are the nearest points to $z$ in two sets which partition 
$\mathbb{E}$. We then have
\begin{equation*}
Q_\mathbb{E}(z) = \argmin_{\lambda \in \{\lambda_1,\lambda_2\}} \abs{z-\lambda}
\text{.}
\end{equation*}

We will be interested in quantization to a 
sub-lattice $\Lambda' \subseteq \Lambda$
obtained as $\Lambda' = \Pi \Lambda$ where $\Pi \in \Lambda \setminus \{0\}$.
In this case, we have 
\begin{equation*}
Q_{\Lambda'}(z) = 
Q_{\Pi\Lambda}(z) = \Pi \cdot Q_\Lambda\mleft(\frac{z}{\Pi}\mright)
\end{equation*}
for all $z\in\mathbb{C}$. Finally, given $\Lambda$ and
$\Pi \in \Lambda \setminus \{0\}$, we define a
\textit{modulo function} 
$\modulo_{\Pi\Lambda}\colon \Lambda \longrightarrow \Lambda$
as in \cite{Sven}
by
\begin{align*}
\modulo_{\Pi\Lambda}(z) 
&= z - Q_{\Pi\Lambda}(z)\\
&= z -  \Pi \cdot Q_\Lambda\mleft(\frac{z}{\Pi}\mright)
\end{align*} 
for all $z \in \Lambda$. The modulo function
provides a unique representative of the coset in 
the quotient ring $\Lambda/\Pi\Lambda$
to which any $z\in \Lambda$ belongs.
Consider the constellation $\mathcal{A}_{\Pi\Lambda}$ obtained
as 
\begin{equation*}
\mathcal{A}_{\Pi\Lambda}
= 
\{\modulo_{\Pi\Lambda}(z) \mid z \in \Lambda \}\text{.}
\end{equation*}
It can be shown (see, e.g., \cite{Sven,LRA})
that
\begin{equation*}
	\abs{\mathcal{A}_{\Pi\Lambda}} = \abs{\Lambda/\Pi\Lambda} = \abs{\Pi}^2\text{.}
\end{equation*}
Moreover, by the fact
that if $\Pi = rs$ for 
some $r,s\in \Lambda$, then 
$\abs{\Pi}^2 = \abs{r}^2\abs{s}^2$, 
we have the following:
\begin{proposition*}
	If $\abs{\Pi}^2$ is prime in $\mathbb{Z}$,
	then $\Pi$ is prime in $\Lambda$.
\end{proposition*}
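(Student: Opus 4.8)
The plan is to exploit the multiplicativity of $\abs{\cdot}^2$ recorded just above together with the fact, also noted in the surrounding text, that $\Lambda$ (being $\mathbb{G}$ or $\mathbb{E}$) admits a Euclidean division. The latter makes $\Lambda$ a Euclidean domain, hence a principal ideal domain and a unique factorization domain; in such a ring an element is prime exactly when it is irreducible. I would therefore reduce the claim to showing that $\Pi$ is irreducible and then invoke this equivalence.

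First I would record two elementary facts about the norm. Since $\abs{z}^2$ takes values in the nonnegative integers and satisfies $\abs{rs}^2 = \abs{r}^2\abs{s}^2$ for all $r,s \in \Lambda$, an element $u \in \Lambda$ is a unit if and only if $\abs{u}^2 = 1$: if $uv = 1$ then $\abs{u}^2\abs{v}^2 = 1$ forces $\abs{u}^2 = 1$, and conversely $\abs{u}^2 = 1$ means $u\bar u = 1$, where the complex conjugate $\bar u$ again lies in $\Lambda$ (for $\mathbb{G}$ because $\overline{a+b\imath} = a - b\imath$, and for $\mathbb{E}$ because $\bar\omega = -1-\omega \in \mathbb{E}$), so $u$ is invertible in $\Lambda$.

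Next I would establish irreducibility directly. Because $\abs{\Pi}^2$ is a prime integer it is at least $2$, so $\Pi$ is neither zero nor a unit. Suppose $\Pi = rs$ with $r,s \in \Lambda$. Applying the norm gives $\abs{\Pi}^2 = \abs{r}^2\abs{s}^2$, a factorization of a prime in $\mathbb{Z}$, so one of $\abs{r}^2, \abs{s}^2$ equals $1$; by the previous step the corresponding factor is a unit. Hence every factorization of $\Pi$ exhibits a unit factor, i.e., $\Pi$ is irreducible, and therefore prime in $\Lambda$.

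I do not anticipate a genuine obstacle here: the only points requiring care are the verification that complex conjugation preserves $\Lambda$ (needed to characterize the units as precisely the norm-one elements) and the explicit appeal to the Euclidean, and hence unique factorization, structure in passing from irreducibility to primality, since these two notions need not coincide in a general integral domain.
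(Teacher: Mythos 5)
Your proof is correct and follows essentially the same route as the paper, which justifies the proposition solely by the multiplicativity $\abs{\Pi}^2 = \abs{r}^2\abs{s}^2$ of the norm under a factorization $\Pi = rs$. You have simply filled in the details the paper leaves implicit: that norm-one elements are exactly the units, that this yields irreducibility, and that irreducibility implies primality because $\Lambda$ is Euclidean (a fact the paper records earlier).
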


If we take $\Pi$ to be a prime in $\Lambda$, then $\Lambda/\Pi\Lambda$
will be isomorphic to $\mathbb{F}_{\abs{\Pi}^2}$ and the same
will be true of the constellation $\mathcal{A}_{\Pi\Lambda}$ under the appropriate 
arithmetic. 
Note that the converse of the previous proposition is not true;
primes in $\Lambda$ corresponding
to non-prime squared absolute value can exist and will allow for 
isomorphisms to finite fields of prime power sizes
(refer to, e.g., \cite{Arbitrary,Sven,LNC-Eis}).
However, we will limit ourselves to prime fields for simplicity.
Importantly, constellations corresponding to prime $\Pi$
give rise to rank-metric-preserving maps.
We will proceed to give some useful
properties of these constellations.

\begin{proposition}\label{Set-Membership-Remark}
	If $z\in\Lambda$, then $z\in\mathcal{A}_{\Pi\Lambda}$
	if and only if $\modulo_{\Pi\Lambda}(z) = z$. Equivalently, 
	if $z\in\Lambda$,
	then $z\in\mathcal{A}_{\Pi\Lambda}$ if and only if
	$Q_{\Lambda}\mleft(z/\Pi\mright) = 0$.
\end{proposition}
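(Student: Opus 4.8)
The plan is to reduce the entire statement to the \emph{idempotency} of the modulo map, which will in turn rest on a single translation-invariance property of the nearest-point quantizer. First I would dispose of the ``equivalently'' clause: since $\Pi \neq 0$, the equation $\modulo_{\Pi\Lambda}(z) = z$ unwinds by definition to $z - \Pi\cdot Q_\Lambda(z/\Pi) = z$, which holds if and only if $Q_\Lambda(z/\Pi) = 0$. Hence the two characterizations coincide and it suffices to prove the first biconditional. The reverse implication is then immediate: if $z \in \Lambda$ satisfies $\modulo_{\Pi\Lambda}(z) = z$, then $z$ is exhibited as $\modulo_{\Pi\Lambda}$ applied to an element of $\Lambda$, namely $z$ itself, so $z \in \mathcal{A}_{\Pi\Lambda}$ directly from the definition of the constellation.

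The substance is the forward implication, and here the key fact I would isolate as a lemma is the lattice-shift invariance of the quantizer, namely that $Q_\Lambda(\zeta + \lambda) = Q_\Lambda(\zeta) + \lambda$ for every $\zeta \in \mathbb{C}$ and every $\lambda \in \Lambda$. Granting this, suppose $z \in \mathcal{A}_{\Pi\Lambda}$, so that $z = \modulo_{\Pi\Lambda}(w) = w - \Pi\cdot Q_\Lambda(w/\Pi)$ for some $w \in \Lambda$. Dividing through by $\Pi$ gives $z/\Pi = (w/\Pi) - Q_\Lambda(w/\Pi)$, and applying the lemma with the lattice element $\lambda = -Q_\Lambda(w/\Pi)$ yields
\begin{equation*}
Q_\Lambda(z/\Pi) = Q_\Lambda(w/\Pi) - Q_\Lambda(w/\Pi) = 0\text{.}
\end{equation*}
Therefore $\modulo_{\Pi\Lambda}(z) = z - \Pi\cdot Q_\Lambda(z/\Pi) = z$, which is exactly what is needed; this simultaneously settles the equivalent $Q_\Lambda(z/\Pi) = 0$ formulation by the reduction in the first paragraph. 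Conceptually, this is just the assertion already foreshadowed in the text that $\modulo_{\Pi\Lambda}$ returns the \emph{same} canonical representative for every element of a given coset of $\Pi\Lambda$, so an element that is itself a representative is fixed.

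The hard part will be proving the lemma cleanly, since the $\argmin$ defining $Q_\Lambda$ is genuinely multivalued on the boundaries of the Voronoi cells---precisely the well-definedness caveat the text defers. In the interior of a cell the invariance is automatic from the translation symmetry of the lattice, so the only real work is to confirm that the tie-breaking on boundaries commutes with integer shifts. Rather than resolve which boundary point $\argmin$ selects in general, I would verify the invariance directly from the explicit formulas already supplied. For $\Lambda = \mathbb{G}$ one has $Q_\mathbb{G}(\zeta) = [\Re{\zeta}] + \imath[\Im{\zeta}]$, and adding $\lambda = a + b\imath \in \mathbb{G}$ shifts the two real coordinates by the integers $a$ and $b$; provided $[\cdot]$ rounds half-integers in a fixed direction along the real line, it obeys $[x + n] = [x] + n$ for all $n \in \mathbb{Z}$, so the formula commutes with the shift and the invariance follows. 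The corresponding check for $\Lambda = \mathbb{E}$ proceeds identically from its two-candidate rounding algorithm. This makes the tie-breaking convention immaterial to the invariance and closes the argument.
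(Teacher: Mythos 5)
Your proof is correct, and it matches the paper's intended justification: the paper states this proposition without proof, treating it as immediate from the preceding observation that $\modulo_{\Pi\Lambda}$ assigns a unique coset representative to each element of $\Lambda/\Pi\Lambda$ (with the tie-breaking caveat deferred), and your lattice-shift-invariance lemma $Q_\Lambda(\zeta+\lambda) = Q_\Lambda(\zeta)+\lambda$ is precisely the content of that coset-invariance claim. Your verification of the lemma from the explicit rounding formulas, including the requirement that $[\cdot]$ break half-integer ties in a fixed direction, is a careful handling of exactly the technicality the paper flags and sets aside.
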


We now briefly digress to address a technicality.
In order for Proposition \ref{Set-Membership-Remark}
to be true, we must either have there be no lattice
points
on the boundary of the Voronoi region of zero 
for the quantizer $Q_{\Pi\Lambda}$ or we should 
choose tie-breaking rules for \eqref{Quantization-Function}
carefully as to guarantee that the modulo function
yields unique coset representatives. In this paper,
we will not be considering constellations where this 
is an issue.
We refer the reader to
\cite{NTRU-over-the-Eisenstein-Integers}
for a treatment of this fine point since it
is otherwise important.

Observe that for any $\Pi = a + b\imath\in \mathbb{G}\setminus\{0\}$, we have
$\abs{\Pi}^2 = \abs{a+b\imath}^2 = a^2 + b^2$, and for
any $\Pi = a + b\omega\in \mathbb{E}\setminus\{0\}$, we have
$\abs{\Pi}^2 = \abs{a+b\omega}^2 = a^2 - ab + b^2$. In light 
of this, the following classical results of
number theory will characterize 
the constellation sizes that we can obtain if we insist that
$\abs{\Pi}^2$ is prime in $\mathbb{Z}$. 
The first can be found in \cite{Fraleigh-Ch47}
and the second can be found in \cite{MSE1}.

\begin{theorem*}
	Let $p$ be a prime in $\mathbb{Z}$ greater than $2$. Then, $p = a^2 + b^2$
	for some $a,b\in\mathbb{Z}$ if and only if $p = 4n+1$ for
	some $n \in \mathbb{Z}$.
\end{theorem*}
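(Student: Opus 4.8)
The plan is to prove the two implications separately, with the forward direction a short congruence check and the converse carrying the substantive number theory.

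For the forward direction, I would suppose $p = a^2 + b^2$. Since $p$ is an odd prime, $a^2 + b^2$ is odd, so $a$ and $b$ have opposite parity. As every square is congruent to $0$ or $1$ modulo $4$, the even one contributes $0$ and the odd one contributes $1$, whence $p = a^2 + b^2 \equiv 1 \pmod 4$; that is, $p = 4n + 1$.

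For the converse, assume $p \equiv 1 \pmod 4$. The first key step is to produce an integer $x$ with $p \mid x^2 + 1$, i.e., $x^2 \equiv -1 \pmod p$. I would obtain this from Euler's criterion: $-1$ is a quadratic residue modulo $p$ precisely when $(-1)^{(p-1)/2} = 1$, which holds because $(p-1)/2$ is even for $p \equiv 1 \pmod 4$. (Alternatively, Wilson's theorem furnishes such an $x$ explicitly as $x = \left(\frac{p-1}{2}\right)!$.) The second key step—the natural one given the setting of this paper—is to pass to the Gaussian integers $\mathbb{G} = \mathbb{Z}[\imath]$, which, as already noted, admit a Euclidean division and hence form a unique factorization domain. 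Factoring $x^2 + 1 = (x + \imath)(x - \imath)$ in $\mathbb{G}$, I observe that $p$ divides the product but divides neither factor, since $p \mid (x \pm \imath)$ would force $p \mid 1$ upon inspecting imaginary parts. Hence $p$ is not prime in $\mathbb{G}$, so it admits a nontrivial factorization $p = \pi\rho$ with neither $\pi$ nor $\rho$ a unit. Taking the multiplicative norm $\abs{\cdot}^2$ gives $p^2 = \abs{\pi}^2\abs{\rho}^2$ with both factors exceeding $1$, forcing $\abs{\pi}^2 = p$; writing $\pi = a + b\imath$ then yields $p = \abs{\pi}^2 = a^2 + b^2$.

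I expect the main obstacle to lie in the converse, and specifically in the transition from the congruence $p \mid x^2 + 1$ to an honest representation $p = a^2 + b^2$. The existence of $x$ is a clean application of Euler's criterion, but producing the representation genuinely uses the arithmetic of $\mathbb{G}$—unique factorization together with the multiplicativity of the norm. A self-contained alternative that sidesteps factorization would be Thue's lemma, a pigeonhole argument producing $a, b$ with $0 < \abs{a}, \abs{b} < \sqrt{p}$ and $a \equiv xb \pmod p$, so that $a^2 + b^2 \equiv b^2(x^2 + 1) \equiv 0 \pmod p$ while $0 < a^2 + b^2 < 2p$ pins down $a^2 + b^2 = p$; however, the Gaussian-integer argument is more in keeping with the machinery already introduced in this section.
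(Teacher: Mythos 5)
Your proof is correct, and it follows the intended route: the paper does not prove this statement itself but quotes it as a classical result, citing Fraleigh's chapter on Gaussian integers and multiplicative norms, where the argument is exactly the one you give---Euler's criterion (or Wilson's theorem) to produce $x$ with $x^2 \equiv -1 \pmod{p}$, then the factorization $(x+\imath)(x-\imath)$ in $\mathbb{Z}[\imath]$ combined with unique factorization and multiplicativity of the norm. Both directions are sound, including the parity-based congruence check for the forward implication, so there is nothing to correct.
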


\begin{theorem*}
	Let $p$ be a prime in $\mathbb{Z}$ greater than  $3$. Then, $p = a^2 - ab + b^2$
	for some $a,b\in\mathbb{Z}$ if and only if $p = 3n+1$ for
	some $n \in \mathbb{Z}$.
\end{theorem*}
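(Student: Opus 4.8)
The plan is to treat the two implications separately, with the forward direction being an elementary congruence argument and the reverse relying on the arithmetic of the Eisenstein integers $\mathbb{E}$ introduced above.

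For the forward direction, suppose $p = a^2 - ab + b^2$. First I would complete the square via the identity $4p = (2a-b)^2 + 3b^2$, so that reducing modulo $3$ gives $p \equiv 4p \equiv (2a-b)^2 \pmod 3$. Since every square is congruent to $0$ or $1$ modulo $3$, we obtain $p \equiv 0$ or $p \equiv 1 \pmod 3$; as $p$ is a prime exceeding $3$, the case $p \equiv 0$ is impossible, leaving $p = 3n + 1$. This direction is routine.

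The reverse direction is the substantive one, and the plan is to show that $p \equiv 1 \pmod 3$ forces $p$ to factor nontrivially in $\mathbb{E}$, whence its norm supplies the representation. First, since $(\mathbb{Z}/p\mathbb{Z})^\times$ is cyclic of order $p-1$ and $3 \mid p - 1$, there is an integer $t$ whose residue has multiplicative order $3$, so that $t^2 + t + 1 \equiv 0 \pmod p$. Using $\omega^2 + \omega + 1 = 0$ in $\mathbb{E}$, I would factor $t^2 + t + 1 = (t - \omega)(t - \omega^2)$, so that $p$ divides this product in $\mathbb{E}$. Next I would check that $p$ divides neither factor: dividing $t - \omega$ or $t - \omega^2 = (t+1) + \omega$ by $p$ produces an $\omega$-coordinate equal to $-1/p$ or $1/p$, which is not an integer, so neither quotient lies in $\mathbb{E}$. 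Thus $p$ divides a product without dividing either factor and is therefore not irreducible in $\mathbb{E}$.

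To finish, I would invoke the fact (noted above) that $\mathbb{E}$ admits a Euclidean division and is hence a unique factorization domain, so reducibility of $p$ yields $p = \pi_1 \pi_2$ with neither $\pi_i$ a unit. Applying the multiplicative norm $\abs{\cdot}^2$ gives $p^2 = \abs{\pi_1}^2 \abs{\pi_2}^2$ with each factor strictly exceeding $1$, which forces $\abs{\pi_1}^2 = \abs{\pi_2}^2 = p$; writing $\pi_1 = a + b\omega$ and recalling $\abs{a + b\omega}^2 = a^2 - ab + b^2$ then yields the desired representation. The main obstacle is precisely this reverse direction, and within it the two structural inputs---the existence of a primitive cube root of unity modulo $p$ (equivalently, that $-3$ is a quadratic residue modulo $p$) and the unique factorization property of $\mathbb{E}$---which together convert the congruence $p \equiv 1 \pmod 3$ into an honest factorization over $\mathbb{E}$.
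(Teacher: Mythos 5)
Your proof is correct. Note, however, that the paper does not actually prove this theorem: it is quoted as a classical result, with the Eisenstein case deferred entirely to the cited reference \cite{MSE1}, so there is no in-paper argument to compare against. Your write-up supplies exactly the standard argument that the citation points to, and it fits the machinery the paper has already set up: the forward direction by the identity $4p = (2a-b)^2 + 3b^2$ reduced modulo $3$, and the reverse direction by producing an element of order $3$ in $(\mathbb{Z}/p\mathbb{Z})^\times$, factoring $t^2+t+1 = (t-\omega)(t-\omega^2)$ in $\mathbb{E}$, checking that $p$ divides neither factor, and then using that $\mathbb{E}$ is Euclidean (hence a UFD, so irreducible elements are prime) to conclude $p = \pi_1\pi_2$ nontrivially, with the multiplicative norm $\abs{a+b\omega}^2 = a^2-ab+b^2$ forcing $\abs{\pi_i}^2 = p$. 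The one step you pass over quickly---that an element $t$ of multiplicative order $3$ satisfies $t^2+t+1 \equiv 0 \pmod p$ because $p \nmid t-1$ and $\mathbb{Z}/p\mathbb{Z}$ is a domain---is routine, and your chain ``not a prime element $\Rightarrow$ not irreducible $\Rightarrow$ factors nontrivially'' is correctly ordered, since it is precisely the UFD property that licenses the first implication. In short: a complete and correct proof of a statement the paper leaves to the literature.
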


Thus, we are restricted to constellations of prime size $p$
where $p$ is $4n+1$ for Gaussian integer constellations or 
or $3n+1$ for Eisenstein integer constellations 
for some $n\in \mathbb{Z}$.
When $p = 12n + 1$ for some $n\in\mathbb{Z}$,
we can choose between a Gaussian and Eisenstein integer constellation.
The Eisenstein integers are more densely packed making them preferable absent any other considerations.

\begin{table}
\centering
	\begin{minipage}{0.48\columnwidth}
		\centering
                \setlength{\tabcolsep}{4pt}
		\begin{tabular}{c | c | c | c}
			$\abs{\Pi}^2$ & $\Pi$ &
			$\abs{\Pi}^2$ & $\Pi$ \\ 
			\hline
			$5$  & $2 + 1\imath$ & $97$ & $9 + 4\imath$\\
			$13$ & $3 + 2\imath$ & $101$ & $10 + 1\imath$\\
			$17$ & $4 + 1\imath$ & $109$ & $10 + 3\imath$\\
			$29$ & $5 + 2\imath$ & $113$ & $8 + 7\imath$\\
			$41$ & $5 + 4\imath$ & $157$ & $6 + 11\imath$\\
			$53$ & $7 + 2\imath$ & $241$ & $4 + 15\imath$\\
			$61$ & $6 + 5\imath$ & $257$ & $1 + 16\imath$\\
			$73$ & $8 + 3\imath$ & $373$ & $7 + 18\imath$\\
			$89$ & $8 + 5\imath$ & $389$ & $10 + 17\imath$\\
                        \multicolumn{4}{c}{ }
		\end{tabular}
		\caption{Some Gaussian primes}
	\end{minipage}\hfill
	\begin{minipage}{0.48\columnwidth}
		\centering
                \setlength{\tabcolsep}{4pt}
		\begin{tabular}{c | c | c | c }
			$\abs{\Pi}^2$ & $\Pi$ &
			$\abs{\Pi}^2$ & $\Pi$\\ 
			\hline
			$7$ & $3 + 1\omega$ & $73$ & $9 + 1\omega$\\
			$13$ & $4 + 1\omega$ & $79$ & $10 + 3\omega$\\
			$19$ & $5 + 2\omega$ & $97$ & $11 + 3\omega$\\
			$29$ & $5 + 2\omega$ & $103$ & $11 + 2\omega$\\
			$31$ & $6 + 1\omega$ & $109$ & $12 + 5\omega$\\
			$37$ & $7 + 3\omega$ & $127$ & $13 + 6\omega$\\
			$43$ & $7 + 1\omega$ & $241$ & $15 + 16\omega$\\
			$61$ & $9 + 4\omega$ & $271$ & $9 + 19\omega$\\
			$67$ & $9 + 2\omega$ & $277$ & $12 + 19\omega$\\
                        \multicolumn{4}{c}{ }
		\end{tabular}
		\caption{Some Eisenstein primes}
	\end{minipage}
\end{table}

It can be shown (see, \cite{Sven,Arbitrary}) that
$\{0,1,\dots,\abs{\Pi}^2-1\}$ are a complete set of
coset representatives so that we have
\begin{equation*}
\mathcal{A}_{\Pi\Lambda}
=\{\modulo_{\Pi\Lambda}(z) \mid z \in \{0,1,\dots,\abs{\Pi}^2-1\}\}\text{.}
\end{equation*}

Finally, we have the following theorem:

\begin{theorem}[Part of Robert  Breusch's 
	Extension of Bertrand's  Postulate \cite{Breusch}]\label{Breusch-Theorem}
	There is at 
	least one prime $p$ of the form $p = 4n+1$ for some $n\in\mathbb{Z}$
	and at least one prime $p'$ of the form $p' = 3n'+1$ for some $n'\in\mathbb{Z}$
	between $m$ and $2m$ for any $m \geq 7$.
\end{theorem}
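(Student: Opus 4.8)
The plan is to recognize this as (part of) Robert Breusch's 1932 extension of Bertrand's postulate and, rather than reproving it, to cite \cite{Breusch} directly: the statement is exactly what Breusch establishes for the moduli $3$ and $4$. A self-contained route would follow the elementary Chebyshev--Erd\H{o}s strategy adapted to arithmetic progressions. First I would reformulate the claim in terms of the restricted Chebyshev function $\vartheta(x;q,a) = \sum_{p \leq x,\ p \equiv a \pmod{q}} \log p$: the existence of a prime $\equiv a \pmod{q}$ in $(m,2m]$ is equivalent to $\vartheta(2m;q,a) - \vartheta(m;q,a) > 0$, and by the hypotheses it suffices to treat the two pairs $(q,a) = (4,1)$ and $(q,a) = (3,1)$.

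Second, I would isolate the residue class using the Dirichlet characters modulo $q$. Since $\phi(3) = \phi(4) = 2$, there are only the principal character $\chi_0$ and a single quadratic character $\chi_1$. Writing $\psi_\chi(x) = \sum_{n \leq x} \Lambda(n)\chi(n)$ and absorbing the prime-power terms into an $O(\sqrt{x}\log x)$ correction, orthogonality of characters gives
\begin{equation*}
\vartheta(2m;q,a) - \vartheta(m;q,a) = \tfrac{1}{2}\bigl[(\psi(2m)-\psi(m)) + \overline{\chi_1(a)}\,(\psi_{\chi_1}(2m)-\psi_{\chi_1}(m))\bigr] + O(\sqrt{m}\log m).
\end{equation*}
The principal-character term $\psi(2m) - \psi(m)$ is bounded below by $c\,m$ for an explicit constant $c>0$ through the standard estimates behind Bertrand's postulate, e.g.\ via the central binomial coefficient $\binom{2m}{m}$.

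Third, the crux is to bound the non-principal contribution $\psi_{\chi_1}(2m) - \psi_{\chi_1}(m)$ in absolute value by something strictly smaller than this main term for every $m \geq 7$. I expect this to be the main obstacle: analytically it rests on an explicit zero-free region or explicit formula for $L(s,\chi_1)$, whereas Breusch's achievement was to supply a purely \emph{elementary} combinatorial substitute --- products of factorials arranged so as to isolate each residue class, in place of the single binomial coefficient used for the trivial modulus. Once such a bound yields positivity for all $m \geq m_0$, the finitely many remaining cases $7 \leq m < m_0$ are settled by directly exhibiting a prime of each required form in each interval $(m,2m]$. Given that all of this is carried out carefully in \cite{Breusch}, the appropriate step here is simply to invoke that reference.
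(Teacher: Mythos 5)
Your proposal matches the paper exactly: the paper states this result as a classical theorem attributed to \cite{Breusch} and offers no proof of its own, so invoking that reference---as you ultimately do---is precisely the intended justification. Your sketch of the character-decomposition/elementary-Chebyshev route is a reasonable account of how Breusch's argument goes, but it is supplementary; the citation alone suffices here.
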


This means that for any desired constellation size,
we can always find a constellation which is at worst 
twice the size of what is required. Importantly, this
will imply the existence of codes close to the constellation
size lower bounds in Fig.~\ref{Constellation-Size-Bounds}
for arbitrary parameters.

The following theorem
consolidates results from
\cite{ST-Gaussian-Integers,Gabidulin-Space-Time,Gauss-Int-Map-Is-RMP,Arbitrary,Sven}.

\begin{theorem}[Gaussian \cite{ST-Gaussian-Integers,Gabidulin-Space-Time,Gauss-Int-Map-Is-RMP,Arbitrary}
	and Eisenstein \cite{Sven,Arbitrary} Integer Maps Are Rank-Metric-Preserving]
	\label{Gauss-Eis-Int-Map}
	Let $\Lambda$ be  $\mathbb{G}$ or 
	$\mathbb{E}$.
	Let $\Pi$ be a prime in $\Lambda$ with $\abs{\Pi}^2$ a prime
	in $\mathbb{Z}$. 
	Let $\mathbb{F}_{\abs{\Pi}^2} = \mathbb{Z}_{\abs{\Pi}^2} = \{0,1,\dots
	,\abs{\Pi}^2-1\}$. The map $\phi\colon \mathbb{F}_{\abs{\Pi}^2} 
	\longrightarrow \mathcal{A}_{\Pi\Lambda}$
	defined by $\phi(z) = \modulo_{\Pi\Lambda}(z)$ 
	for all $z\in\mathbb{F}_{\abs{\Pi}^2}$
	and evaluated in $\mathbb{C}$ is rank-metric-preserving.
\end{theorem}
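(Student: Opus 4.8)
The plan is to view $\Lambda$ as an integral domain sitting inside $\mathbb{C}$ and to recognize $\phi$ as a section of reduction modulo $\Pi$. Write $p = \abs{\Pi}^2$. Since $\Pi$ is prime in the Euclidean domain $\Lambda$, the ideal $\Pi\Lambda$ is maximal, so the quotient $\Lambda/\Pi\Lambda$ is a field isomorphic to $\mathbb{F}_p$, and the canonical reduction $\pi\colon\Lambda\to\Lambda/\Pi\Lambda\cong\mathbb{F}_p$ is a surjective ring homomorphism. The key observation is that $\phi$ is a one-sided inverse of $\pi$: for $z\in\{0,1,\dots,p-1\}$ we have $\phi(z)-z=\modulo_{\Pi\Lambda}(z)-z=-\Pi\, Q_\Lambda(z/\Pi)\in\Pi\Lambda$, so $\phi(z)$ and $z$ lie in the same coset and $\pi(\phi(z))=z$ (using that $\{0,1,\dots,p-1\}$ is a complete set of coset representatives under the identification $\Lambda/\Pi\Lambda\cong\mathbb{F}_p$).

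First I would reduce the problem to a statement about a single matrix over $\Lambda$. Fix $C,D\in\mathbb{F}_p^{n_t\times T}$ with $C\neq D$ and set $F=\tilde\phi(C)-\tilde\phi(D)$, whose entries lie in $\Lambda\subseteq\mathbb{C}$ because $\mathcal{A}_{\Pi\Lambda}\subseteq\Lambda$. Applying $\pi$ entrywise and using that $\pi$ is additive, I get $\pi((F)_{ij})=\pi(\phi((C)_{ij}))-\pi(\phi((D)_{ij}))=(C)_{ij}-(D)_{ij}$; that is, the entrywise reduction of $F$ modulo $\Pi$ is exactly $C-D$. It therefore remains to show that passing from $C-D$ over $\mathbb{F}_p$ up to $F$ over $\mathbb{C}$ cannot decrease the rank.

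This last step is the crux, and it is the standard fact that rank can only drop under reduction modulo a prime, proved via a nonvanishing minor. Let $r=\rank(C-D)$, the rank of the reduction of $F$ over $\mathbb{F}_p$; then some $r\times r$ submatrix of $C-D$ has nonzero determinant in $\mathbb{F}_p$. The determinant of the corresponding $r\times r$ submatrix $F'$ of $F$ is an element of $\Lambda$ whose image under $\pi$ is that nonzero determinant, so $\det(F')\notin\Pi\Lambda$ and in particular $\det(F')\neq 0$ in $\Lambda$, hence in $\mathbb{C}$. Thus $\rank(F)\geq r$, which is precisely the defining inequality $\rank(\tilde\phi(C)-\tilde\phi(D))\geq\rank(C-D)$. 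No subtlety arises from computing rank in $\mathbb{C}$ rather than in $\Lambda$: the rank of a matrix over $\Lambda$ is unchanged by passing to its fraction field $\mathbb{Q}(\imath)$ or $\mathbb{Q}(\omega)$, which is a subfield of $\mathbb{C}$. I expect the only real care to be in the minor argument and in pinning down that $\phi$ is genuinely a section of $\pi$; everything else is formal.
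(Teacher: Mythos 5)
Your proof is correct. The paper itself gives no proof of this theorem, deferring to the argument in \cite{Sven}; your argument---recognizing $\phi$ as a section of the reduction homomorphism $\Lambda \to \Lambda/\Pi\Lambda \cong \mathbb{F}_{\abs{\Pi}^2}$ and lifting a nonvanishing $r\times r$ minor of $C-D$ over $\mathbb{F}_{\abs{\Pi}^2}$ to a nonzero determinant in $\Lambda \subseteq \mathbb{C}$---is precisely that standard argument, so it matches the proof the paper points to.
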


A detailed proof of this theorem would be essentially identical
to the proof of a very similar theorem for $\Lambda = \mathbb{E}$
occurring in \cite{Sven} so we will omit it.

Finally, we consider another kind of rank-metric
preserving map which occurs as a special
case of the code construction technique in \cite{Lu-Kumar}.
The proof of the main result in \cite{Lu-Kumar}
can be seen to encompass a proof of the following theorem:

\begin{theorem}[$p$-PSK Map Is Rank-Metric-Preserving \cite{Lu-Kumar}]\label{p-PSK Map}
	\hphantom{a}\\Let $\mathbb{F}_p = \mathbb{Z}_{p} = \{0,1,\dots,p-1\}$
	with $p$ a prime. The
	map $\phi \colon z \mapsto \exp\mleft(\frac{\imath 2\pi z}{p}\mright)$ 
	evaluated in $\mathbb{C}$ is rank-metric-preserving.
\end{theorem}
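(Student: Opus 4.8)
The plan is to route the reduction from the $p$-PSK constellation back to $\mathbb{F}_p$ through the arithmetic of the $p$-th cyclotomic field, exploiting that $p$ is totally ramified there. Write $\zeta = \exp(2\pi\imath/p)$, so that $\phi(z) = \zeta^z$ and $M \coloneqq \tilde\phi(C) - \tilde\phi(D)$ is the matrix with entries $M_{ij} = \zeta^{c_{ij}} - \zeta^{d_{ij}}$, a matrix over the integral domain $\mathbb{Z}[\zeta] \subseteq \mathbb{C}$. The central gadget is the ring homomorphism $\psi\colon \mathbb{Z}[\zeta] \to \mathbb{F}_p$ determined by $\psi(\zeta) = 1$ and by reduction modulo $p$ on $\mathbb{Z}$. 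This is well defined because $\zeta$ has minimal polynomial $\Phi_p(x) = 1 + x + \cdots + x^{p-1}$ with $\Phi_p(1) = p \equiv 0 \pmod p$, so the assignment $x \mapsto 1$ on $\mathbb{Z}[x]$ descends to $\mathbb{Z}[x]/(\Phi_p(x)) \cong \mathbb{Z}[\zeta]$; note in particular that $\psi(1-\zeta) = 1 - 1 = 0$, and $\psi$ is surjective since it is reduction mod $p$ on integers.

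First I would factor out $\pi = 1 - \zeta$. The elementary identity $\zeta^k - 1 = -(1-\zeta)(1 + \zeta + \cdots + \zeta^{k-1})$ shows that $\pi$ divides every entry of $M$ inside $\mathbb{Z}[\zeta]$: setting $k_{ij} = (c_{ij} - d_{ij}) \bmod p$, we have $M_{ij} = \zeta^{d_{ij}}(\zeta^{k_{ij}} - 1) = \pi N_{ij}$ with $N_{ij} = -\sum_{t=0}^{k_{ij}-1}\zeta^{d_{ij}+t} \in \mathbb{Z}[\zeta]$ (and $N_{ij}=0$ when $c_{ij}=d_{ij}$). Thus $M = \pi N$ as matrices, and since $\pi \neq 0$ is a nonzero complex scalar, $\rank_\mathbb{C}(M) = \rank_\mathbb{C}(N)$.

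Next I would reduce $N$ modulo $\pi$ via $\psi$. Because $\psi(\zeta) = 1$, each entry gives $\psi(N_{ij}) = -\sum_{t=0}^{k_{ij}-1}1 = -k_{ij} = -(c_{ij} - d_{ij})$ in $\mathbb{F}_p$, so $\psi(N) = -(C - D)$ entrywise. The concluding step is the standard fact that a ring homomorphism cannot raise the rank: if $\rank_{\mathbb{F}_p}(\psi(N)) = r$, some $r \times r$ submatrix $N_{I,J}$ satisfies $\psi(\det N_{I,J}) = \det \psi(N_{I,J}) \neq 0$, whence $\det N_{I,J} \neq 0$ in $\mathbb{Z}[\zeta] \subseteq \mathbb{C}$ and so $\rank_\mathbb{C}(N) \geq r$. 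Chaining everything yields $\rank_{\mathbb{F}_p}(C-D) = \rank_{\mathbb{F}_p}(\psi(N)) \leq \rank_\mathbb{C}(N) = \rank_\mathbb{C}(M) = \rank_\mathbb{C}(\tilde\phi(C) - \tilde\phi(D))$, which is exactly the rank-metric-preserving inequality.

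The main obstacle is entirely in setting up $\psi$ correctly: one must check that it is well defined on $\mathbb{Z}[\zeta]$ (using $\Phi_p(1)\equiv 0$) and that it restricts to reduction modulo $p$ on $\mathbb{Z}$, so that the target is genuinely $\mathbb{F}_p$ and $\psi(N)$ really equals $-(C-D)$ over $\mathbb{F}_p$ rather than its image in some unrelated field; one also needs the explicit divisibility of each $M_{ij}$ by $\pi$ to guarantee $N_{ij}\in\mathbb{Z}[\zeta]$. Everything after that is routine. A purely elementary alternative, avoiding cyclotomic integers, would argue directly through minors: take a square submatrix of $C-D$ that is nonsingular over $\mathbb{F}_p$ and show the corresponding minor of $M$, after dividing by the appropriate power of $\pi$ and reducing, is nonzero. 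This merely re-derives the homomorphism argument with heavier bookkeeping, so I would prefer the $\psi$-based version.
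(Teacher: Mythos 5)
Your proof is correct. Note that the paper itself contains no proof of this theorem: it only remarks that the proof of the main result of \cite{Lu-Kumar} ``can be seen to encompass'' one, and the argument you give---passing to $\mathbb{Z}[\zeta]$, exploiting that $p$ is totally ramified so that reduction modulo $\pi = 1-\zeta$ gives a surjection onto $\mathbb{F}_p$ with $\zeta \mapsto 1$, and then using that a ring homomorphism cannot increase rank (via minors and $\psi(\det) = \det \psi$)---is precisely the cyclotomic argument underlying the cited proof, so yours is essentially the same approach, written out self-containedly. The only organizational difference is that Lu and Kumar carry out the bookkeeping minor-by-minor with $(1-\zeta)$-adic valuations, whereas you factor $\pi$ out of the entire matrix once before reducing; this is the equivalent variant you yourself flag at the end.
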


\subsection{Rate--Diversity Optimal Multiblock Space--Time Codes}
\label{Construction-Sec}

We now  provide explicit constructions
of rate--diversity optimal multiblock space--time codes
which arise as a straightforward consequence
of what has been discussed
thus far. For the cases of $T \geq n_t$ and $T \leq n_t$,
we name the constructions 
\textit{Sum-Rank A (SRA)} and \textit{Sum-Rank B (SRB)}
respectively. In the case of $T = n_t$,
the constructions differ by a transposition of
the sub-codewords and SRB will be preferable for
convenience in the decoding procedure.

\begin{proposition}[SRA Construction]\label{SRA-Construction}
	Fix positive integers $n_t$, $T$, $L$, $d$, and $q$ with 
	$T\geq n_t$, $d\leq Ln_t$, and $q$ a prime power satisfying $q > L$. 
	Let $\mathcal{A}$ be a constellation of size $q$
	and let $\phi\colon \mathbb{F}_{q} \longrightarrow \mathcal{A}$ 
	be a rank-metric-preserving map with
	$\tilde\phi\colon\mathbb{F}_{q}^{n_t\times T} 
	\longrightarrow \mathcal{A}^{n_t\times T}$ the corresponding 
	entrywise map.
	Let $\mathcal{B} = (\beta_1,\dots, \beta_T)$ 
	be an ordered basis of $\mathbb{F}_{q^T}$ over $\mathbb{F}_q$
	and ${M_\mathcal{B}\colon\mathbb{F}_{q^T}^{n_t} 
		\longrightarrow \mathbb{F}_{q}^{T\times n_t}}$
	be a matrix representation map.
	Let $G_1,\dots,G_L \in \mathbb{F}_{q^T}^{(Ln_t-d+1) 
			\times n_t}$ be the sub-codeword 
	generators of an ${[Ln_t,Ln_t-d+1]_{q^T}}$ 
	linearized Reed--Solomon code.
	Then, the $L$-block $n_t \times T$ space--time code
	$\mathcal{X}_\mathsf{SRA}$
	completely over $\mathcal{A}$ defined by
	\begin{multline*}
	\mathcal{X}_\mathsf{SRA} =\\
	\left\{\begin{bmatrix}
	\tilde\phi(M_\mathcal{B}(\mathbf{u}G_1)^{\intercal}) 
	& \cdots 
	& \tilde\phi(M_\mathcal{B}(\mathbf{u}G_L)^{\intercal})
	\end{bmatrix}
	\mmiddle|\mathbf u \in \mathbb{F}_{q^T}^{Ln_t-d+1}\right\}
	\end{multline*}
	has transmit diversity gain $d$ and is rate--diversity optimal.
\end{proposition}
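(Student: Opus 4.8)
The plan is to compute the rate directly, to bound the minimum sum-rank of $\mathcal{X}_\mathsf{SRA}$ (which, by the sum-rank criterion of Theorem \ref{Sum-Rank-Criterion-Theorem}, equals the transmit diversity gain) from below by $d$, and then to invoke the rate--diversity tradeoff of Theorem \ref{Rate-Diversity-Tradeoff-Theorem} \emph{in reverse} to certify that this lower bound is tight, thereby simultaneously obtaining the exact diversity gain and rate--diversity optimality.

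First I would fix two distinct messages $\mathbf{u},\mathbf{u}' \in \mathbb{F}_{q^T}^{Ln_t-d+1}$ producing codewords $X,X'$ with sub-codewords $X_\ell = \tilde\phi(M_\mathcal{B}(\mathbf{u}G_\ell)^\intercal)$ and $X'_\ell = \tilde\phi(M_\mathcal{B}(\mathbf{u}'G_\ell)^\intercal)$. For each block $\ell$ I chain three facts: the rank-metric-preserving property of $\phi$ gives $\rank(X_\ell - X'_\ell) \geq \rank(M_\mathcal{B}(\mathbf{u}G_\ell)^\intercal - M_\mathcal{B}(\mathbf{u}'G_\ell)^\intercal)$; rank is invariant under transposition; and the $\mathbb{F}_q$-linearity of the matrix representation map rewrites the right-hand side as $\rank(M_\mathcal{B}((\mathbf{u}-\mathbf{u}')G_\ell))$. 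Summing over $\ell$ and letting $\mathbf{c} = (\mathbf{u}-\mathbf{u}')G$ denote the associated nonzero codeword of the linearized Reed--Solomon code $\mathcal{C}$ (nonzero because $G$ has full row rank and $\mathbf{u}\neq\mathbf{u}'$), the sum equals $w_\mathsf{SR}(\mathbf{c}) \geq d_\mathsf{SR}(\mathcal{C})$. By the MSRD property (Theorem \ref{MSRD-Property}) with $N = Ln_t$ and $k = Ln_t-d+1$, one has $d_\mathsf{SR}(\mathcal{C}) = N-k+1 = d$, so $\sum_\ell \rank(X_\ell - X'_\ell) \geq d$ for every distinct pair.

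As a byproduct this lower bound is positive, so distinct $\mathbf{u}$ yield distinct codewords; the encoding map is injective and $\abs{\mathcal{X}_\mathsf{SRA}} = q^{T(Ln_t-d+1)}$. Since $\abs{\mathcal{A}} = q$, the rate is $R = \frac{1}{LT}\log_q q^{T(Ln_t-d+1)} = \frac{Ln_t-d+1}{L} = n_t - \frac{d-1}{L}$. Because $T \geq n_t$ we have $\max\{n_t/T,1\} = 1$, so this is exactly the right-hand side of \eqref{Rate-Diversity-Tradeoff-Inequality}; the code meets the tradeoff with equality.

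It remains to pin the diversity gain to exactly $d$. Let $d'$ be the true minimum sum-rank; I have shown $d' \geq d$, and by Theorem \ref{Sum-Rank-Criterion-Theorem} the code achieves transmit diversity gain $d'$. Applying Theorem \ref{Rate-Diversity-Tradeoff-Theorem} to the code with its actual gain $d'$ gives $R \leq n_t - (d'-1)/L$; comparing with $R = n_t - (d-1)/L$ forces $d' \leq d$, hence $d' = d$ and optimality follows. This last squeeze is the conceptually essential step and the main obstacle: the rank-metric-preserving property supplies only a one-sided inequality, so there is no guarantee that any particular codeword pair attains sum-rank $d$ after applying $\tilde\phi$. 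Rather than exhibiting such a pair explicitly, running the tradeoff bound in reverse certifies that the effective minimum cannot exceed $d$, which is what yields the exact diversity gain.
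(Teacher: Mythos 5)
Your proposal is correct and takes essentially the same route as the paper's proof: both obtain the lower bound $\sum_{\ell}\rank(X_\ell-X'_\ell)\geq d$ by chaining the rank-metric-preserving property, transpose invariance, and the MSRD property (Theorem \ref{MSRD-Property}), and both pin the diversity gain down to exactly $d$ by applying the rate--diversity tradeoff (Theorem \ref{Rate-Diversity-Tradeoff-Theorem}) in reverse to the code with its computed rate $R=n_t-(d-1)/L$. The paper merely compresses this squeeze into a single chain of inequalities starting and ending at $d$, while you unpack it and make explicit some details it leaves implicit (injectivity of the encoding map, $\mathbb{F}_q$-linearity of $M_\mathcal{B}$, nonzeroness of $(\mathbf{u}-\mathbf{u}')G$).
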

\begin{proof}
	We have ${\abs*{\mathcal{X}_\mathsf{SRA}} = q^{T(Ln_t-d+1)}}$ yielding rate 
	$R = n_t - (d-1)\cdot L^{-1}$ so that $(R,d)$ is an optimal rate--diversity 
	pair. We then have 
	\begin{align*}
	d &= Ln_t - LR + 1\\
	& \geq
	\min_{\substack{\mathbf{u},\mathbf{v} \in \mathbb{F}_{q^T}^{Ln_t-d+1}
			\\\mathbf{u}\neq\mathbf{v}}}
	\sum_{\ell=1}^L
	\rank(\tilde\phi(M_\mathcal{B}(\mathbf{u}G_\ell)^\intercal)-
	\tilde\phi(M_\mathcal{B}(\mathbf{v}G_\ell)^\intercal))\\
	&\geq 
	\min_{\substack{\mathbf{u},\mathbf{v} \in \mathbb{F}_{q^T}^{Ln_t-d+1}
			\\\mathbf{u}\neq\mathbf{v}}}
	\sum_{\ell=1}^L\text{rank}(M_\mathcal{B}(\mathbf{u}G_\ell)^\intercal-
	M_\mathcal{B}(\mathbf{v}G_\ell)^\intercal)\\ 
	&= \min_{\substack{\mathbf{u},\mathbf{v} \in \mathbb{F}_{q^T}^{Ln_t-d+1}
			\\\mathbf{u}\neq\mathbf{v}}}
	\sum_{\ell=1}^L
	\text{rank}(M_\mathcal{B}(\mathbf{u}G_\ell)-
	M_\mathcal{B}(\mathbf{v}G_\ell)) = d
	\end{align*}
	where the first inequality is by the rate--diversity
	tradeoff (Theorem \ref{Rate-Diversity-Tradeoff-Theorem})
	and the last equality is by the MSRD property 
	of linearized Reed--Solomon codes (Theorem \ref{MSRD-Property}).
	Therefore, $\mathcal{X}_\mathsf{SRA}$ achieves the
	rate--diversity pair $(R,d)$.
\end{proof}

We can similarly show the following:

\begin{proposition}[SRB Construction]\label{SRB-Construction}
	Fix positive integers $n_t$, $T$, $L$, $d$, and $q$ with 
	$T\leq n_t$, $d\leq LT$, and $q$ a prime power satisfying $q > L$. 
	Let $\mathcal{A}$ be a constellation of size $q$
	and let $\phi\colon \mathbb{F}_{q} \longrightarrow \mathcal{A}$ 
	be a rank-metric-preserving map with
	$\tilde\phi\colon\mathbb{F}_{q}^{n_t\times T} 
	\longrightarrow \mathcal{A}^{n_t\times T}$ the corresponding 
	entrywise map.
	Let $\mathcal{B} = (\beta_1,\dots, \beta_{n_t})$ 
	be an ordered basis of $\mathbb{F}_{q^{n_t}}$ over $\mathbb{F}_q$
	and ${M_\mathcal{B}\colon\mathbb{F}_{q^{n_t}}^{T} 
		\longrightarrow \mathbb{F}_{q}^{n_t\times T}}$
	be a matrix representation map.
	Let $G_1,\dots,G_L \in \mathbb{F}_{q^{n_t}}^{(LT-d+1) 
		\times T}$ be the sub-codeword 
	generators of an ${[LT,LT-d+1]_{q^{n_t}}}$ 
	linearized Reed--Solomon code.
	Then, the $L$-block $n_t \times T$ space--time code
	$\mathcal{X}_\mathsf{SRB}$
	completely over $\mathcal{A}$ defined by
	\begin{multline*}
	\mathcal{X}_\mathsf{SRB} =\\
	\left\{\begin{bmatrix}
	\tilde\phi(M_\mathcal{B}(\mathbf{u}G_1)) 
	& \cdots 
	& \tilde\phi(M_\mathcal{B}(\mathbf{u}G_L))
	\end{bmatrix}
	\mmiddle|\mathbf u \in \mathbb{F}_{q^{n_t}}^{LT-d+1}\right\}
	\end{multline*}
	has transmit diversity gain $d$ and is rate--diversity optimal.
\end{proposition}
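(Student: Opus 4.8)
The plan is to mirror the proof of Proposition \ref{SRA-Construction} almost verbatim, exploiting the single structural simplification that here $M_\mathcal{B}$ already returns matrices of the correct shape $n_t \times T$, so that \emph{no transposition is required}. Indeed, since $G_\ell \in \mathbb{F}_{q^{n_t}}^{(LT-d+1)\times T}$, each $\mathbf{u}G_\ell$ is a length-$T$ vector over $\mathbb{F}_{q^{n_t}}$ and $M_\mathcal{B}(\mathbf{u}G_\ell) \in \mathbb{F}_{q}^{n_t\times T}$ is directly a valid argument for $\tilde\phi$. This is the only departure from the SRA case and is what the surrounding text flags as making SRB convenient for decoding when $T = n_t$.

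First I would establish the cardinality. Writing $G = \begin{bmatrix} G_1 & \cdots & G_L\end{bmatrix}$ for the full generator matrix of the linearized Reed--Solomon code, $G$ has full row rank $LT-d+1$, so $\mathbf{u}\mapsto\mathbf{u}G$ is injective on $\mathbb{F}_{q^{n_t}}^{LT-d+1}$; composing blockwise with the vector-space isomorphism $M_\mathcal{B}$ and the bijection $\tilde\phi$ preserves injectivity, giving $\abs{\mathcal{X}_\mathsf{SRB}} = q^{n_t(LT-d+1)}$. Since $\abs{\mathcal{A}} = q$, Definition \ref{Rate-Definition} then yields $R = \tfrac{1}{LT}\log_q\abs{\mathcal{X}_\mathsf{SRB}} = n_t(LT-d+1)/(LT)$. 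Because $T\leq n_t$ forces $\max\{n_t/T,1\} = n_t/T$, substituting this $R$ into \eqref{Rate-Diversity-Tradeoff-Inequality} shows the tradeoff is met with equality, so $(R,d)$ is an optimal rate--diversity pair.

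It then remains to verify that the transmit diversity gain equals $d$. By the Sum-Rank Criterion (Theorem \ref{Sum-Rank-Criterion-Theorem}) this gain is the minimum over distinct codewords of the sum of the ranks of the sub-codeword differences, and the chain
\begin{align*}
d &= LT - \frac{LTR}{n_t} + 1\\
&\geq \min_{\substack{\mathbf{u},\mathbf{v}\in\mathbb{F}_{q^{n_t}}^{LT-d+1}\\\mathbf{u}\neq\mathbf{v}}}\sum_{\ell=1}^L \rank\mleft(\tilde\phi(M_\mathcal{B}(\mathbf{u}G_\ell))-\tilde\phi(M_\mathcal{B}(\mathbf{v}G_\ell))\mright)\\
&\geq \min_{\substack{\mathbf{u},\mathbf{v}\in\mathbb{F}_{q^{n_t}}^{LT-d+1}\\\mathbf{u}\neq\mathbf{v}}}\sum_{\ell=1}^L \rank\mleft(M_\mathcal{B}(\mathbf{u}G_\ell)-M_\mathcal{B}(\mathbf{v}G_\ell)\mright) = d
\end{align*}
closes the argument. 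Here the first equality is the algebraic substitution of the rate just computed; the first inequality is the rate--diversity tradeoff (Theorem \ref{Rate-Diversity-Tradeoff-Theorem}) applied to $\mathcal{X}_\mathsf{SRB}$; the second inequality is the rank-metric-preserving property of $\tilde\phi$; and the final equality uses $\mathbb{F}_q$-linearity of $M_\mathcal{B}$ to write the difference as $M_\mathcal{B}((\mathbf{u}-\mathbf{v})G_\ell)$ together with the MSRD property (Theorem \ref{MSRD-Property}), which gives minimum sum-rank distance $N-k+1 = LT-(LT-d+1)+1 = d$.

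I expect no genuine obstacle, since the work is essentially bookkeeping: tracking that the extension degree is now $m=n_t$ with codeword length $N=LT$, that the relevant branch of the tradeoff is $n_t/T$, and that no transpose intervenes. The one point warranting a line of care is the $\mathbb{F}_q$-linearity step: because the generators $G_\ell$ have entries in $\mathbb{F}_{q^{n_t}}$ rather than $\mathbb{F}_q$, I would not try to factor $G_\ell$ out of $M_\mathcal{B}$, but instead invoke the scalar form $M_\mathcal{B}(a\mathbf{c}+b\mathbf{d}) = aM_\mathcal{B}(\mathbf{c})+bM_\mathcal{B}(\mathbf{d})$ with $a=1$, $b=-1$ applied to the vectors $\mathbf{u}G_\ell$ and $\mathbf{v}G_\ell$ directly.
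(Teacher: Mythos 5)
Your proposal is correct and is essentially the paper's own argument: the paper proves Proposition \ref{SRA-Construction} and then asserts Proposition \ref{SRB-Construction} with ``we can similarly show,'' and your proof is precisely that mirroring---cardinality $q^{n_t(LT-d+1)}$ giving equality in \eqref{Rate-Diversity-Tradeoff-Inequality} on the $n_t/T$ branch, then the sandwich chain combining the tradeoff, the rank-metric-preserving property, and the MSRD property, with no transposes needed. Your extra remark on invoking only the $\mathbb{F}_q$-scalar form of $M_\mathcal{B}$-linearity (rather than factoring out $G_\ell$, whose entries lie in $\mathbb{F}_{q^{n_t}}$) is a valid and careful reading of a step the paper leaves implicit.
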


A consequence of the underlying codes
in the SRA and SRB constructions being MDS
is the following:

\begin{corollary*}
	Let $\mathcal{X}$
	be an $L$-block $n_t \times T$ 
	SRA or SRB code completely over $\mathcal{A}$
	with $X$ 
	sampled uniformly at random from
	$\mathcal{X}$. Then, $(X)_{ij}$ is
	uniformly distributed over $\mathcal{A}$
	for $i = 1,\dots,n_t,\,j = 1,\dots, LT$
	and we have
	\begin{equation*}
	\Expect\mleft[\lVert X \rVert_\mathsf{F}^2\mright] 
	= 
	\frac{n_tLT}{\abs{\mathcal{A}}}
	\sum_{a \in \mathcal{A}}\abs{a}^2\text{.}
	\end{equation*}
\end{corollary*}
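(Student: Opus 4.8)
The plan is to reduce the whole statement to a single marginal-uniformity fact about each entry and then invoke linearity of expectation. I will argue the SRA case; the SRB case is identical with the roles of $T$ and $n_t$ interchanged and the transpose omitted. First I would note that the assignment $\mathbf{u}\mapsto X$ is a bijection: the proof of Proposition~\ref{SRA-Construction} gives $\abs{\mathcal{X}_\mathsf{SRA}} = q^{T(Ln_t-d+1)} = \abs*{\mathbb{F}_{q^T}^{Ln_t-d+1}}$, so drawing $X$ uniformly from $\mathcal{X}$ is the same as drawing the message $\mathbf{u}$ uniformly from $\mathbb{F}_{q^T}^{Ln_t-d+1}$.

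Next I would show that each coordinate of the underlying codeword $\mathbf{u}G\in\mathbb{F}_{q^T}^{Ln_t}$ is uniform over $\mathbb{F}_{q^T}$. Writing $g_j$ for the $j$th column of $G$, the $j$th coordinate is the $\mathbb{F}_{q^T}$-linear functional $\mathbf{u}\mapsto\mathbf{u}g_j$. Since the linearized Reed--Solomon code is MSRD (Theorem~\ref{MSRD-Property}) and therefore MDS (a special case of Corollary~\ref{Finer-Partitions-Preserve-Total-Rank-Finite-Field-Corollary}), every $Ln_t-d+1$ of its generator columns are linearly independent (see, e.g., \cite{Roth-Ch11}); as $Ln_t\geq Ln_t-d+1$, each single column $g_j$ lies in some such set and is thus nonzero. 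With $g_j\neq 0$ the functional is surjective onto $\mathbb{F}_{q^T}$, so all of its fibers have the same size $q^{T(Ln_t-d)}$ and $\mathbf{u}g_j$ is uniform over $\mathbb{F}_{q^T}$.

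The remaining maps only relabel or recode entries, so uniformity propagates marginally. The $j$th column of $M_\mathcal{B}(\mathbf{u}G_\ell)$ is exactly the basis representation of the $j$th component of $\mathbf{u}G_\ell$, and the basis representation $\mathbb{F}_{q^T}\to\mathbb{F}_q^{T}$ is a bijection; hence that column is uniform over $\mathbb{F}_q^{T}$ and each of its entries is marginally uniform over $\mathbb{F}_q$. The transpose in the SRA construction merely permutes entries, and $\tilde\phi$ applies the bijection $\phi\colon\mathbb{F}_q\to\mathcal{A}$ entrywise, so $(X)_{ij}$ is uniform over $\mathcal{A}$ for every $i,j$. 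Finally, by linearity of expectation, $\Expect[\lVert X\rVert_\mathsf{F}^2]=\sum_{i,j}\Expect[\abs{(X)_{ij}}^2]=n_tLT\cdot\frac{1}{\abs{\mathcal{A}}}\sum_{a\in\mathcal{A}}\abs{a}^2$, since there are $n_tLT$ entries and each contributes the same mean square.

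I expect the only substantive step to be the marginal-uniformity argument of the second paragraph, and within it the observation that it suffices to know each generator column is nonzero---which is exactly where the MDS property enters. Everything afterward is bookkeeping: the basis representation, the transpose, and $\phi$ are all bijections, so uniformity is preserved entry by entry, and linearity of expectation removes any need to track joint distributions among the entries.
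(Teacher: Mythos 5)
Your proof is correct and matches the paper's intent: the paper states this corollary without proof, presenting it precisely as ``a consequence of the underlying codes\dots being MDS,'' and your argument fills in the standard details of that route---MDS (via Theorem~\ref{MSRD-Property} and Corollary~\ref{Finer-Partitions-Preserve-Total-Rank-Finite-Field-Corollary}) gives nonzero generator columns, hence uniform marginals of each $\mathbb{F}_{q^m}$-symbol, which propagate through the bijections $M_\mathcal{B}$, transposition, and $\tilde\phi$, and linearity of expectation finishes. Your side remark is also apt: only the nonzero-column property (no identically-zero coordinate) is actually needed, which is strictly weaker than MDS.
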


We can also interpret these constructions in 
terms of the signalling complexity perspective 
provided in Section \ref{sig-comp-subsection}.
In particular,
we have the following:

\begin{corollary*}
	Fix positive integers $n_t$, $T$, and $L$.
	Fix some $R_\mathsf{b/tx}$ and
	some $\varepsilon$ satisfying $0 \leq \varepsilon < 1$.
	Let 
	\begin{equation}\label{The-B}
	B = 
	\max\{7,\,L+1,\,\ceil{F_\varepsilon(L\cdot\min\{n_t,T\})}\}
	\end{equation}
	where $F_\varepsilon$ is as defined in \eqref{Constel-Size-LB-Function}.
	Then, there exists a constellation $\mathcal{A} \subset \mathbb{G}$
	(or $\mathcal{A} \subset \mathbb{E}$)
	such that 
	$B \leq \abs{\mathcal{A}} < 2B$
	and an $L$-block $n_t\times T$ space--time 
	code completely
	over $\mathcal{A}$ 
	achieving a transmit diversity gain of
	$d = \floor*{(1-\varepsilon)\cdot L\cdot\min\{n_t,T\}}$
	and a bpcu/tx rate of at least $R_\mathsf{b/tx}$.
\end{corollary*}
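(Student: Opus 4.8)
The plan is to realize the claim by instantiating the SRA or SRB construction over a prime-order Gaussian or Eisenstein integer constellation whose size is pinned down by Breusch's theorem, and then to verify the rate by a one-line computation that turns on the \emph{floor} in the definition of $d$. Write $s = L\cdot\min\{n_t,T\}$ and $d = \floor*{(1-\varepsilon)s}$, so that $d \leq (1-\varepsilon)s \leq s$, which is exactly the upper bound on $d$ demanded by Propositions \ref{SRA-Construction} and \ref{SRB-Construction}. (We take $(1-\varepsilon)s \geq 1$ so that $d \geq 1$; otherwise no positive diversity is being requested.)

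First I would build the constellation. Because $B \geq 7$ by \eqref{The-B}, Breusch's theorem (Theorem \ref{Breusch-Theorem}) furnishes a prime $p$ of the form $4n+1$ (and also one of the form $3n'+1$) with $B < p < 2B$. By the classical two-square theorem (respectively the $a^2-ab+b^2$ theorem) stated just before Theorem \ref{Gauss-Eis-Int-Map}, such a $p$ equals $\abs{\Pi}^2$ for some $\Pi \in \mathbb{G}$ (respectively $\Pi \in \mathbb{E}$); as $\abs{\Pi}^2 = p$ is prime in $\mathbb{Z}$, the displayed proposition preceding those theorems makes $\Pi$ prime in $\Lambda$, and the constellation $\mathcal{A} = \mathcal{A}_{\Pi\Lambda}$ satisfies $\abs{\mathcal{A}} = \abs{\Pi}^2 = p$, hence $B \leq \abs{\mathcal{A}} < 2B$ and $\mathcal{A} \subset \Lambda$. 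Theorem \ref{Gauss-Eis-Int-Map} supplies a rank-metric-preserving bijection $\phi\colon \mathbb{F}_p \to \mathcal{A}$, and $q = \abs{\mathcal{A}} = p > B \geq L+1 > L$ secures the hypothesis $q > L$. Applying Proposition \ref{SRA-Construction} when $T \geq n_t$, or Proposition \ref{SRB-Construction} when $T \leq n_t$ (either choice when $T = n_t$), then produces a rate--diversity optimal $L$-block $n_t \times T$ code completely over $\mathcal{A}$ with transmit diversity gain $d$.

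It remains to bound the rate. Rate--diversity optimality gives $R = n_t - \frac{d-1}{L}\max\mleft\{n_t/T,1\mright\}$ in both cases; since the achieved bpcu rate is $R_\mathsf{b} = R\log_2\abs{\mathcal{A}}$, the achieved bpcu/tx rate collapses, uniformly in the two cases via $s = L\cdot\min\{n_t,T\}$, to
\begin{equation*}
\frac{R_\mathsf{b}}{n_t} = \frac{R\log_2 q}{n_t} = \mleft(1 - \frac{d-1}{s}\mright)\log_2 q.
\end{equation*}
Now $q \geq \ceil*{F_\varepsilon(s)} \geq F_\varepsilon(s)$, so taking $\log_2$ in \eqref{Constel-Size-LB-Function} yields $\log_2 q \geq R_\mathsf{b/tx}/(\varepsilon + 1/s)$; and the floor gives $d \leq (1-\varepsilon)s$, whence $1 - \frac{d-1}{s} = \varepsilon + \frac{1}{s} + \frac{(1-\varepsilon)s - d}{s} \geq \varepsilon + \frac{1}{s}$. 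Multiplying these two inequalities makes the achieved bpcu/tx rate at least $(\varepsilon + 1/s)\cdot R_\mathsf{b/tx}/(\varepsilon + 1/s) = R_\mathsf{b/tx}$, as required.

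The existence half is bookkeeping once Breusch's theorem is in hand; the single point needing care is the direction of this last inequality. The lower-bound discussion of Section \ref{sig-comp-subsection} used $d = \ceil*{(1-\varepsilon)s}$, which forces $1 - \frac{d-1}{s} \leq \varepsilon + \frac{1}{s}$, whereas here the floor reverses this to $1 - \frac{d-1}{s} \geq \varepsilon + \frac{1}{s}$; this reversal is exactly what upgrades the earlier constellation-size \emph{lower bound} into an \emph{achievable} rate. Handling $T \geq n_t$ and $T \leq n_t$ together through $s = L\cdot\min\{n_t,T\}$ — the cases differing only by the transposition separating SRA from SRB — keeps the entire verification to this one computation.
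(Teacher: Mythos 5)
Your proof is correct and follows exactly the route the paper intends for this corollary (which it states without an explicit proof as a consequence of its preceding results): Breusch's theorem together with the two-square and Eisenstein-norm theorems yields a prime constellation size $p$ with $B \leq p < 2B$, the $L+1$ term in $B$ secures the $q > L$ hypothesis of the SRA/SRB constructions, and the floor in the definition of $d$ reverses the inequality of Section \ref{sig-comp-subsection} to give the bpcu/tx rate bound. Nothing essential is missing.
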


As will be seen later in Fig.~
\ref{Lab-Constellation-Size-Bounds},
it will usually be the case for the parameters
of interest that the $L+1$ in \eqref{The-B}
does not come into play. Consequently,
we will be able to construct codes near the 
constellation
size lower bounds in Fig.~\ref{Constellation-Size-Bounds}.

We conclude this section by commenting
briefly on  connections to other constructions.
In the case of $T\geq Ln_t$, we can replace
linearized Reed--Solomon codes with
Gabidulin codes and, by Corollary
\ref{Finer-Partitions-Preserve-Total-Rank-Finite-Field-Corollary}, 
we will have a rate--diversity optimal code
which coincides with a special case of 
the multiblock construction in \cite{Lu-Kumar}
when a $p$-PSK map (Theorem \ref{p-PSK Map}) is used. 
Moreover, 
in the case of $L = 1$, linearized Reed--Solomon 
codes become Gabidulin codes and these codes
will become identical to those in
\cite{ST-Gaussian-Integers,Gabidulin-Space-Time} when 
a Gaussian integer constellation is used 
and identical to those in \cite{Sven} when an
Eisenstein integer constellation is used.

\section{Simulations}\label{Simulations-Sec}

In this section, we study the 
performance of the proposed codes under ML
decoding in comparison to competing codes. 
We defer the 
description of the decoders 
which were used to obtain the simulation results
to Section \ref{Decoding-Sec}. 

For the entirety 
of this section, we will have $n_r = n_t = T$.
We will refer to Gaussian and Eisenstein integer 
constellations of size $p$ as $p$-Gauss.\ and $p$-Eis.\
constellations respectively.

\subsection{The Case of $L=2$}\label{2-Block-Sim-Sec}

We begin with the case of coding across two fading blocks with $n_t = 2$
and consider
a bpcu/tx rate of $2$. 
Since the proposed codes require certain prime constellation sizes,
they will not exactly achieve this bpcu/tx rate. We will typically
choose constellations so that the bpcu rate is at least what is required.
Moreover, the total available transmit diversity gain in this case is $Ln_t = 4$.

A variety of codes are considered and the codeword error rate (CER)
versus SNR curves are provided in Fig.~\ref{2-Block-CERs-Small-Constellations}.
We also provide all of the corresponding constellations in Figures \ref{CDA-Constellations}
and \ref{My-Small-Constellations}. We will proceed to describe these codes.

\begin{figure}[t]
	\centering
	\includegraphics[width=\columnwidth]{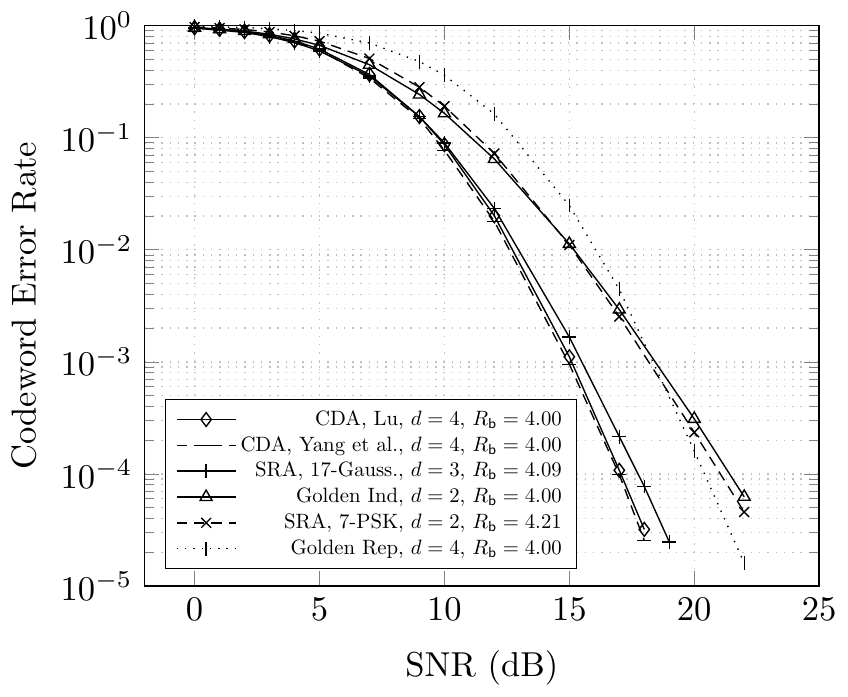}
	\caption{Performance of coding across $L = 2$ fading blocks of a $2 \times 2$ MIMO channel}\label{2-Block-CERs-Small-Constellations}
\end{figure}

We provide two $d = 4$ CDA-based $2$-block linear dispersion codes 
as a reference both using $4$-QAM input constellations to 
achieve a bpcu rate of $4$. 
One is due to Lu \cite{Hsiao-Code} and 
the other is due to Yang and Belfiore \cite{Sheng}.
We then provide our SRA construction with $d = 3$
and a $17$-Gauss.\ constellation
to achieve a bpcu rate of $4.09$. 
It can be seen that the SRA code
performs worse but the gap is less than $1$ dB
at a CER of $10^{-4}$. In exchange, the constellation
size is significantly smaller as can be seen in 
Figures \ref{CDA-Constellations} and \ref{My-Small-Constellations}.
Moreover, we see from \eqref{Rate-Diversity-Tradeoff-Full-Diversity} 
that the code due to Yang and Belfiore \cite{Sheng} is
also rate--diversity optimal hence the smaller constellation
than that of Lu \cite{Hsiao-Code}. 

We further provide as a reference the performance of a 
\textit{single-block} full diversity linear dispersion code, namely
the Golden Code \cite{Golden-Code}. It is used in our two-block 
channel in two ways. The first is with sending
independent codewords in each
fading block with a $4$-QAM input constellation 
to achieve a bpcu rate of $4$ and $d = 2$.
We refer to this code, which is 
the Cartesian product of two Golden Codes, as \textit{Golden Ind.}
This yields the performance of coding across a single fading block
as the code was designed for.
We further provide the result of repeating a Golden
Code codeword 
in each fading block. This is referred to as 
\textit{Golden Rep.}
This represents a trivial way of obtaining
a full diversity, i.e., $d = 4$ code. 
To compensate for the rate loss, we use a 
$16$-QAM input constellation to obtain a bpcu
rate of $4$. As can be seen in Fig.~\ref{2-Block-CERs-Small-Constellations},
this code performs quite poorly which verifies, 
in some sense,
the nontriviality
of the other $d = 4$ and $d = 3$ codes
intended for multiblock channels.

Note that the Golden Ind and Golden Rep codes
are \textit{multiblock} codes constructed from the
\textit{single-block} Golden Code and can be analyzed 
as such. In particular, one can verify that the Golden Rep code
is rate--diversity optimal. On the other hand, the Golden Ind
code is not rate--diversity optimal. It has a rate of
$1$ while the maximum possible for $d = 2$ is $1.5$. Again,
this is in an $L = 2$ setting and would not be
true if we were analyzing the Golden Code 
itself as 
a single-block code.

In light of this, we consider a rate--diversity
optimal $d = 2$ code, particularly
our SRA code with a $7$-PSK
constellation achieving a bpcu rate of $4.21$. 
This performs comparably to the Golden Ind code.
The higher bpcu rate achieved with a $7$-PSK constellation
is due to its rate--diversity optimality.
Alternatively, we can interpret it as a result of exploiting
the possibilities admitted by the sum-rank metric. 
For example,
rather than always needing rank-distance-2 sub-codewords in
at least one fading block which is the case for the Golden Ind code,
rank-distance-1 sub-codewords may
be transmitted in the first fading block provided that 
rank-distance-1 \textit{or} rank-distance-2 sub-codewords occur in the next
fading block. Indeed, such combinations do occur in the codebook.

\begin{figure}[t]
	\centering
	\includegraphics[width=\columnwidth]{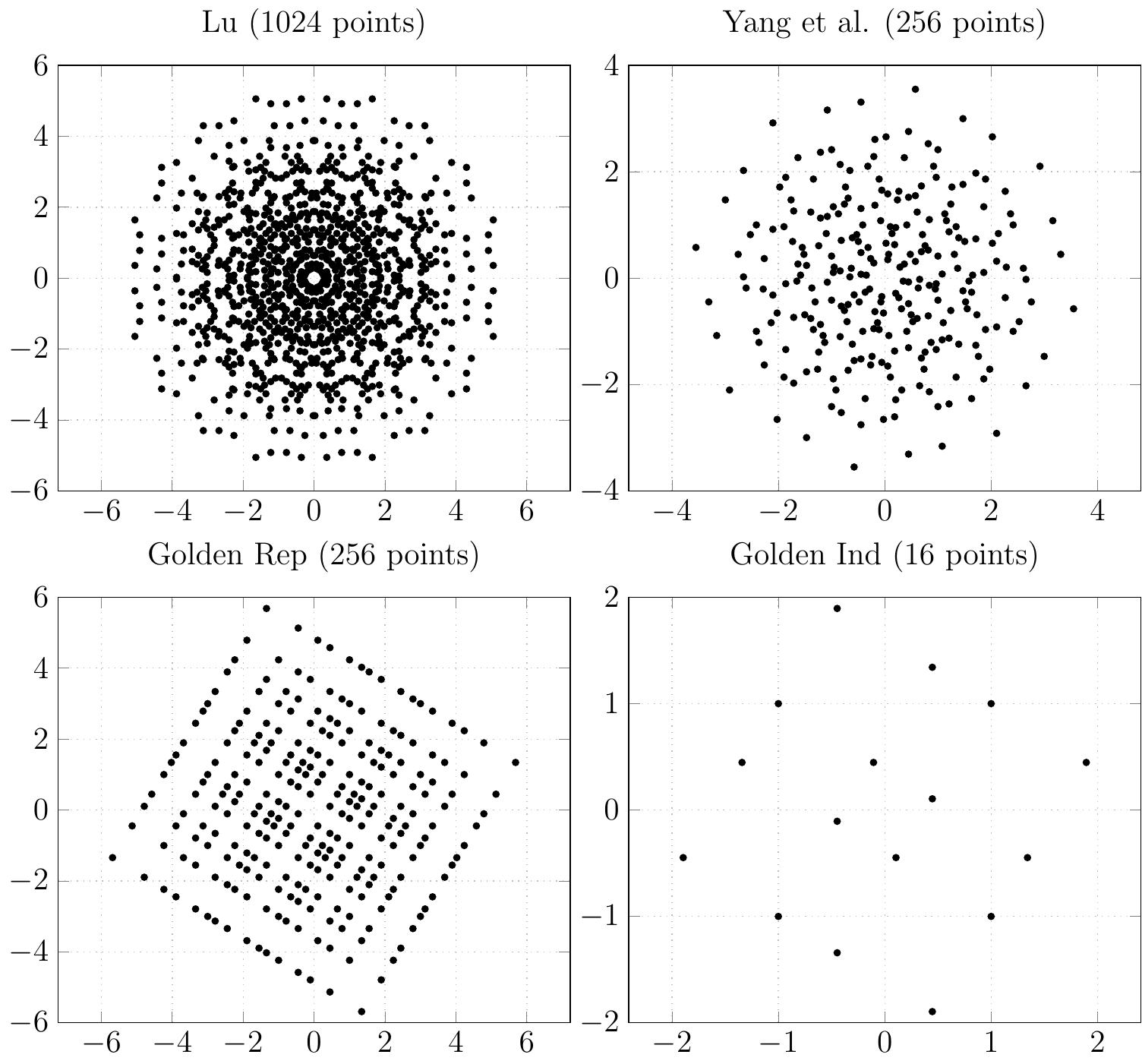}
	\caption{Constellations for CDA-based codes}\label{CDA-Constellations}
	\vspace{0.25cm}
	\includegraphics[width=\columnwidth]{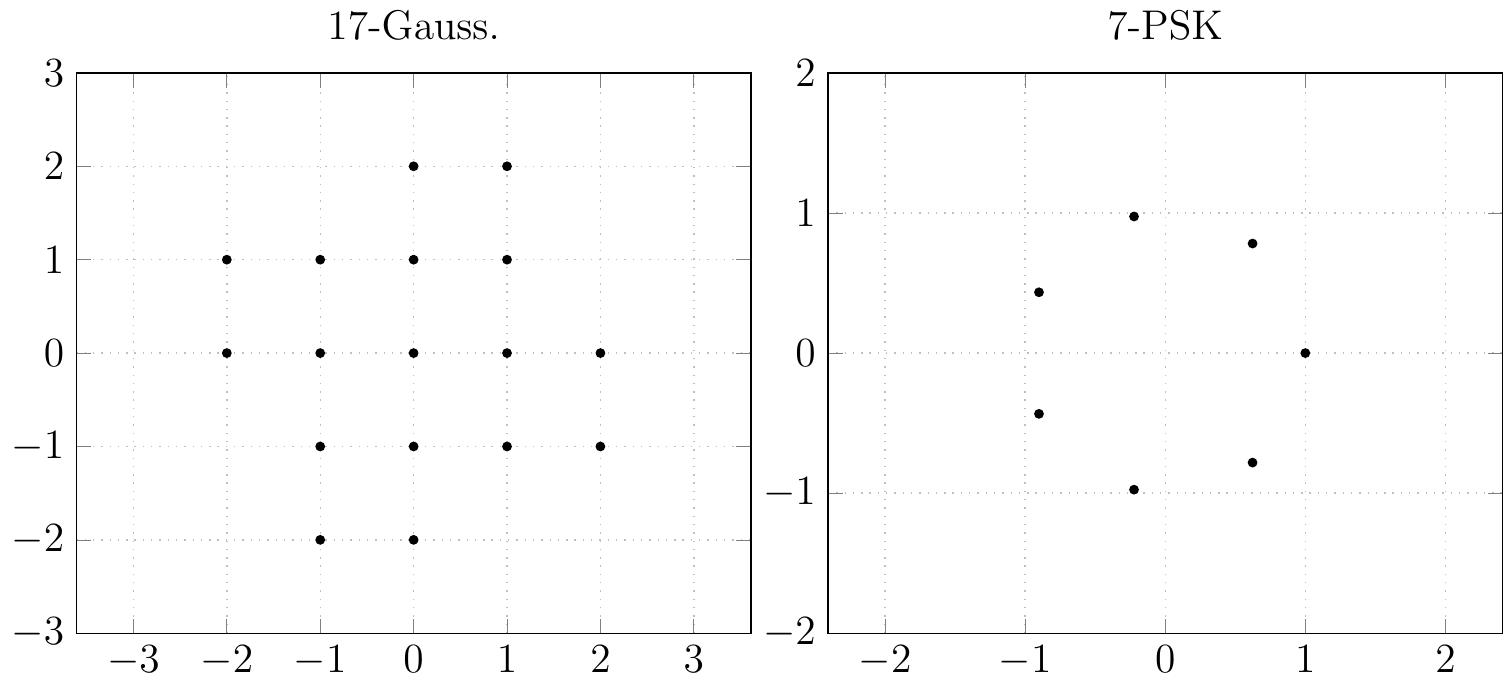}
	\caption{Constellations for proposed codes}\label{My-Small-Constellations}
\end{figure}

Next, we consider increasing the
bpcu/tx rate to $4$
which corresponds to a bpcu rate of $8$. 
This is achieved by using $16$-QAM input
constellations for the CDA-based codes
of Lu \cite{Hsiao-Code} and Yang and Belfiore \cite{Sheng}.
We compare these to our SRB codes with $d = 3$ and a few different constellations.
The results and constellations for our codes are provided in
Figures \ref{2-Block-CERs-Big-Constellations} and \ref{My-Big-Constellations}.
The constellations for the competing codes cannot be 
easily plotted or found in this case, but the constellation
size lower bound says that they should have a size of
at least $2^{Ln_tR_\mathsf{b/tx}} = 2^{16} = 65536$ points.
In this case, the performance gap between our codes and the
competing codes becomes worse, but our constellation sizes
are again significantly smaller. Besides this, the 
results are self-explanatory and the better point density
of the Eisenstein integers is evident.

\begin{figure}[t]
	\centering
	\includegraphics[width=\columnwidth]{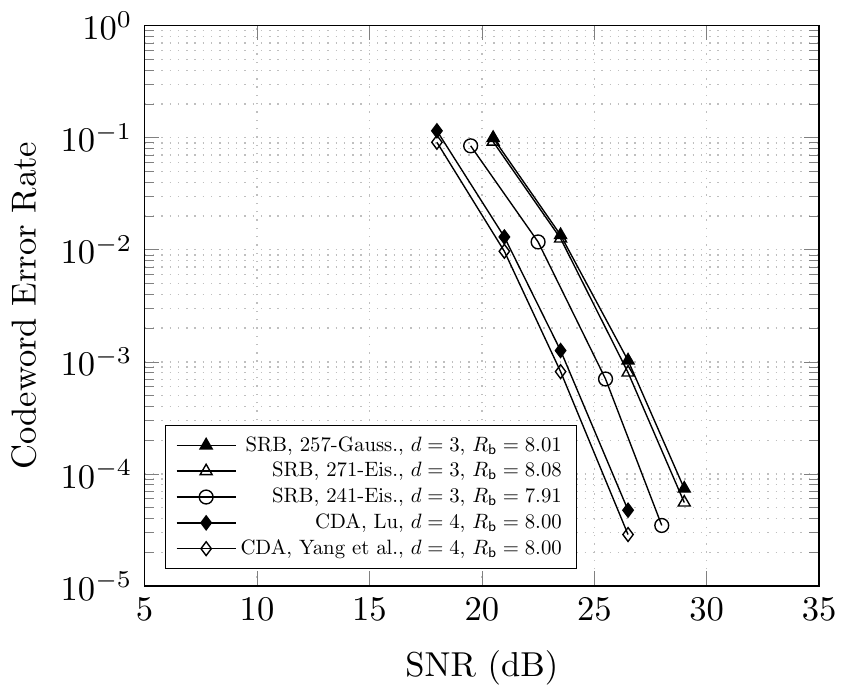}
	\caption{Performance of coding across $L = 2$ fading blocks of a $2 \times 2$ MIMO channel}
	\label{2-Block-CERs-Big-Constellations}
\end{figure}
\begin{figure}[t]
	\centering
	\includegraphics[width=\columnwidth]{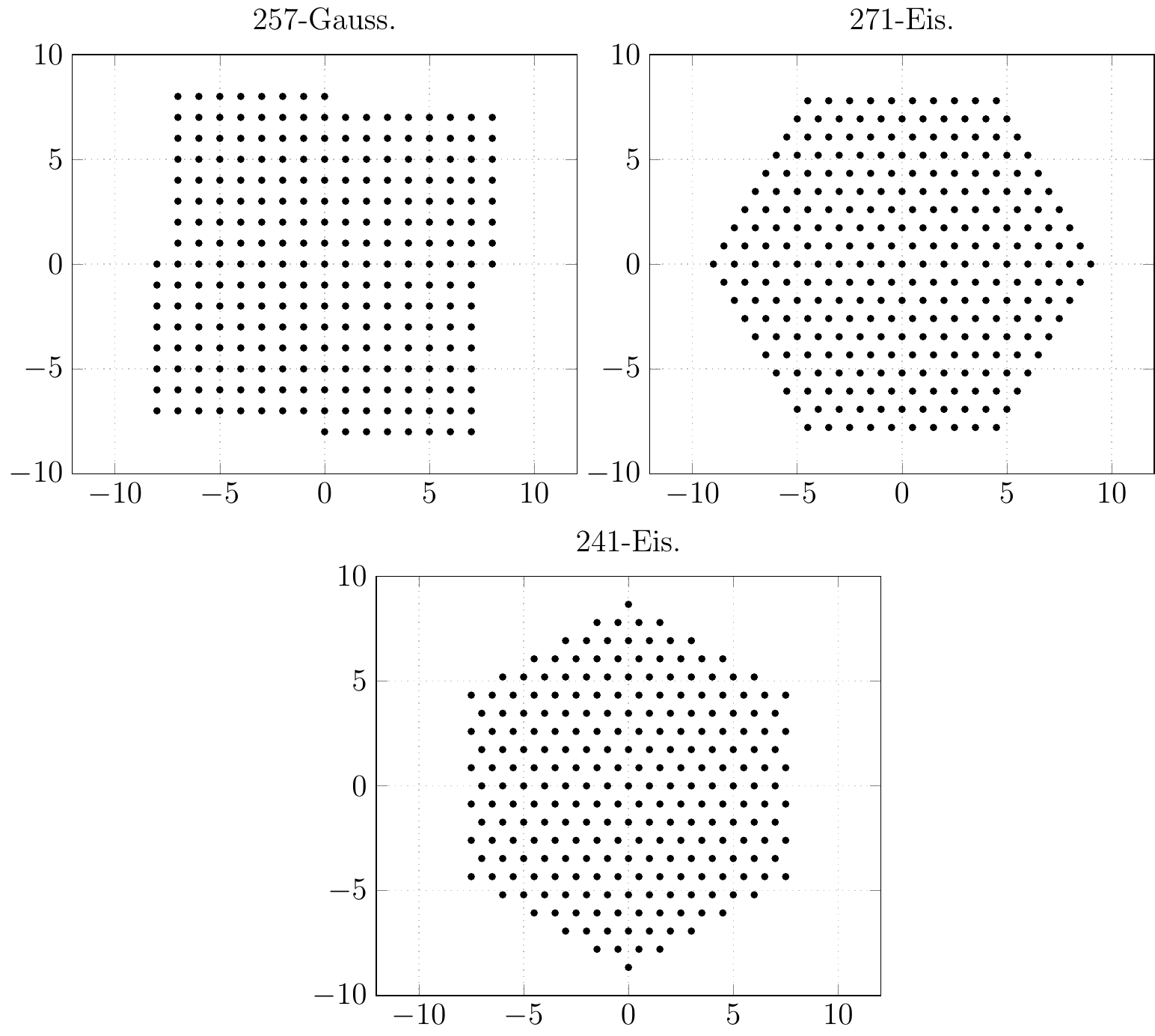}
	\caption{Constellations for proposed codes}\label{My-Big-Constellations}
\end{figure}

\subsection{The Case of $L=4$}

We now consider increasing $L$ to $4$
with $n_t = 2$ and a bpcu/tx rate of $2$.
In this case, the total available transmit
diversity gain is $Ln_t = 8$.

We provide the performance of a
$4$-block $d = 8$ 
CDA-based linear dispersion code
due to Yang and Belfiore \cite{Sheng}.
From our construction, we provide $d = 5$
and $d = 4$ SRB codes. Notably, our
$d = 5$ code outperforms the $d = 8$ code with no sign
of a closing performance gap over the range
of simulated SNRs. Moreover,
the $d = 8$ code has a constellation with at least $65536$
points compared to $17$ for our $d = 5$ code.

In Fig.~\ref{Lab-Constellation-Size-Bounds}, 
we provide constellation size lower
bounds for $n_t = 2$ and $R_\mathsf{b/tx} = 2$
as a function of $L$ and some
of the constellation sizes achieved by
the codes discussed thus far. This also
shows when the $\abs{\mathcal{A}} \geq L + 1$
requirement of our construction might come into play.

\begin{figure}[t]
	\centering
	\includegraphics[width=\columnwidth]{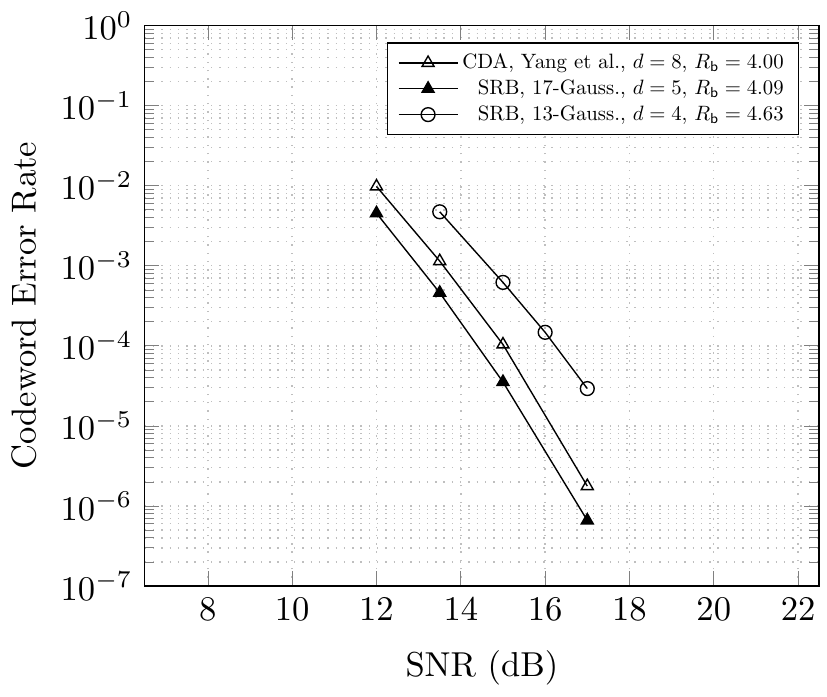}
	\caption{Performance of coding across $L = 4$ fading blocks of a $2 \times 2$ MIMO channel}
\end{figure}

\begin{figure}[t]
	\centering
	\includegraphics[width=\columnwidth]
	{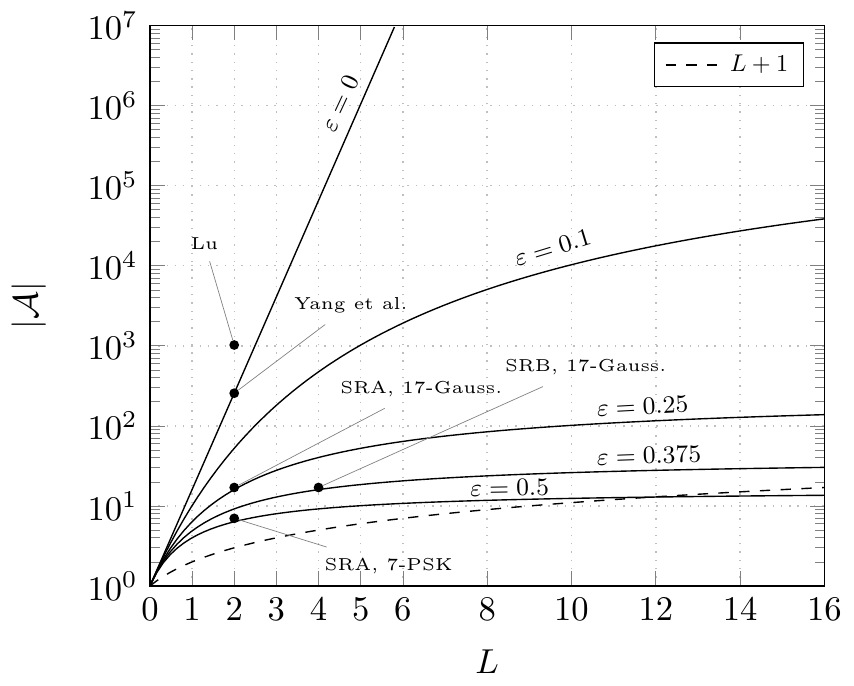}
	\caption{Constellation size lower bounds for 
		$R_\mathsf{b} = 4$, $n_t = 2$, 
		and $d =\ceil*{(1-\varepsilon)Ln_t}$}\label{Lab-Constellation-Size-Bounds}
\end{figure}

\subsection{The Case of $L=1$}\label{1-Block-Sim-Sec}

We now consider the case of $L = 1$.
In this case, our SRA and SRB constructions
recover the known space--time code constructions
in \cite{ST-Gaussian-Integers,Gabidulin-Space-Time,Sven,Arbitrary}
based on Gabidulin codes. However, in these
works, no ML decoding techniques apart from 
exhaustive search are provided
and there is consequently no ability to compare
the performance of the codes to high-rate linear dispersion 
codes.
Facilitated by the decoding techniques which will
be introduced in Section \ref{Decoding-Sec}, 
we are able to decode these codes for the first time
at sufficiently high
bpcu rates to enable such a comparison.

In the case of $n_t = 2$, our codes
have little to offer since the only
options for the diversity gain are $2$
which is full diversity and $1$ which is 
achieved by uncoded
signalling. Therefore, we skip to $n_t = 3$ in which case
we have a total 
available diversity gain of $n_t = 3$. We 
consider target bpcu/tx rates of $2$ and $3$
and compare to the 
Perfect $3\times 3$ codes from 
\cite{Perfect-Codes} using $4$-HEX and $8$-HEX
input constellations which are also defined in \cite{Perfect-Codes}.
The results are provided in Fig.~\ref{3x3}. In this case,
there are no admissible constellation sizes
for our construction that get us close enough to the
desired rate so the comparisons are not very fair.
Nonetheless, one can see that the performances are
close and our codes,
as usual, have a constellation size advantage.

\begin{figure}[t]
	\centering
	\includegraphics[width=\columnwidth]{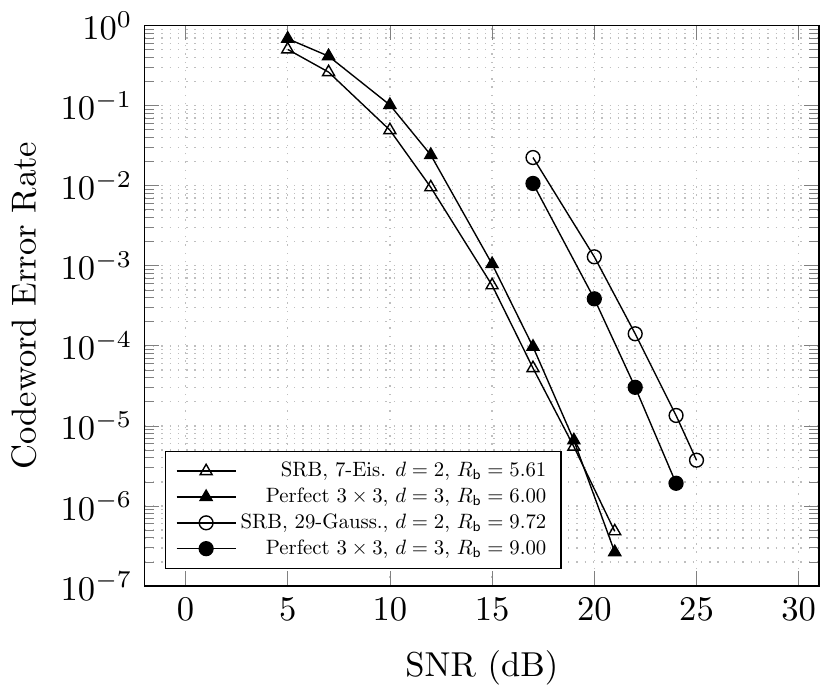}
	\caption{Performance of single-block coding on a $3 \times 3$ MIMO channel}
	\label{3x3}
\end{figure}

We proceed to the case of $n_t = 4$ where we compare to
the Perfect $4\times 4$ code from \cite{Perfect-Codes}.
We further compare to a code which improves upon this one
which we refer to as \textit{Improved $4 \times 4$}. 
This code is from \cite{Improved4} which also credits 
\cite{Improved4-Origin}. 
We consider a target bpcu/tx rate of $2$ which corresponds
to a bpcu rate of $8$ and $4$-QAM input constellations
for the competing codes.
In this case, our $d = 3$ code with a $17$-Gauss.\ constellation
outperforms both codes all the way up to a 
CER of $10^{-6}$ and with a bpcu rate and constellation
size advantage. 
On the other hand, a closing 
performance gap can be seen due to the 
diversity gain difference.
Furthermore, the fact that the CERs for
our code are only provided for high SNRs
foreshadows another issue: this is that
they can have a higher decoding complexity.

\begin{figure}[t]
	\centering
	\includegraphics[width=\columnwidth]{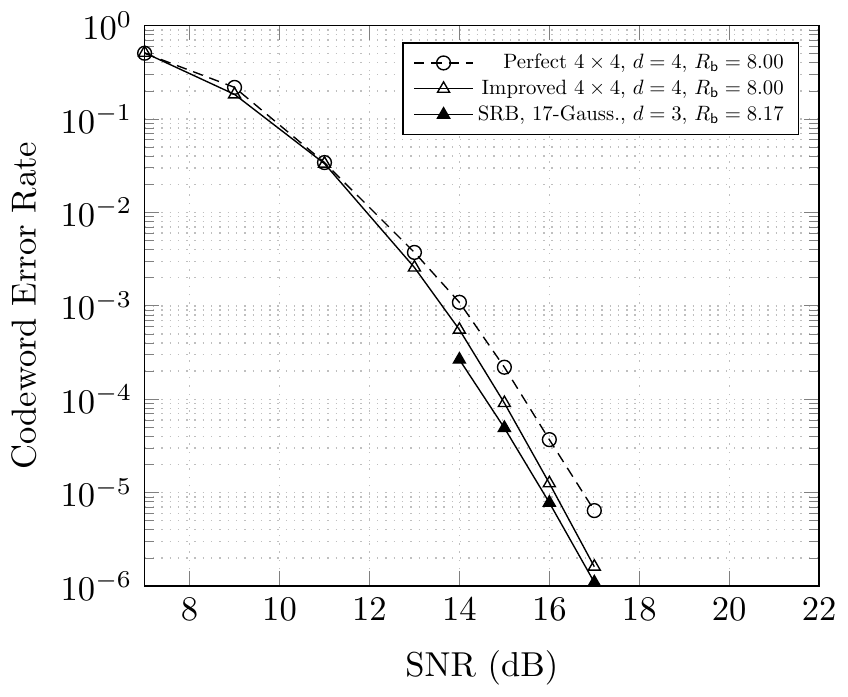}
	\caption{Performance of single-block coding on a $4 \times 4$ MIMO channel}
	\label{4x4}
\end{figure}

\section{Decoding}\label{Decoding-Sec}

In this section, we provide an ML decoder for the proposed
codes which combines ideas from the sequential decoding
of linear block codes, optimal MIMO detection, and sphere
decoding of linear dispersion codes. For concreteness of 
exposition, we describe a single family of
decoders. 
However, in the process, we provide methods for more generally
adapting some of the known techniques for the decoding
of linear dispersion codes to both our codes
and identically or
similarly constructed
space--time codes such as those of 
\cite{Sven,ST-Gaussian-Integers,Gabidulin-Space-Time,Lu-Kumar}.
The decoders to be considered are based on 
tree search
algorithms which
have good average-case complexity
at high SNRs and 
exponential worst-case complexity
in the codeword dimensions, as is
typical.

While we will 
comment extensively on 
how the decoding
problem for the proposed codes 
compares to the decoding 
problem 
for the competing 
linear dispersion codes, 
we will not attempt 
to quantify the difference 
in computational complexity.
Rather, we 
will have the less ambitious goal
of providing a starting point
for decoding and giving a sense 
of the challenges involved 
in comparison to linear dispersion codes.
The central point to be
made regarding this matter is as
follows:
There is one fundamental advantage and one 
fundamental disadvantage in the decoding 
problem for an SRA/SRB code in comparison
to the decoding problem for a comparable 
linear dispersion code.
The fundamental
advantage is that the effective channel
matrix when decoding the SRA/SRB code 
is block diagonal and there
are many ways to exploit this.
The fundamental disadvantage is that
the tree structure for the SRA/SRB
code is inherently worse.
In particular, we must perform the decoding
over a larger alphabet and over a search
space with unfavorable
boundaries. We also note that 
this point as well as the entire
content of this section applies
in the single-block, i.e., $L=1$
setting as well. 

The extent to which
the advantage can compensate for
the disadvantage is an open question
and depends on the code parameters. 
Nonetheless, the advantage can
certainly be exploited enough 
to make the decoding feasible.
Moreover, this question neglects
the potential benefits of smaller
constellations to the implementation 
complexity
so is not necessarily the pertinent 
question from an engineering perspective.

\subsection{Summary of Existing Decoders}

We begin by commenting on existing methods
for decoding the proposed codes.
In \cite{Sven}, the authors consider a 
suboptimal algebraic decoding technique for 
Gabidulin-based space--time codes which is, 
in principle, adaptable to SRA/SRB codes 
via the associated algebraic decoding methods 
for linearized Reed--Solomon codes
\cite{Interpolation,Umberto-Multishot}, 
but the performance is quite far from ML. 
To the best of the authors' knowledge, 
there has been no other attempt at decoding 
space--time codes similar to the proposed codes
aside from that in \cite{Sven} and exhaustive search. 
With that said,
the decoding methods to be proposed generalize
existing methods in fairly straightforward ways.
The methods to be built upon will be referred to 
as they occur in the upcoming development of a decoder.

\subsection{The Vanilla Stack Decoder}

In this subsection, we will describe a 
generic decoding technique and 
apply it to the decoding of the SRB code 
in the special
case of $n_r = n_t$ with
a focus on Gaussian or Eisenstein
integer constellations. We will 
stick to this special case to simplify
notation and exposition but we will
comment on how deviations from this special
case can be handled later.

We will begin by considering 
a general formulation of a
minimum 
cost path finding problem
on a tree. Our 
decoding problem will turn out 
to be a special case of this problem.
The initial algorithm to be described 
which we refer to as a \textit{vanilla
	stack decoder}
is essentially a MIMO analogue
of the
Zigangirov--Jelinek stack decoding
algorithm \cite{Stack1,Stack2}.
Later on, it will evolve into an analogue
of a 
stack decoder with
variable bias-term (VBT) metric \cite{VBT-Metric}
which itself constitutes an instance of the
A* algorithm \cite{A-star}. 
The fundamental
ideas here are well-known and are constantly
rediscovered in the literature. While the 
aforementioned literature was the starting
point for the development of the proposed decoder, 
the closest
pre-existing decoders to what we will describe
are likely to 
be found in the sections in \cite{Fifty-Year-MIMO}
relating to ML MIMO detection and the approaches described in
\cite{Tree-Search-Decoding,ML-CLPS}. 

Following the introduction of the vanilla 
stack decoder,
we will build upon
it with improvements
that exploit structure
that is more specific to
the decoding problem 
and to the code. These improvements
will be quantified in simulations
in Section \ref{Decoding-Complexity-Simulations}.

We will now consider the problem
of generically decoding a tree code.
In this subsection and the others
that will follow, we will make 
use the of the language
and notation of strings rather than vectors. 
Adjacent symbols (or characters) will represent concatenated strings
rather than multiplication in $\mathbb{C}$ unless
otherwise indicated or suggested by context.

Consider a code $\mathcal{S}$ of length $n$ over
a $q$-ary alphabet $\mathcal{A} = \{a_1,a_2,\dots,a_q\}$. 
That is, some $\mathcal{S}\subseteq \mathcal{A}^n$ with
$\abs{\mathcal{S}} > 1$. Any codeword $s \in \mathcal{S}$
is a string of length $n$ so that $s = s_1s_2\cdots s_n$
where $s_1,s_2,\dots,s_n \in \mathcal{A}$.
Denote by $\mathcal{A}^*$ the 
\textit{Kleene closure} of $\mathcal{A}$. This is the set
of all strings of any length over $\mathcal{A}$. That is, 
\begin{equation*}
\mathcal{A}^*
= \{\epsilon,a_1,\dots,a_q,a_1a_1,\dots,a_1a_q,
a_2a_1,a_2a_2,a_2a_3,\dots\}
\end{equation*}
where $\epsilon$ here denotes the \textit{empty string} which
is the identity element of concatenation. That is, 
$\epsilon a = a\epsilon = a$. The length
of a string will be denoted by $\abs{\cdot}$ and the empty
string has length zero, i.e., $\abs{\epsilon} = 0$.

Denote by $\mathcal{S}^*$ the set of all prefixes of codewords 
in $\mathcal{S}$. That is, 
\begin{equation*}
\mathcal{S}^* 
=
\{p\in \mathcal{A}^* \mid pt \in \mathcal{S}\text{ for some }
t \in \mathcal{A}^*\}\textit{.}
\end{equation*}
Note that $\mathcal{A}^*$ is an infinite set
while $\mathcal{S}^*$ is finite. Observe that
if $p \in \mathcal{S}^*$ and $\abs{p} = n$, 
then $p \in \mathcal{S}$.

Denote by $\wp(\mathcal{A})$ the power set of $\mathcal{A}$. This
is the set of all subsets of $\mathcal{A}$.
Define the function $E \colon \mathcal{S}^* \longrightarrow \wp(\mathcal{A})$
by 
\begin{equation}\label{E-Function}
E(p) = \{a\in\mathcal{A}\mid pa \in \mathcal{S}^*\}\text{.}
\end{equation}
Thus, given a prefix of a codeword, this function
gives us the set of all possible candidates for the next character.
Observe that if $p \in \mathcal{S}^*$ and $pb \in \mathcal{S}$ for some 
$b \in \mathcal{A}$, then $b \in E(p)$.
Moreover, if $p \in \mathcal{S}$, then $E(p) = \emptyset$.

Assuming that we can easily compute this function for
our code, i.e., that we can easily enumerate 
the elements of this set, we can easily construct
a representation of our code as a $q$-ary tree of 
depth $n$. Each node represents a prefix of a 
codeword and the leaf nodes correspond to the codewords.
In particular, we can label the root node by $\epsilon$.
Its children at depth $1$ are then labelled by the elements 
of $E(\epsilon) = \{a \in \mathcal{A}\mid a \in 
\mathcal{S}^*\}$. Given a child $c$ at depth $1$,
we can find its children at depth $2$ as $E(c) = \{a\in\mathcal{A}\mid ca \in \mathcal{S}^*\}$, 
and so on.

Consider 
$n$ functions
$f_i\colon \mathcal{A}^i \longrightarrow \mathbb{R}_{\geq 0}$
for $i = 1,2,\dots,n$.
Define the cost function
$C\colon \mathcal{S} \longrightarrow \mathbb{R}_{\geq 0}$ 
by 
\begin{align}
C(a_1a_2\cdots a_n)
&= f_1(a_1) + f_2(a_1a_2)
+ \dots + f_n(a_1a_2\cdots a_n)\nonumber\\
&= \sum_{j = 1}^n f_j(a_1a_2\cdots a_j)\label{Causal-Cost-Function}
\end{align}
for all $a_1a_2\cdots a_n\in\mathcal{S}$ 
(with $a_1,a_2,\dots,a_n\in\mathcal{A}$). We
refer to such a function as a \textit{causal cost function.}
Given such a function, we can replace it by 
one which is extended to arbitrary length prefixes.
Define the \textit{extended causal cost function}
$C\colon \mathcal{S}^*\longrightarrow \mathbb{R}_{\geq 0}$ 
for arguments of length $i$
by 
\begin{align}
C(a_1a_2\cdots a_i)
&= f_1(a_1) + f_2(a_1a_2)
+ \dots + f_i(a_1a_2\cdots a_i)\nonumber\\
&= \sum_{j = 1}^i f_j(a_1a_2\cdots a_j)
\end{align}
for all  $a_1a_2\cdots a_i\in\mathcal{S}^*$  
(with $a_1,a_2,\dots,a_i\in\mathcal{A}$)
for $i = 1,2,\dots,n$.
Observe that the function
can now be defined 
recursively for each argument by
\begin{equation*}
C(a_1a_2\cdots a_i) = 
C(a_1a_2\cdots a_{i-1})  
+ 
f_i(a_1a_2\cdots a_i)
\end{equation*}
and
\begin{equation*}
C(a_1) = f_1(a_1)
\end{equation*}
for $i = 2,3,\dots,n$.

Suppose that we are interested in solving
\begin{equation*}
\min_{s\in\mathcal{S}} C(s)\text{.}
\end{equation*}
A problem of this form can be solved by a best-first 
tree search
algorithm like the Zigangirov--Jelinek stack decoding
algorithm \cite{Stack1,Stack2}. Traditionally, 
these algorithms are used in the context
of binary codes ($\abs{\mathcal{A}} = 2$) and where the cost
function is simple. In particular, we usually have
no dependence on previous symbols in the additive
decomposition of the cost function. That is, 
$f_i(a_1a_2\cdots a_i) = g_i(a_i)$ 
for some $g_i\colon \mathcal{A} \longrightarrow
\mathbb{R}_{\geq 0}$.
Fortunately, the generalization to allow
for arbitrary causal cost functions is
effortless.

We will now describe this algorithm. The 
algorithm will make use of a priority queue 
data structure. This is a stack with
\textit{push} and \textit{pop}
operations. However, rather than having a last-in first-out
(LIFO) order, the elements are paired with priority measures
and sorted accordingly so that the element
with the highest priority comes out first.
Each entry in our priority queue will be a codeword
prefix $p \in \mathcal{S}^*$ and a priority
measure which will be its cost $C(p)$. The highest 
priority element will be the one with the smallest cost.
The algorithm can now be described very simply:
\begin{itemize}
	\item Start by pushing every $a \in E(\epsilon)$
	onto the priority queue with its respective
	cost $C(a) = f_1(a)$.
	\item Pop an element from the priority queue 
	which will consist of a prefix $p$ and its priority measure
	$C(p)$.
	\item For every $t \in E(p)$, push $pt$ onto the priority queue
	with its priority measure $C(pt) = C(p) + f_{\abs{p} + 1}(pt)$.
	\item Repeat the previous two steps until 
	a prefix $s$ with $\abs{s} = n$ is popped.
\end{itemize}

For a proof that this algorithm works, i.e., 
terminates and yields the minimum cost codeword,
the reader is referred to \cite{My-Thesis}. Alternatively,
the reader may simply recognize this as an instance
of a standard best-first search algorithm like A* \cite{A-star}.

\subsubsection*{Application to the Decoding of the Proposed Codes}

We now proceed to the application of this
to the decoding of our space--time codes.
We can associate an
$n_t \times LT$ codeword matrix
with a length-$LTn_t$ 
string via the bijection defined by: 
\begin{equation*}
\begin{bmatrix}
x_1 & x_{n_t + 1} & \dots & x_{(LT-1)n_t + 1} \\ 
x_2 & x_{n_t + 2} & \dots & x_{(LT-1)n_t + 2} \\
\vdots & \vdots  & \ddots & \vdots \\
x_{n_t} & x_{2n_t} & \dots & x_{LTn_t} 
\end{bmatrix}
\mapsto
x_1x_2\cdots x_{LTn_t}\text{.}
\end{equation*}
Our code $\mathcal{X}$ is thus in bijection
with a code $\mathcal{S}$ as defined in the previous
section. We will also use a similar representation
for our receive matrix which will also be $n_t \times LT$
since we are assuming that $n_r = n_t$.

The first order of business is to
find the function $E$. Since the underlying
code is MDS, this is easy. Let $k$ be the dimension
of the linearized Reed--Solomon code ($k = LT - d + 1$ for
the SRB construction) and assume that we are using a
systematic generator matrix. Then, for all $p \in \mathcal{S}^*$
such that $\abs{p} < kn_t$,
we have $E(p) = \mathcal{A}$.
On the other hand, for any 
$p \in \mathcal{S}^*$ with $\abs{p} \geq kn_t$,
the entire remaining symbols are parity symbols
which can be computed from the first $kn_t$ symbols
so we have $\abs{E(p)} = 1$, i.e., there is only 
one possible 
continuation of the prefix. Fig.~\ref{My-Tree} illustrates
the structure of the resulting code tree.

It remains to verify that
the cost function has the appropriate form.
This is merely a matter of performing QL
decompositions. 
Recall our convention that
$X = \begin{bmatrix} X_1 & X_2 & \cdots & X_L \end{bmatrix}$.
For $\ell = 1,2,\dots, L$, 
we have
\begin{equation*}
\lVert Y_\ell - \rho H_\ell X_\ell \rVert_\mathsf{F}^2
=
\lVert \tilde Y_\ell - \mathbf{L}_\ell X_\ell \rVert_\mathsf{F}^2
\end{equation*}
where $\rho H_\ell = Q_\ell \mathbf{L}_\ell$ and 
$\tilde Y_\ell = Q_\ell^\dagger Y_\ell$ 
where $Q_\ell$ is a unitary matrix and 
$\mathbf{L}_\ell$ is a lower-triangular matrix.
Moving forward, we will drop the tilde from the $\tilde Y_\ell$.
Denote by $\ell(j)$ the index of
the fading block corresponding the $j$th column of $X$, i.e.,
\begin{equation*}
\ell(j) = \floor*{\frac{j-1}{T}} + 1
\end{equation*}
for $j = 1,2,\dots, LT$.
For $m = 1,2,\dots,n_tLT$,
define the functions
$g_m \colon \mathcal{A}^{m-\floor*{\frac{m-1}{n_t}}n_t} \longrightarrow 
\mathbb{R}_{\geq 0}$ by
\begin{multline*}
	g_m
	(x_{\floor*{\frac{m-1}{n_t}}n_t+1}
	x_{\floor*{\frac{m-1}{n_t}}n_t+2}\cdots \,x_{m}) 
	= \\
	\abs{y_{m} 
		- 
		\hspace*{-3ex}
		\sum_{k = \floor*{\frac{m-1}{n_t}}n_t+1}^{m}
		\hspace*{-1.5ex}
		\left(\mathbf{L}_{\ell\left(\floor*{\frac{m-1}{n_t}} + 1\right)}\right)_
		{m - \floor*{\frac{m-1}{n_t}}n_t,k-\floor*{\frac{m-1}{n_t}}n_t}
		\hspace*{-1ex}
		x_{k}}^2
\end{multline*}
We then have (see \cite{My-Thesis} for details) 
\begin{align*}
&\hphantom{=} C(x_1x_2\cdots x_{LTn_t})\\
&= \sum_{\ell=1}^L
\lVert Y_\ell - \mathbf{L}_\ell X_\ell \rVert_\mathsf{F}^2 \\
&= \sum_{m = 1}^{LTn_t}
g_m
(x_{\floor*{\frac{m-1}{n_t}}n_t+1}
x_{\floor*{\frac{m-1}{n_t}}n_t+2}\cdots \,x_{m})\\
&=
\sum_{m = 1}^{LTn_t}
f_m
\mleft(x_{1}
x_{2}\cdots \,x_{m}\mright)
\end{align*}
where $$f_m
\mleft(x_{1}
x_{2}\cdots \,x_{m}\mright) = 
g_m
(x_{\floor*{\frac{m-1}{n_t}}n_t+1}
x_{\floor*{\frac{m-1}{n_t}}n_t+2}\cdots \,x_{m})$$
for $m = 1,2,\dots,LTn_t$. 
Thus, we have a causal cost function
as required and the vanilla stack
decoder is fully specified.
In fact, we have something 
better than what 
we needed. Each function in the additive
decomposition of the cost function
depends on at most $n_t$ previous terms
rather than at most $LTn_t$ previous terms.

Alternatively, we can interpret this as
the \textit{effective channel matrix} being block
diagonal with lower-triangular blocks. 
In particular, we can express
the cost function as 
\begin{equation}\label{Equivalent-MIMO-Problem}
	C(x_1x_2\cdots x_{LTn_t})
	=
	\lVert \mathbf y - \mathbf{H}\mathbf{x}\rVert_2^2
\end{equation}
where (recalling that $n_r = n_t$) $\mathbf{x},\mathbf{y}\in \mathbb{C}^{n_tLT}$ 
are given by
\begin{align*}
	\mathbf x &= \begin{bmatrix}
	x_1 & x_2 & \cdots & x_{n_tLT}
	\end{bmatrix}^\intercal \\ 
	\mathbf y &= \begin{bmatrix}
	y_1 & y_2 & \cdots & y_{n_tLT}
	\end{bmatrix}^\intercal 
\end{align*}
and $\mathbf{H}\in \mathbb{C}^{n_tLT\times n_tLT}$ is given
by
\begin{equation}\label{Effective-Channel-Proposed}
	\mathbf{H}
	= 
\mqty[\dmat{\mathbf{L}_1,\ddots,\mathbf{L}_1,\mathbf{L}_2,\ddots,\mathbf{L}_2,\ddots,\mathbf{L}_L}]
\end{equation}
where each $\mathbf{L}_\ell \in \mathbb{C}^{n_t\times n_t}$ for $\ell=1,2,\dots,L$ occurs $T$ times
on the diagonal of the matrix. On the other hand, the general case
of a causal cost function depending on all $LTn_t$ past terms corresponds
to the matrix $\mathbf{H}$ simply being lower-triangular.

\subsubsection*{Comparison to the Decoding of Linear Dispersion Codes}

As previously mentioned,
the decoding problem for a linear dispersion
code can be converted into a standard MIMO detection
problem for an $n_tLT\times 1$ vector over the
input constellation with an $n_tLT \times n_tLT$
channel (assuming $n_r = n_t$).
In particular, suppose that we have 
an $L$-block $n_t\times T$ 
linear dispersion code as defined
in \eqref{LD-Code-Definition}
with dispersion matrices
$A_1,A_2,\dots,A_{n_tLT}\in \mathbb{C}^{n_t\times LT}$. 
Denote
by $\mathbf{a}_i^{(j)}$ the $j$th column
of the $i$th dispersion matrix so that 
\begin{equation*}
	A_i 
	= 
	\begin{bmatrix}
	\mathbf{a}_i^{(1)} & 
	\mathbf{a}_i^{(2)} & \cdots
	&
	\mathbf{a}_i^{(LT)} 
	\end{bmatrix}
\end{equation*}
for $i = 1,2,\dots,n_tLT$.
One can then verify that the
cost function 
can be 
placed into the form of \eqref{Equivalent-MIMO-Problem}
by taking $\mathcal{S} = \mathcal{A}_\mathsf{in}^{n_tLT}$ 
(which is in bijection with the code) and taking 
$\mathbf{H}\in \mathbb{C}^{n_tLT\times n_tLT}$
to be 
\begin{equation}\label{Effective-Channel-LD-Pre}
\mathbf{H}
= \rho
\begin{bmatrix}
H_1 \mathbf{a}_1^{(1)} 
& H_1 \mathbf{a}_2^{(1)}
& \cdots
& H_1 \mathbf{a}_{n_tLT}^{(1)} \\
H_1 \mathbf{a}_1^{(2)} 
& H_1 \mathbf{a}_2^{(2)}
& \cdots
& H_1 \mathbf{a}_{n_tLT}^{(2)} \\
\vdots & \vdots & \ddots & \vdots \\
H_1 \mathbf{a}_1^{(T)} 
& H_1 \mathbf{a}_2^{(T)}
& \cdots
& H_1 \mathbf{a}_{n_tLT}^{(T)} \\

H_2 \mathbf{a}_1^{(T+1)} 
& H_2 \mathbf{a}_2^{(T+1)}
& \cdots
& H_2 \mathbf{a}_{n_tLT}^{(T+1)} \\
H_2 \mathbf{a}_1^{(T+2)} 
& H_2 \mathbf{a}_2^{(T+2)}
& \cdots
& H_2 \mathbf{a}_{n_tLT}^{(T+2)} \\
\vdots & \vdots & \ddots & \vdots \\

H_L \mathbf{a}_1^{(LT)} 
& H_L \mathbf{a}_2^{(LT)}
& \cdots
& H_L \mathbf{a}_{n_tLT}^{(LT)}
\end{bmatrix}\text{.}
\end{equation}
Observe now that
the problem of minimizing 
\eqref{Equivalent-MIMO-Problem}
is itself a special case of our decoding
problem 
so that the algorithm just developed is readily 
applicable.
In particular, we substitute $1$ for $L$
and $T$ and substitute $n_tLT$ for $n_t$.
We can then lower-triangularize the channel 
as before to get a causal cost
function or equivalently, a new
effective channel
\begin{equation}\label{Effective-Channel-LD}
\mathbf{H} = \mathbf{L}_1 
\end{equation}
where $\mathbf{L}_1 \in \mathbb{C}^{n_tLT \times n_tLT}$
is lower-triangular. Thus, one key difference
between the decoding problems for the proposed 
codes and linear dispersion codes is in the
effective channels \eqref{Effective-Channel-Proposed}
for the proposed codes 
and \eqref{Effective-Channel-LD} for linear dispersion codes. 
Apart
from the obvious ways in which 
the block diagonal structure of 
\eqref{Effective-Channel-Proposed}
is better such as reduced cost of QL decompositions, 
we will see that there are many more significant advantages.
We also emphasize that this structural benefit
still holds in the case of single-block coding (i.e., $L = 1$).

On the other hand, since $\mathcal{S} = \mathcal{A}_\mathsf{in}^{n_tLT}$
for a linear dispersion code, we will have a 
tree structure as in Fig.~\ref{LD-Tree}.
The major advantage in the decoding of linear dispersion codes
is that the decoding can be performed over the input
constellation which will generally
be smaller than the constellation
for an equal bpcu rate SRB code.
Observe that Figures \ref{My-Tree} and \ref{LD-Tree}
represent codes of equal size and hence
equal bpcu rate
but the latter has a smaller outdegree.

\begin{figure}[t]
	\centering
	\begin{minipage}{0.45\columnwidth}
		\centering
		\includegraphics{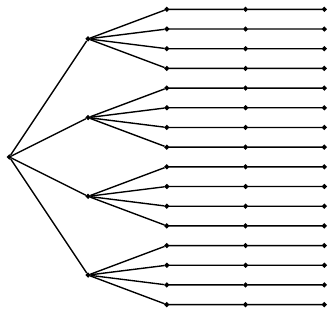}
		\caption{Structure of the code tree resulting from
			a systematic generator matrix}\label{My-Tree}
	\end{minipage}\hfill
	\begin{minipage}{0.45\columnwidth}
		\centering
		\includegraphics{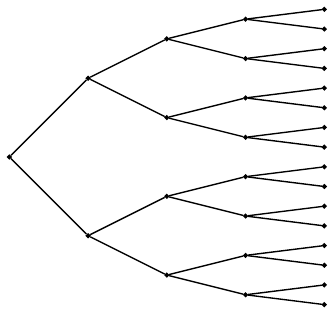}
		\caption{Structure of the code tree resulting from
			a linear dispersion code or uncoded
			signalling}\label{LD-Tree}
	\end{minipage}
\end{figure}

The key principle behind sequential
decoding is that of pruning
the tree of unpromising paths
via the priority measure
to avoid an exhaustive search.
This is easier to do in the 
case of a tree like that in 
Fig.~\ref{LD-Tree} rather than 
Fig.~\ref{My-Tree}.
In fact, for Fig.~\ref{My-Tree}, 
the length of the long legs
relative to the depth
up to which we have a full 
tree is proportional
to the diversity gain. In the
case of full diversity, the tree
constitutes only long legs stemming from the 
root node and no meaningful pruning is possible;
the vanilla stack decoder
becomes an exhaustive search.

The proposed
codes are thus inherently disadvantaged
from a tree structure perspective
and the closer they operate to full 
diversity, the closer the decoding
complexity becomes to exhaustive search.
Moreover, the tree structure is always worse
for any diversity gain greater than $1$ hence
for any nontrivial code.
With that said, we have seen that
they do not need to be close to full diversity
to perform well and we have the block diagonal
effective channel advantage which can 
compensate to some extent 
for the tree structure disadvantage.

In the coming subsections, we will proceed to provide
improvements upon the vanilla stack decoder. 
These improvements will typically 
involve some extra computations
or pre-computations that offer diminishing
returns as the SNR increases. 

\subsection{Future Costing}

We now consider an improvement
upon the vanilla stack 
decoder
via the A* algorithm \cite{A-star}. 
We adopt
some of the terminology associated with A*.
The idea is to add to the cost
associated with a prefix a lower 
bound on the cost of any possible 
continuation of that prefix---the future cost.
The hope is that this will cause more paths to
be pruned early on in the decoding procedure. 

Consider 
$n$ \textit{heuristic functions}
$h_i\colon \mathcal{A}^i \longrightarrow \mathbb{R}_{\geq 0}$
for $i = 1,2,\dots,n$.
Define the \textit{modified extended causal cost function}
$C'\colon \mathcal{S}^* \longrightarrow \mathbb{R}_{\geq 0}$ 
by 
\begin{equation*}
C'(a_1a_2\cdots a_i)
= C(a_1a_2\cdots a_i)
+ h_i(a_1a_2\cdots a_i)
\end{equation*}
for all  $a_1a_2\cdots a_i\in\mathcal{S}^*$  
(with $a_1,a_2,\dots,a_i\in\mathcal{A}$)
for $i = 1,2,\dots,n$.
The heuristic functions 
are said to be \textit{admissible} if
\begin{equation}\label{Admissible-Heuristic}
h_i(a_1\cdots a_i)
\leq \hspace*{-1.75ex}
\min_{
	\substack{b_1\cdots b_n \in \mathcal{S}\\
		b_1\cdots b_i = a_1\cdots a_i}} \hspace*{-1.75ex}
f_{i+1}(b_1\cdots b_{i+1}) + \cdots + f_{n}(b_1\cdots b_{n})
\end{equation}
for all  $a_1a_2\cdots a_i\in\mathcal{S}^*$  
(with $a_1,a_2,\dots,a_i\in\mathcal{A}$)
for $i = 1,2,\dots,n-1$
and 
\begin{equation*}
h_n(a_1\cdots a_n) = 0
\end{equation*}
for all $a_1\cdots a_n \in\mathcal{S}$.

We claim that if the cost function in
the vanilla stack decoder is replaced 
with a modified extended causal cost function
with admissible heuristics, the algorithm
still terminates and yields the minimum cost codeword.
For a proof of this, again, the reader may simply
recognize this as an instance of A* \cite{A-star}
or refer to \cite{My-Thesis}.

\subsubsection*{Application to the Decoding of the Proposed Codes}

The utility of a heuristic relies critically on the complexity
of its evaluation. Solving the minimization
in \eqref{Admissible-Heuristic} for every 
$a_1 a_2 \cdots a_i \in \mathcal{S}^*$ for $i = 1,2,\dots,n$
is just as hard as the decoding problem that we are trying
to solve to start with. We instead hope to find
lower bounds on that minimization that are sufficiently
easy to compute. 
The particular topic of obtaining
such bounds in the context of sphere decoding is
studied extensively in \cite{SDP-Inspired-Lower-Bounds}.
The same topic is studied in \cite{Eigenbound-Equivalent}
in the context of MIMO detection with an A* algorithm 
and the authors arrive at a bound equivalent to one appearing
in \cite{SDP-Inspired-Lower-Bounds}. The techniques
from these works are more or less 
readily applicable to our decoder.
In particular,
one might need to relax the constellation constraints
to $\Lambda = \mathbb{G}$ or $\Lambda = \mathbb{E}$
in which case the appropriate
quantization functions can be applied to get
constant-time solutions to $1$-dimensional detection problems.
This has the effect of slightly loosening the resultant bounds.

However, there is one way in which our
problem differs from those considered
in \cite{SDP-Inspired-Lower-Bounds,Eigenbound-Equivalent}
that we can exploit. This is in that we have
a block diagonal lower-triangular effective channel matrix
\eqref{Effective-Channel-Proposed}
rather than a dense lower-triangular effective
channel matrix \eqref{Effective-Channel-LD}.
This gives the proposed
codes a decoding advantage
in the particular area of A* 
heuristics. We interject to elaborate.

\subsubsection*{Comparison to the Decoding of Linear Dispersion Codes}

Earlier, we showed that our cost function
additively decomposes into functions which depend on 
at most the past $n_t$ terms
$$f_m
\mleft(x_{1}
x_{2}\cdots \,x_{m}\mright) = 
g_m
(x_{\floor*{\frac{m-1}{n_t}}n_t+1}
x_{\floor*{\frac{m-1}{n_t}}n_t+2}\cdots \,x_{m})$$
for $m = 1,2,\dots,LTn_t$.

On the other hand, in the decoding of linear
dispersion codes, it is neither feasible nor intended that
the codeword be considered directly in the decoding process
due to the enormous constellation size. Rather, they are 
intended to be decoded over their \textit{input} constellation. 
For this to be done, 
the channel matrices must be combined with the generator
matrices leading to an $n_tLT\times n_tLT$ effective 
channel \eqref{Effective-Channel-LD-Pre} acting
on a single $n_tLT \times 1$ vector with symbols from the input
constellation. After lower-triangularizing the channel, 
the cost function 
thus ends up taking the most general 
form \eqref{Causal-Cost-Function}
with dependence
on the past $n_tLT$ symbols. 

The implication this has for heuristics is that they need
to be constantly re-computed on the fly as nodes are explored 
since the minimization depends on the previous symbols. 
This strongly restricts the complexity allowed
for these heuristics. Naive choices lead to complexity
equivalent to that of simply visiting the nodes that we 
are trying to prune.
In contrast, if the cost function breaks up
as it does for the proposed codes, 
we get heuristics
that can be pre-computed and are identical for any
prefix. For example, we can solve the column-by-column
MIMO detection problem or find a lower bound on its cost
and this only has to be done once. The fact that it only
has to be done once means that heuristics that would
otherwise be useless, like solving a lower-dimensional
version of the same decoding problem under consideration,
are now potentially useful. 

\subsubsection*{Application to the Decoding of the Proposed Codes Continued}

We require that our heuristic satisfies
\begin{equation}\label{Heuristic-Bound}
h_i(x_1\cdots x_i)
\leq 
\min_{
	\substack{z_1\cdots z_n \in \mathcal{S}\\
		z_1\cdots z_i = x_1\cdots x_i}} 
\sum_{j = i+1}^{LTn_t} f_{j}(z_1\cdots z_{j})
\end{equation}
for $i = 1,2,\dots,LTn_t - 1$
and $h_{LTn_t}(x_1\cdots x_{LTn_t}) = 0$.
As previously mentioned, such heuristics
can be found in \cite{SDP-Inspired-Lower-Bounds,Eigenbound-Equivalent}.
However, we will shift our interest
to heuristics that do not
depend on $x_1\cdots x_i$ in which
case we can use more complex heuristics (in 
the sense of computational complexity), that
are conceptually trivial. In particular,
simply solving smaller versions of the
decoding problem under consideration.

Firstly, we can remove the codeword
constraint for the purposes of the lower-bounding
\eqref{Heuristic-Bound}. That is, 
\begin{align*}
&\hphantom{\geq}\min_{
	\substack{z_1\cdots z_n \in \mathcal{S}\\
		z_1\cdots z_i = x_1\cdots x_i}} 
\sum_{j = i+1}^{LTn_t} f_{j}(z_1\cdots z_{j})\\
&\geq 
\min_{
	\substack{z_1\cdots z_n \in \mathcal{A}^{LTn_t}\\
		z_1\cdots z_i = x_1\cdots x_i}} 
\sum_{j = i+1}^{LTn_t} f_{j}(z_1\cdots z_{j})\text{.}
\end{align*}
We further would like that our heuristic does
not depend on $x_1\cdots x_i$. We can 
remove the appropriate number of terms
from the beginning of the sum:
\begin{align}
\sum_{j = i+1}^{LTn_t} f_{j}(z_1\cdots z_{j})
&\geq 
\sum_{j = \floor*{\frac{i-1}{n_t}}n_t+n_t+1}^{LTn_t}
f_{j}(z_1\cdots z_{j})\nonumber\\
&= 
\sum_{j = \floor*{\frac{i-1}{n_t}}n_t+n_t+1}^{LTn_t}
g_{j}(z_{\floor*{\frac{j-1}{n_t}}n_t+1}\cdots\, z_{j})\label{Future-Cost}
\end{align}
where we have $\floor*{\frac{i-1}{n_t}}n_t+n_t+1 \geq i + 1$
and $\eqref{Future-Cost}$ depends on 
$z_{\floor*{\frac{i-1}{n_t}}n_t+n_t+1}\cdots z_{LTn_t}$.

There are three points that must be noted now.
The first is that we can break up the sum 
\eqref{Future-Cost} in any way we like and 
lower-bound its minimization by the term-by-term
minimization of the broken up sum. The second 
is that we can lower-bound any terms
trivially by $0$. This allows us to obtain
lower-bounds which consist of MIMO detection problems
of any size, i.e, with $1 \times 1$ to $n_t \times n_t$
channels. The third is that these do
not need to be computed for every $i = 1,2,\dots, LTn_t-1$
(this a separate matter from independence of
$x_1\cdots x_i$).
The same heuristic will be shared by ranges of $i$ and
the heuristics for larger $i$ just involve fewer terms
of \eqref{Future-Cost} so are obtained by subtracting 
components of the heuristic for smaller $i$.

For example, for $i = 1$, we
can break down \eqref{Future-Cost}
into $LT - 1$ terms
corresponding to
the last $LT-1$ columns
of the codeword $X$:
\begin{align} 
&\sum_{j = n_t + 1}^{2n_t} 
g_{j}(z_{\floor*{\frac{j-1}{n_t}}n_t+1}\cdots\, z_{j}) 
+ \hspace*{-1.2ex}
\sum_{j = 2n_t + 1}^{3n_t} 
g_{j}(z_{\floor*{\frac{j-1}{n_t}}n_t+1}\cdots\, z_{j}) \nonumber\\*
& + 
 \cdots + 
\sum_{j = (LT-1)n_t + 1}^{LTn_t} 
g_{j}(z_{\floor*{\frac{j-1}{n_t}}n_t+1}\cdots\, z_{j})
\label{VBT-Analogue}\text{.}
\end{align}
Minimizing each of these constitutes a standard MIMO detection
problem with an $n_t\times n_t$ channel. These
can be solved by another instance of the decoder
we are describing---perhaps a more vanilla one
to give us $LT-1$ minima. 
The heuristic for $i = 1,2,\dots,n_t$ would
be the sum of these $LT - 1$ minima. 
For $i = n_t+1,\dots,2n_t$, it would be 
the sum of the last $LT-2$ of these minima, and
so on.
This is in direct analogy to obtaining
the cost of symbol-by-symbol hard decisions
in the VBT metric decoder described
in \cite{VBT-Metric}. In fact, with $n_t = 1$
it would be precisely that but with a fading 
channel.

As previously
mentioned, we need not solve an $n_t\times n_t$ problem,
we can solve a $1 \times 1$ problem 
which would correspond
to replacing \eqref{VBT-Analogue} by the
first term in each sum and costs almost nothing. 
The SNR and the codebook
size will determine whether the computational
effort put into computing the heuristic
is made up for by a reduction of the complexity
of the main tree search.
For example, in obtaining the $n_t = 4,\,L = 1$
simulation result in the previous section,
it was found that solving the $n_t\times n_t$
problem was worthwhile and significantly reduced
the overall complexity. 

Finally,
note that we can mix and match
heuristics such as those described here
and those intended for dependence on $x_1\cdots x_i$.
Generally, if our codeword $X$ is short and wide, 
we can get sufficiently tight heuristics which do not
depend on the previous symbols, but if the codeword
is tall and narrow, it might only be worthwhile
to consider heuristics which depend on $x_1\cdots x_i$
and simply apply the approaches in
\cite{SDP-Inspired-Lower-Bounds,Eigenbound-Equivalent}.

Moreover, completely separately from this point,
we can
use the computationally less
expensive heuristics in
\cite{SDP-Inspired-Lower-Bounds,Eigenbound-Equivalent}
to lower-bound the terms of \eqref{VBT-Analogue}
rather than outright solve the minimizations.
For example, we can consider the bound in \cite{SDP-Inspired-Lower-Bounds}
termed the \textit{eigenbound} which is also independently
arrived at in \cite{Eigenbound-Equivalent}. Take some
$M \in \mathbb{C}^{n_t\times n_t}$ 
and $\mathbf{u},\mathbf{v} \in \mathbb{C}^{n_t}$
with $M$ invertible
and let $\lambda_\mathsf{min}$ be the smallest
eigenvalue of the positive definite matrix
$M^\dagger M$.
Then, one can easily show that \cite{SDP-Inspired-Lower-Bounds}
\begin{equation*}
\Vert M\mathbf{u} - \mathbf{v}\rVert_2^2
\geq 
\lambda_\mathsf{min}
\Vert \mathbf{u} - M^{-1}\mathbf{v}\rVert_2^2\text{.}
\end{equation*}
We can then perform the minimization over $\mathbf{u}\in \mathcal{A}^{n_t}$
component-wise. We can simplify this further by relaxing the
minimization to over $\mathbf{u}\in \Lambda^{n_t}$
with $\Lambda = \mathbb{G}$ or $\mathbb{E}$ and use the 
appropriate quantization function.

\subsection{Spherical Bounding}

We consider using a spherical bound stack decoder
first proposed in \cite{Spherical-Bound-Stack-Decoder}.
This decoder is essentially a best-first 
variation on the depth-first 
sphere decoder \cite{Sphere-Decoder,Sphere-Decoder-2}
which is a commonly used decoder for linear
dispersion codes.
The spherical bounding 
idea is to pick a threshold $\mathcal{T}$
and restrict the search to the $s\in\mathcal{S}$
satisfying $C(s) \leq \mathcal{T}$. If there 
are no codewords satisfying $C(s) \leq \mathcal{T}$,
increase $\mathcal{T}$ and start over. Clearly, 
if we find the minimum cost codeword among
the codewords whose cost is less than or equal
to $\mathcal{T}$, then it must be the
minimum cost codeword among all codewords
so the resulting decoder is still ML.

The modification to the basic algorithm is 
simply as follows: Rather than pushing $pt$
onto the stack for every $t \in E(p)$,
we push $pb$ onto the stack for every
\begin{equation}\label{SB-Set}
b \in 
\{t\in E(p) \mid f_{\abs{p}+1}(pt) \leq \mathcal{T}-C(p)\}\text{.}
\end{equation}
There is now a possibility that the priority queue size 
will 
decrease because something popped might not be replaced.
Nonetheless, one can simply increase $\mathcal{T}$ and start
over should the priority queue become empty. It
can again be verified that the resulting algorithm
works. Moreover, combining
this with future costing is also straightforward.

The challenge now 
is in enumerating the elements 
of the set
in \eqref{SB-Set}.
In the case of $\abs{p} < kn_t$, we have $E(p) = \mathcal{A}$
so we must find the constellation points $t \in \mathcal{A}$
for which $f_{\abs{p}+1}(pt) \leq \mathcal{T}-C(p)$.
The naive way would be to go through all of 
the constellation
points and compare to the threshold. However, this
would give us only a space complexity reduction. 
In order to obtain a time complexity reduction,
this must be done
without going through all of the constellation points every time.
We will
start by showing that this problem amounts to finding
the constellation points that lie in a circle
in the complex plane. 

Let $p = x_1\dots x_\abs{p}$,
define $h$ by
\begin{equation}\label{Channel-Coefficient}
h = 
\left(\mathbf{L}_{\ell\left(\floor*{\frac{\abs{p}}{n_t}} + 1\right)}\right)_
{\abs{p} + 1
	- \floor*{\frac{\abs{p}}{n_t}}n_t,
	\abs{p}+1-\floor*{\frac{\abs{p}}{n_t}}n_t}\text{,}
\end{equation} 
and define $u$ by
\begin{multline*}
	u =\\ y_{\abs{p}+1} - \hspace*{-1ex}
	\sum_{k = \floor*{\frac{\abs{p}}{n_t}}n_t+1}^{\abs{p}}\hspace*{-1ex}
	\left(\mathbf{L}_{\ell\left(\floor*{\frac{\abs{p}}{n_t}} + 1\right)}\right)_
	{\abs{p} + 1 - \floor*{\frac{\abs{p}}{n_t}}n_t,k-\floor*{\frac{\abs{p}}{n_t}}n_t}
	x_{k}\text{.}
\end{multline*}
One can then verify that 
\begin{equation*}
f_{\abs{p}+1}(pt)=
\abs{u-ht}^2
\end{equation*}
which leads to 
\begin{equation*}
\abs{u-ht}^2 \leq \mathcal{T} - C(p)\text{.}
\end{equation*}
This is equivalent to 
\begin{equation}\label{Circle-Constraint}
\abs{t - c} \leq r
\end{equation}
where
\begin{equation}\label{Radius}
r =  \frac{\sqrt{\mathcal{T}-C(p)}}{\abs{h}}
\end{equation}
and 
\begin{equation*}
c = \frac{u}{h}\text{.}
\end{equation*}
We seek to enumerate
the constellation points $t\in \mathcal{A}$
satisfying \eqref{Circle-Constraint}. At
this point, we interject to comment on what
happens in the case of linear dispersion codes.

\subsubsection*{Comparison to the Decoding of Linear Dispersion Codes}

When a linear dispersion code is used with 
a QAM input constellation which is a Cartesian product
of PAM constellations, 
the channel model is usually
converted into an equivalent real-valued model. 
Finding the constellation points
which satisfy \eqref{Circle-Constraint} reduces
to the trivial problem of enumerating the integers 
on an interval. For this reason, most of the literature
concerned with the decoding of linear dispersion codes
works with real-valued models. In contrast, we do 
not have that option and cannot readily apply 
the existing spherical bounding procedures.

\subsubsection*{Spherical Bounding Continued}

We will now provide procedures
for finding Gaussian and Eisenstein 
integer constellation points in a circle
without necessarily exhaustively
going through all $\abs{\mathcal{A}}$ points.
The procedures will be simple 
and applicable to arbitrary Gaussian
or Eisenstein integer constellations. 
They 
might not necessarily
be the most efficient possible procedures, but 
they will be efficient enough to realize
the task at hand which is obtaining the constellation points
satisfying \eqref{Circle-Constraint} in a small fraction
of $\abs{\mathcal{A}}$ steps (averaging over high SNRs).
Figures \ref{Gaussian-Rejection} and \ref{Eisenstein-Rejection}
provide visualizations of these procedures. 

We start by considering the problem
of finding the points in $\Lambda$ which are
inside the circle
where $\Lambda = \mathbb{G}$ or $\Lambda = \mathbb{E}$.
Denote by $R^\mathsf{circle}_\Lambda(c,r)$ the set of 
lattice points inside the circle defined by \eqref{Circle-Constraint}, i.e.,
\begin{equation*}
R^\mathsf{circle}_\Lambda(c,r)
= 
\{z \in \Lambda \mid \abs{z-c}\leq r\}\text{.}
\end{equation*}

\begin{figure}[t]
	\centering
	\includegraphics[width=0.6\columnwidth]{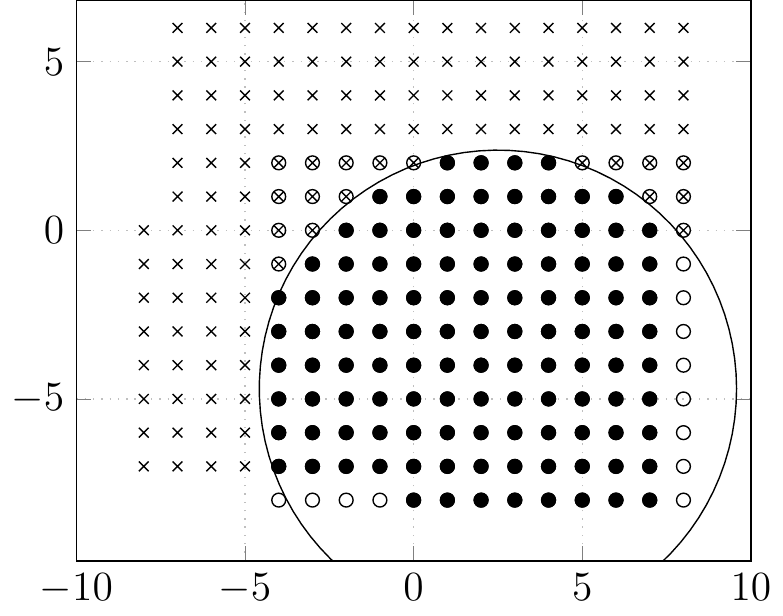}
	\caption{Finding Gaussian integer constellation points in a circle; the hollow
		circles are the Gaussian integers that are visited}\label{Gaussian-Rejection}
	\vspace{1cm}
	\includegraphics[width=0.6\columnwidth]{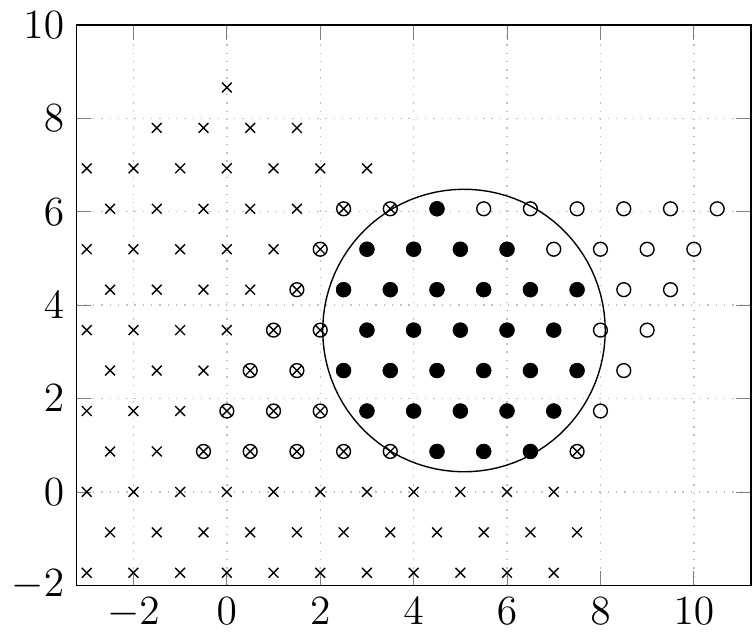}
	\caption{Finding Eisenstein integer constellation points in a circle;
		the hollow circles are the Eisenstein integers that are visited}\label{Eisenstein-Rejection}
\end{figure}

We start with the case of $\Lambda = \mathbb{G}$.
Rectangular subsets of $\mathbb{G}$ can be naturally enumerated.
We can start by finding the smallest square which contains the
circle and then \textit{shrinking} it so that the edges
are aligned with $\mathbb{G}$. Let 
\begin{align*}
i_\mathsf{min} &= \ceil*{\Re{c} - r}\\
i_\mathsf{max} &= \floor*{\Re{c} + r}\\
j_\mathsf{min} &= \ceil*{\Im{c} - r}\\
j_\mathsf{max} &= \floor*{\Im{c} + r}
\end{align*}
and denote by $R_\mathbb{G}^\mathsf{rectangle}(c,r)$ the set
\begin{multline*}
R_\mathbb{G}^\mathsf{rectangle}(c,r) = \\
\{i + \imath j \mid i,j \in \mathbb{Z},\, 
i_\mathsf{min} \leq i \leq i_\mathsf{max},\,
j_\mathsf{min} \leq j \leq j_\mathsf{max}\}\text{.}
\end{multline*}
One can verify that $R_\mathbb{G}^\mathsf{circle}(c,r)
\subseteq R_\mathbb{G}^\mathsf{rectangle}(c,r)$. 
We can then enumerate the elements of 
$R_\mathbb{G}^\mathsf{circle}(c,r)$ by enumerating
the elements of $R_\mathbb{G}^\mathsf{rectangle}(c,r)$
and rejecting the $z \in \mathbb{G}$ for which
$\abs{z-c} > r$.

In the case of $\Lambda = \mathbb{E}$, parallelogram regions
are naturally enumerated. 
We can similarly find a parallelogram which contains
the circle and align it with $\mathbb{E}$ in a
manner that guarantees that no points inside the circle are missed. 
Denote by $R_\mathbb{\mathbb{E}}^\mathsf{parallelogram}(c,r)$ the set
\begin{multline*}
R_\mathbb{E}^\mathsf{parallelogram}(c,r) = \\
\left\{i_0 + j_0 \frac{\imath\sqrt{3}}{2} + i + j(\omega+1) \mmiddle{|}
\begin{array}{c}
i,j \in \mathbb{Z} \\ 
0 \leq i \leq i_\mathsf{max}\\
0 \leq j \leq j_\mathsf{max}
\end{array}
\right\}
\end{multline*}
where $i_0$, $j_0$, $i_\mathsf{max}$, and $j_\mathsf{max}$ are
as follows:
\begin{align*}
j_0 &= \ceil*{\frac{\Im{c} - r}{\frac{\sqrt{3}}{2}}}\\
i_0 &= 
\begin{cases}
\floor*{\Re{c} - r\sqrt{3}} & \text{if $j_0$ is even}\\
\floor*{\Re{c} - r\sqrt{3} - 0.5}+0.5 & \text{if $j_0$ is odd}
\end{cases}\\
j_\mathsf{max} &= 
\floor*{\frac{\Im{c}+r}{\frac{\sqrt{3}}{2}}} - j_0\\
i_\mathsf{max}
&= \\
&
\hspace*{-3ex}
\begin{cases}
\ceil*{\Re{c} + r\sqrt{3}} -
\Re{\xi}& \text{if $\frac{\Im{\xi}}{\frac{\sqrt{3}}{2}}$ is even}\\
\ceil*{\Re{c} + r\sqrt{3} - 0.5} + 0.5 -
\Re{\xi}& \text{if $\frac{\Im{\xi}}{\frac{\sqrt{3}}{2}}$ is odd}
\end{cases}\\
&\quad\text{where $\xi = i_0 + j_0 \frac{\imath\sqrt{3}}{2} + j_\mathsf{max}(\omega+1)$.}
\end{align*}
It can be verified
that $R_\mathbb{\mathbb{E}}^\mathsf{circle}(c,r)
\subseteq R_\mathbb{\mathbb{E}}^\mathsf{parallelogram}(c,r)$. 
As before, we can enumerate the elements of 
$R_\mathbb{G}^\mathsf{circle}(c,r)$ by enumerating
the elements of $R_\mathbb{E}^\mathsf{parallelogram}(c,r)$
and rejecting the $z \in \mathbb{E}$ for which
$\abs{z-c} > r$.

The next task is to restrict the
points in $R^\mathsf{circle}_\Lambda(c,r)$
to the 
points that belong to our constellation $\mathcal{A} = \mathcal{A}_{\Pi\Lambda}$.
Obviously, it would defeat the purpose of the algorithm if we 
were to have to
compare them to each point in $\mathcal{A}$. Fortunately, 
Proposition \ref{Set-Membership-Remark} gives us a constant-time deterministic
set membership test for checking whether some $z \in \Lambda$ belongs
to  $\mathcal{A}_{\Pi\Lambda}$. 
We can find $\mathcal{A}_{\Pi\Lambda} \cap R^\mathsf{circle}_\Lambda(c,r)$
by simply rejecting the $z \in \Lambda$ for which $Q_\Lambda(z/\Pi) \neq 0$
during the enumeration.

One issue remains. When the
SNR is low, the radius $r$
is frequently large enough
for the circle to cover 
an area larger than the
area covered by the constellation
aside from the possibility of the circle 
being entirely outside of the
constellation.
Consequently, at low SNRs, 
this algorithm would lead to 
a higher average complexity
than that of exhaustively going through all of 
the points in $\mathcal{A}$.
Therefore, we must add a further step which is the 
enforcement of the boundaries of the constellation
so that the worst-case complexity is comparable to
enumeration by exhaustive search.
We propose that this be done in simplest possible way which is by
enforcing a rectangular boundary around the constellation.
This is achieved by
modifying the parameters of the rectangle and parallelogram 
regions to
prevent them from extending further outside 
this rectangular
boundary enclosing the constellation than is
necessary. 
In the case of 
rectangle regions, this is trivial. 
In the case of parallelogram regions,
it is a matter of elementary
geometry which we will omit to detail.
Further refinement of the procedure offers diminishing
returns
because if the SNR is low enough
for the circles to consistently stretch
far beyond the constellation, there would 
not be any benefit to spherical bounding to
begin with.

For completeness, we specify an
initial choice of threshold. We start with
\begin{equation*}
\mathcal{T}  = \alpha \cdot \sum_{\ell=1}^L \Expect\mleft[\lVert W_\ell \rVert_\mathsf{F}^2\mright]
\end{equation*}
where the initial value of $\alpha$ is a manually tuned parameter.
Whenever the search fails, we add $\delta$ to $\alpha$
where $\delta$ is also a manually tuned parameter. 
Justification for such a form of threshold can be found in \cite{STWS-Ch15}.

\subsection{Permutations}

\subsubsection*{Spatial Permutations}

We have some freedom in permuting the rows of
the sub-codewords of $X$ and hence
the order in which the symbols within
a column are detected. 
In particular, 
let $P_1,P_2,\dots,P_L$ be some $n_t\times n_t$ permutation
matrices. We then have
$\rho H_\ell X_\ell = (\rho H_\ell P_\ell) (P_\ell^{-1} X_\ell)$.
We can take $\rho H_\ell P_\ell$ as our channel matrix
and perform the QL decomposition $\rho H_\ell P_\ell = Q_\ell \mathbf{L}_\ell$
for $\ell=1,2,\dots,L$.  We refer to such permutations 
as \textit{spatial permutations.} 
For the purposes of decoding our codes, 
the modification to the algorithm
is straightforward: We can perform the decoding
as usual but must undo the permutation when 
computing the parity symbols as well as at the end of the decoding procedure.  

The topic of spatial permutations is studied extensively 
in \cite{ML-CLPS,Tree-Search-Decoding}. The goal is usually
to choose a permutation which leads to the matrices 
$\mathbf{L}_1,\mathbf{L}_2,\dots,\mathbf{L}_L$
having properties which result in a more efficient tree search. For example,
from \eqref{Channel-Coefficient} and \eqref{Radius}, we see that having
large magnitude coefficients on the diagonals of these matrices
leads to smaller radii in the spherical bounding procedure. 
We will consider the use of a 
simple heuristic proposed in \cite{ML-CLPS} which
is to sort the columns of the channel matrices in
descending order
of $2$-norm.

\subsubsection*{Temporal Permutations}

We have the freedom to permute the columns of $X$ arbitrarily
and hence detect the columns of the codeword in any order.
This is enabled by the fact that the underlying code is MDS. 
Once a permutation has been chosen, the modification to the algorithm
is straightforward: 
\begin{itemize}
	\item Permute the columns of $Y$ accordingly as well as their
	associations with the different channel matrices (via $\ell(j)$). 
	\item Permute 
	the columns of the generator matrix used for generating
	the code tree accordingly and systematize it. 
	\item Decode as usual and apply the inverse permutation
	to the columns of the resulting codeword.
\end{itemize}

It remains to determine how to choose a temporal permutation.
We propose that we use the permutation which puts
the columns of $Y$ into descending 
order of $2$-norm as is consistent with the principle
of detecting the most reliable parts of the codeword first.

\subsubsection*{Comparison to the Decoding of Linear Dispersion Codes}

In the case of linear dispersion codes, 
the effective codeword
is an $n_tLT\times 1$ vector and there is no option for temporal
permutations, we can only do spatial permutations. 
Moreover, since we
have one large $n_t LT \times n_t LT$ effective channel matrix
rather than several $n_t \times n_t$ matrices, the number
of possible spatial permutations is larger and the choice 
among them is important as is illustrated in 
\cite{ML-CLPS,Tree-Search-Decoding,Advanced-Spatial}.

It is worth noting that 
\cite{Advanced-Spatial} argues
that \textit{spatial} permutations
should be chosen as a function of both the
channel matrix \textit{and} the received signal vector
and provides a relatively complex method for doing so.
On the other hand, for the proposed codes, the block diagonal
structure allows us to use the received signal
to influence the detection order trivially via
the proposed temporal permutation.

\subsection*{The General Case}

We have developed an algorithm and a variety
of improvements that are readily applicable to the decoding
of SRB codes in the case of $n_r = n_t$. 
The case of $n_r > n_t$ is handled in the same way 
with the QL decomposition for rectangular matrices applied.
This results in an immaterially different cost function.
The reader is referred to \cite{ML-CLPS,Tree-Search-Decoding}
for details. 

On the other hand, the case of $n_r < n_t$ 
involves some challenges. 
A sphere decoder
for the case of $n_r < n_t$ is proposed in \cite{Asymmetric-Sphere-Decoder}
and the reader is also referred to \cite{ML-CLPS} for some comments on this
matter.
In this case,
the first term in the causal cost function
depends on $n_r-n_t$ symbols rather than one. 
To use a stack decoder, we must begin by pushing all prefixes
of length $n_r-n_t$ onto the priority queue incurring exponential
complexity in $n_r - n_t$. Depth-first decoders
might be more natural in this case.
Nonetheless, the challenge of $n_r > n_t$
is not unique to the proposed codes 
and the techniques 
we provided for spherical bounding can be used
to develop depth-first strategies.

Finally, we have the topic of SRA codes. In the 
case where there is no sub-codeword which contains 
both parity and information rows, all of the same decoding
techniques apply readily. When this is not the case,
it is not immediately apparent how 
to appropriately enumerate the codewords,
(i.e., compute the $E$ function \eqref{E-Function}).
 A standard but not necessarily very 
efficient way of dealing with this
is by removing the codeword constraints and checking
at the end of the decoding procedure if the answer
is a valid codeword. If the answer is not a valid
codeword, the search is continued.
Checking whether the codeword is valid can be done 
efficiently
by representing the code as the null space of a 
\textit{parity-check matrix} rather than the 
row space of a generator matrix.

\subsection{Decoding Complexity Simulations}\label{Decoding-Complexity-Simulations}

In this subsection, we examine the relative complexities of the proposed
decoding strategies as a function of SNR in simulation.  We consider the
case of $n_r = n_t = T = L = 2$ and a $d = 3$ SRB code with a $271$-Eis.\
constellation achieving a bpcu rate of $8.08$.  The CER versus SNR curve
for this code was provided in Section \ref{2-Block-Sim-Sec} in
Fig.~\ref{2-Block-CERs-Big-Constellations}.

The spherical bounding parameters are $\alpha = 1.75$ and $\delta = 0.25$.
Moreover, the future costing is done by using the eigenbound to lower-bound
the cost of the column-by-column MIMO detection problem where the
component-wise minimization is performed with constellation constraints.

Fig.~\ref{Time} plots the average number of code tree nodes visited as a
function of SNR which is a proxy for the time complexity. In counting the
number of nodes visited, we count all of the Eisenstein integers in the
parallelogram occurring in the spherical bounding procedure as visited
nodes even though some of them are not constellation points and are,
strictly speaking, not part of the code tree. Nonetheless, they effectively
act like virtual code tree nodes that are visited and rejected.  Thus, the
inefficiency of relaxing the circle constraint to a parallelogram
constraint is accounted for. Moreover, the visit count is not reset if the
tree search starts over due to the bounding threshold being too small.

\begin{figure}[t]
	\centering
	\includegraphics[width=\columnwidth]{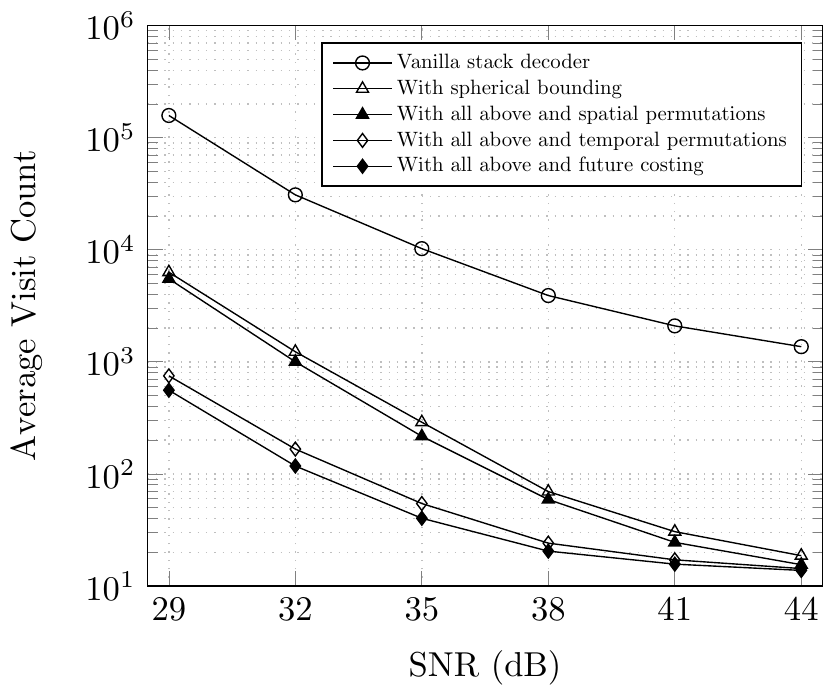}
	\caption{Average number of code tree nodes visited versus SNR}\label{Time}
\end{figure}
\begin{figure}[t]
	\centering
	\includegraphics[width=\columnwidth]{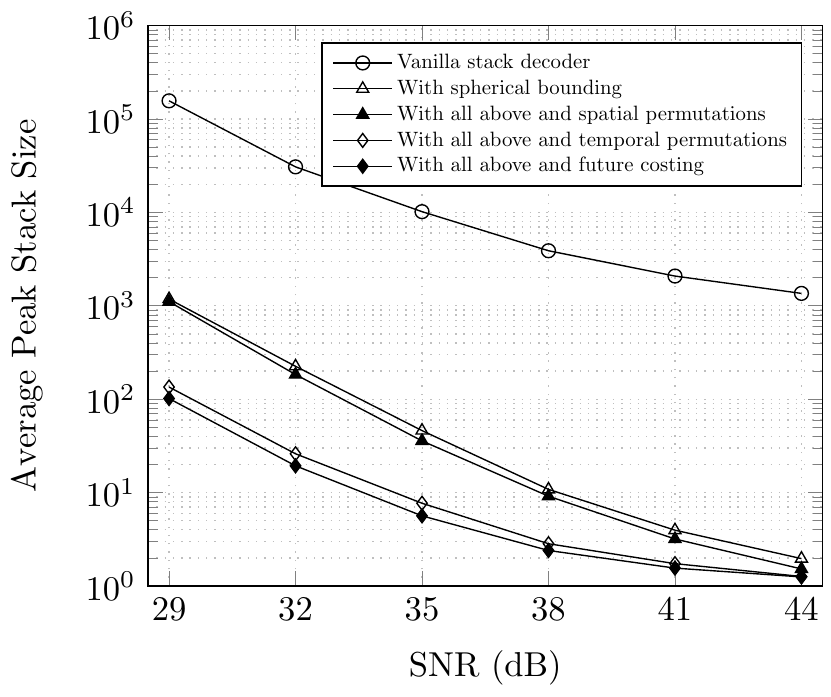}
	\caption{Average peak stack size versus SNR}\label{Space}
\end{figure}

Fig.~\ref{Space} plots the average peak stack (or priority queue) size as a
function of SNR which is a proxy for the space complexity. The
\textit{peak} is referring to the fact that the tree search might be
repeated several times due to too small a bounding threshold.

The results are self-explanatory so our comments will be brief.  Firstly,
note that the simulations start at an SNR which is already quite
high---corresponding to a CER of less than $10^{-4}$. The reason is that
the vanilla stack decoder is infeasible at lower SNRs.  One can expect
larger gaps between the successive improvements at lower SNRs. Moreover,
this is only one example and the relative significance of the various
decoder improvements depends on the channel, code, and decoder parameters.
For example, complex future costing heuristics were crucial to the decoding
of the $n_t = 4,L = 1$ code simulated in Section \ref{1-Block-Sim-Sec}
whereas here, they appear to be insignificant because a simple loose bound
is used.

\section{Concluding Remarks}\label{Conclusion-Sec}

In this paper, we studied space--time codes based on rank and sum-rank
metric codes from both the perspectives of of coding-theoretic optimality
properties and empirical error performance.  We demonstrated that such
codes can have competitive performance relative to codes designed for other
criteria aside from utilizing significantly smaller constellations.
Moreover, we demonstrated that such codes can be feasibly decoded with new
challenges and opportunities arising from the decoding problem.  Apart from
the obvious ways in which these investigations can be extended, we suggest
the following broad questions for future work:

1) Suboptimal decoding was not considered in this paper and the decoding
complexity remains relatively high.  It could be interesting to consider
whether the lattice reduction approach of Puchinger et al. in \cite{Sven}
could be usefully combined with the proposed sequential decoding
strategies.

2) We have demonstrated that the proposed codes can sometimes outperform
full diversity codes even at higher SNRs. Yet, the rate--diversity
optimality criterion for which the codes were constructed is not
necessarily of any relevance to this error performance.  In light of this,
it could be interesting to consider other ways of designing good codes that
are not full diversity.

\IEEEtriggeratref{68}


\end{document}